\newtheorem{prop}{Proposition}[section]
\newtheorem{defi}{Definition}[section]
\newtheorem{lemm}{Lemma}[section]
\newtheorem{thm}{Theorem}[section]
\newtheorem{coro}{Corollary}[section]
\newtheorem{ex}{Example}[section]
\begin{document}
\date{}
%
%
\title{\textbf{The extension of distributions on manifolds, a microlocal approach.}}
\author{
Nguyen Viet Dang\\
Laboratoire Paul Painlev\'e (U.M.R. CNRS 8524)\\
UFR de Math\'ematiques\\
Universit\'e de Lille 1\\
59 655 Villeneuve d'Ascq C\'edex France.
}%
\maketitle

\begin{abstract}
Let $M$ be a smooth manifold, $I\subset M$
a closed embedded submanifold of $M$ and
$U$ an open subset of $M$.
In this paper, we find conditions
using a geometric notion of scaling
for $t\in \mathcal{D}^\prime(U\setminus I)$
to admit an extension in $\mathcal{D}^\prime(U)$.
We give microlocal conditions
on $t$ which allow
to control the wave front set
of the extension generalizing a previous result of
Brunetti--Fredenhagen.
Furthermore, we show that there is a subspace
of extendible 
distributions for which 
the wave front of the extension
is minimal which has 
applications for the
renormalization of quantum field theory
on curved spacetimes.
\end{abstract}


\tableofcontents

\section*{Introduction.}

From the early days of quantum field theory, it has been known~\cite{Bogoliubov,Dyson,Dyson-49} that
QFT calculations are plagued
with infinities arising
from the integration of divergent 
Feynman amplitudes in
momentum space.
The recipe devised 
to subtract these divergences 
is called the renormalization 
algorithm~\cite{Collins}. 
When one generalizes 
QFT to curved Lorentzian spacetimes~\cite{Birrel,Fulling,Wald}, a simple
observation is that
both the conventional 
axiomatic approach to quantum field theory 
following Wightman's axioms~\cite{StWi:pct} or the usual textbook 
approach based on the representation of Feynman amplitudes
in momentum space, completely break down 
for the
obvious reasons that there is no Fourier transform
on curved spacetime and the spacetime is no longer Lorentz invariant.\\
This motivates to look at the renormalization problem
of Feynman amplitudes from the point of view
of the position space and 
this problem was solved 
in the seminal work
of Brunetti and Fredenhagen 
\cite{Brunetti2}.  
The starting point of
\cite{Brunetti2}
was to follow 
one of the very 
first
approach
to QFT 
due to
Stueckelberg and his collaborators
(D. Rivier, T.A. Green, A. Petermann), 
which is based 
on the concept of causality.

The ideas of Stueckelberg 
were first 
understood
and developed 
by Bogoliubov and his school (\cite{Bogoliubov}) 
and then 
by Epstein-Glaser (\cite{Epstein}, \cite{EGS}) 
(on flat spacetime).
In these approaches, one works
directly
in position space 
and the renormalization 
is formulated as a problem
of extension of distributions.
Somehow, this point of view based 
on the S-matrix formulation of QFT 
was almost
completely forgotten
by people working on QFT 
at the exception of
some works~\cite{Kay2,Kay-97,Kay1,Popineau,Stora02}.
However, in 1996, 
a student of Wightman,
M.\,Radzikowsky 
revived the subject. 
In his thesis~\cite{Junker,Junker-96,Radzikowski-92-PhD}, 
he used microlocal analysis
for the first time 
in this context
and introduced the concept
of 
\emph{microlocal spectrum condition},
a condition on the wave front set
of the distributional
two-point function
which represents the quantum states 
of positive energy 
(named Hadamard states)
on curved spacetimes~\cite{gerard2014constructionCauchy,gerard2014construction,gerard2014hadamard,wrochna2013singularities}.
In 2000, in a breakthrough paper, 
Brunetti and Fredenhagen
were able to generalize 
the Epstein-Glaser theory
on curved spacetimes 
by relying on the fundamental contribution of 
Radzikowski. 
These results were further extended
in some exciting recent works~\cite{derezinski2013mathematics,HollandsOPE,Hollands, Hollands2, Hollands4,Hollands5,nikolov2014renormalization} where
the formalism of algebraic QFT now includes the treatment of gauge theories like Yang-Mills fields~\cite{Hollands-08,wrochna2014classical},
and also incorporates the Batalin Vilkovisky formalism \cite{Fredenhagen-11,Fredenhagen-13} in order
to perturbatively quantize gravitation~\cite{Brunetti-13,Brunetti-13-QG,Rejzner-PhD}.

All the above works rely on a formalism for renormalization theory
which consists in a recursive procedure of extension of products of distributions
representing Feynman amplitudes on configuration space.
More precisely, if we denote by $\Delta_F$ the Feynman propagator
which is a fundamental solution of the Klein Gordon operator
$\square+m^2$ with a specific wave front set, then a Feynman amplitude
will be a product of the form $\prod_{1<i\leqslant j<n}\Delta_F^{n_{ij}}(x_i,x_j) $, this product
of distribution is well defined
on the configuration space $M^n$ minus all diagonals $x_i=x_j$ since all the wave front sets
are transverse and renormalization consists in extending the above product
on the whole configuration space $M^n$ in a way which is compatible with the physical
axiom of causality.
The central technical ingredient of the recursive procedure is to control
the wave front set and the \emph{microlocal scaling degree} of the renormalized products 
in such a way that
we can construct all Feynman amplitudes on all
configuration spaces $M^n$ by induction on $n\in\mathbb{N}$.
In the present paper, which is an outgrowth of \cite{Dangthese},
our goal is to build
some scale spaces of distributions on manifolds, 
study their intrinsic property then  
discuss 
the operations of extension and
renormalization of products relying
on recent works 
on the functional analytic properties of the space
$\mathcal{D}^\prime_\Gamma$ of
distributions with given wave front set
\cite{Viet-wf2,dabrowski2013functional,dabrowski2014functional}. 
An interesting perspective for future 
investigations is to study how our
renormalization preserves or breaks symmetries
of distributions
in the spirit of \cite{bahns2012shell}.
\paragraph*{Acknowledgements.}
We would like to thank Christian Brouder and Fr\'ed\'eric H\'elein for
their collaboration, many enlightening discussions and their support 
which results in the present paper. 
We also thank Yoann Dabrowski for his many advices concerning
the functional analytic aspects of the space $\mathcal{D}^\prime_\Gamma$
and his deep influence on the way the author thinks about the extension problem.
Finally I would like to thank the Labex CEMPI and the Laboratoire Paul Painlev\'e for
excellent working conditions.

The following section is 
a detailed overview of our
results and can
be read independently from
the rest.
\subsection*{Main results of our paper.}
In our paper, we investigate 
the following
problem which has simple formulation:
we are given a manifold $M$ and a closed submanifold
$I\subset M$. We have a distribution $t$
defined on $M\setminus I$ and we would like to
find under what reasonable conditions
on $t$, 
\begin{enumerate}
\item we can construct an extension
$\overline{t}$ of $t$ defined on the
whole manifold $M$,
\item we can control the wave front set
of the extension.
\end{enumerate}

The first problem has been addressed
in greater generality in \cite{dang2014extension} 
where
we found necessary and sufficient
conditions for a distribution $t\in\mathcal{D}^\prime(M\setminus I)$
where $I$ 
is a \textbf{closed subset} of $M$,
to be extendible.
However, our method which uses distance functions,
is only adapted to Euclidean QFT. Actually, for 
QFT on curved Lorentzian spacetimes, 
it is crucial to find 
estimates
on the wave front set of the
extension. This is what we do in the present paper
which is focused entirely 
on the microlocal approach.

In general, the extension
problem
has no positive answer for 
a generic
distribution $t$
in $\mathcal{D}^\prime(M\setminus I)$
unless $t$
has moderate growth
when we approach the singular 
submanifold $I$.
In the work of Yves Meyer \cite{Meyer-98} where
the manifold $M$ is flat space $\mathbb{R}^n$
and $I=\{0\}$, the distributions
having this property
are called \emph{weakly homogeneous
distributions}.
A first difficulty
is to extend the definition
of Meyer to the case of manifolds.
In order to generalize the definition of scaling 
to measure the growth of distributions,
we introduce a 
class of 
vector fields
called Euler vector 
fields associated 
to the submanifold
$I$:
\begin{defi}\label{Eulerdef}
Let $M$ be a smooth manifold, $I$ a submanifold of $M$ and $U$ some open subset of $M$.
Set
$\mathcal{I}(U)$ to
be the ideal
of functions vanishing on $I$
and $\mathcal{I}^k(U)$ its $k$-th power.
A vector field $\rho$ locally defined on $U$ 
is called Euler if 
\begin{equation}
\forall f\in\mathcal{I}(U),  \rho f-f\in\mathcal{I}^2(U). 
\end{equation} 
\end{defi}
The above 
definition
is obviously intrinsic.
In particular, when $M=\mathbb{R}^d,I=\{0\}$
then $\rho=h^j\frac{d}{dh^j}$ is Euler.

In section \ref{scalingmfdsection}, the main properties 
of 
Euler vector fields are studied. 
They
satisfy a property
of diffeomorphism
invariance 
\begin{thm}\label{eulerdiffinvariance}
Let $M,M^\prime$ be two smooth manifolds,
$I\subset M,I^\prime\subset M^\prime$
smooth embedded submanifolds and $\Phi:=U\longmapsto U^\prime$ 
a local diffeomorphism
such that $\Phi(I\cap U)= I^\prime\cap U^\prime$.
Then
for any
Euler vector field
$\rho$ defined on $U$,
the pushforward
$\Phi_*\rho$
is Euler.
\end{thm}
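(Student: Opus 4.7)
The plan is to unpack the definition of \emph{Euler} directly for $\Phi_*\rho$ and reduce it to the Euler property of $\rho$ by pullback along $\Phi$. The only algebraic input needed is that, because $\Phi$ is a diffeomorphism with $\Phi(I\cap U)=I'\cap U'$, the pullback $\Phi^*:C^\infty(U')\longrightarrow C^\infty(U)$ is a unital $\mathbb{R}$-algebra isomorphism which sends the vanishing ideal $\mathcal{I}(U')$ of $I'\cap U'$ bijectively onto the vanishing ideal $\mathcal{I}(U)$ of $I\cap U$. Being a ring isomorphism, it then also sends $\mathcal{I}^k(U')$ bijectively onto $\mathcal{I}^k(U)$ for every $k$, in particular for $k=2$.

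First I would fix an arbitrary $g\in\mathcal{I}(U')$ and set $f:=\Phi^*g=g\circ\Phi$. The preliminary observation gives $f\in\mathcal{I}(U)$. Next I would invoke the naturality of pushforward of vector fields under diffeomorphisms, namely
\begin{equation}
\bigl((\Phi_*\rho)g\bigr)\circ\Phi \;=\; \rho(g\circ\Phi) \;=\; \rho f.
\end{equation}
Subtracting $g\circ\Phi=f$ from both sides yields
\begin{equation}
\Phi^*\bigl((\Phi_*\rho)g-g\bigr) \;=\; \rho f - f,
\end{equation}
and the right-hand side lies in $\mathcal{I}^2(U)$ by the Euler hypothesis on $\rho$.

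To finish, I would apply the inverse pullback $(\Phi^{-1})^*$, which maps $\mathcal{I}^2(U)$ bijectively onto $\mathcal{I}^2(U')$ by the algebra isomorphism statement recalled above. This gives $(\Phi_*\rho)g-g\in\mathcal{I}^2(U')$, and since $g\in\mathcal{I}(U')$ was arbitrary this is exactly the Euler property for $\Phi_*\rho$ on $U'$.

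There is no genuine obstacle in this argument: once one verifies the stability of the ideals $\mathcal{I}$ and $\mathcal{I}^2$ under $\Phi^*$, the rest is a one-line computation using the naturality identity $(\Phi_*\rho)g\circ\Phi=\rho(\Phi^*g)$. The only point requiring a little care is the stability of $\mathcal{I}^2$, but this is immediate from the fact that $\Phi^*$ is a ring homomorphism, so that a finite sum $\sum_i a_ib_i$ with $a_i,b_i\in\mathcal{I}(U')$ pulls back to $\sum_i\Phi^*(a_i)\Phi^*(b_i)\in\mathcal{I}^2(U)$, and symmetrically under $(\Phi^{-1})^*$.
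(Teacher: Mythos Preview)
Your proof is correct and follows essentially the same approach as the paper: both arguments reduce the Euler condition for $\Phi_*\rho$ to that of $\rho$ via the naturality identity $(\Phi_*\rho)g = (\Phi^{-1})^*\bigl(\rho(\Phi^*g)\bigr)$ together with the fact that $\Phi^*$ and $(\Phi^{-1})^*$ carry the ideals $\mathcal{I}$ and $\mathcal{I}^2$ bijectively to one another. If anything, your version is slightly more explicit in justifying the stability of $\mathcal{I}^2$ under pullback by a ring-isomorphism argument, whereas the paper simply asserts $\Phi^{-1*}\mathcal{I}^2=\mathcal{I}^2$.
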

And that
the flow generated by Euler
vector fields
are always locally
conjugate.
\begin{thm}\label{eulerlocconjugate}
Let $\rho_1,\rho_2$ be two Euler vector fields
defined in some neighborhood of $p\in I$.
Then there is some germ of diffeomorphism 
$\Phi$ at $p$ such that
$\rho_1=  \Phi_*\rho_2 $.
\end{thm}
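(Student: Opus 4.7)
By transitivity of smooth conjugation, it suffices to show that every Euler vector field $\rho$ defined near $p\in I$ is locally conjugate to a single fixed model. Choose adapted coordinates $(x^{1},\dots,x^{k},y^{1},\dots,y^{n-k})$ centered at $p$ with $I=\{x=0\}$; the standard Euler field $\rho_{0}:=\sum_{i}x^{i}\partial_{x^{i}}$ is itself Euler, as one checks from Hadamard's lemma. Applying the defining relation $\rho f-f\in\mathcal{I}^{2}$ to $f=x^{i}$ and to $f=x^{i}g(y)$ shows that, in these coordinates,
\[
\rho=\rho_{0}+Y,\qquad Y=\sum_{i}A^{i}\partial_{x^{i}}+\sum_{l}B^{l}\partial_{y^{l}},\qquad A^{i}\in\mathcal{I}^{2},\ B^{l}\in\mathcal{I}.
\]
In particular the linearization of $\rho$ along $I$ coincides with that of $\rho_{0}$.

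The method will be Moser's path trick. Interpolate $\rho_{t}:=\rho_{0}+tY$ for $t\in[0,1]$; each $\rho_{t}$ is Euler since the defining condition is affine in the vector field. I would look for a time-dependent vector field $X_{t}$, jointly smooth in $(t,x,y)$ and vanishing on $I$, whose flow $\Phi_{t}$ starting from the identity satisfies $\Phi_{t}^{*}\rho_{t}=\rho_{0}$. Differentiating in $t$ yields the homological equation
\[
[\rho_{t},X_{t}]=Y,
\]
and the time-one map $\Phi=\Phi_{1}$ will then conjugate $\rho_{0}$ to $\rho$.

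To solve the homological equation I would use orbital integration along the flow of $\rho_{t}$. Since the normal linearization of every $\rho_{t}$ has unit positive eigenvalue, the flow $\phi_{s}^{\rho_{t}}$ contracts a small chart around $p$ onto $I$ at exponential rate $e^{s}$ as $s\to-\infty$. One checks that the ansatz
\[
X_{t}\;:=\;\int_{-\infty}^{0}\bigl(\phi_{s}^{\rho_{t}}\bigr)^{*}Y\,ds
\]
formally satisfies $[\rho_{t},X_{t}]=Y$, by the identity $\tfrac{d}{du}(\phi_{u}^{\rho_{t}})^{*}X_{t}=(\phi_{u}^{\rho_{t}})^{*}\mathcal{L}_{\rho_{t}}X_{t}$ and a change of variables $r=u+s$. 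Convergence of the integral and smoothness of $X_{t}$ follow from the filtration structure of $Y$: the coefficient $A^{i}\in\mathcal{I}^{2}$ evaluated along $\phi_{s}^{\rho_{t}}$ scales like $e^{2s}$, while the pullback of $\partial_{x^{i}}$ scales like $e^{-s}$, giving an integrable $e^{s}$ overall; similarly the $\partial_{y^{l}}$-piece combines a coefficient of size $e^{s}$ with a pullback of order one. Since $X_{t}\in\mathcal{I}^{2}\cdot\partial_{x}+\mathcal{I}\cdot\partial_{y}$, it automatically vanishes on $I$.

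The main obstacle is uniform control of the orbital integral on a common neighborhood of $p$ for every $t\in[0,1]$: one must shrink the chart \emph{a priori} so that the flows $\phi_{s}^{\rho_{t}}$ all remain in a fixed compact set for $s\leq 0$, and establish a uniform contraction estimate of the form $|\phi_{s}^{\rho_{t}}(x,y)|_{x}\lesssim e^{s}|x|$ together with smooth dependence on $t$. These are quantitative refinements of the stable-manifold picture at the normally expanding set $I$. Once $X_{t}$ is constructed smoothly in $(t,x,y)$, standard ODE theory integrates it into the flow $\Phi_{t}$, and $\Phi_{1}$ furnishes the required germ of diffeomorphism conjugating $\rho_{2}$ to $\rho_{1}$.
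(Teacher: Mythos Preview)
Your approach is correct but follows a genuinely different route from the paper. You use a Moser-type path argument: interpolate $\rho_t=\rho_0+tY$, solve the homological equation $[\rho_t,X_t]=Y$ by orbital integration along the contracting flow of $\rho_t$, and integrate $X_t$ to obtain the conjugating diffeomorphism. This is the standard deformation technique familiar from Sternberg-type linearization, and your identification of the main technical issue (uniform-in-$t$ contraction estimates on a fixed neighborhood) is accurate and surmountable.

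The paper instead works directly at the level of the scaling flows and avoids the homological equation entirely. In adapted coordinates with model field $\rho=h^j\partial_{h^j}$ and $S(\lambda)=e^{\log\lambda\,\rho}$, it forms $\Phi_a(\lambda)=S_a^{-1}(\lambda)\circ S(\lambda)$ for $a=1,2$; a short ODE argument (Proposition~\ref{propositionvariablefamily}) shows this family extends smoothly to $\lambda=0$. The algebraic identity $\Phi_a(\lambda)\circ S(\mu)=S_a(\mu)\circ\Phi_a(\lambda\mu)$ then yields, upon letting $\lambda\to 0$, that $\Psi_a:=\Phi_a(0)$ intertwines $S$ and $S_a$, and hence $\Psi=\Psi_1\circ\Psi_2^{-1}$ conjugates $\rho_2$ to $\rho_1$. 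The paper's argument is shorter and more elementary---no Lie-bracket calculus, no orbital integral, and the only analysis is the smooth dependence of the ODE solution $\Phi_a(\lambda)$ at $\lambda=0$---but it is tailored to the specific semigroup structure of Euler scalings. Your argument is more general-purpose and would adapt to broader classes of normally hyperbolic vector fields, at the cost of the contraction estimates you flag.
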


In the sequel, once we are
given an Euler
vector field
$\rho$, let
$(e^{t\rho})_{t}$ be
the
one parameter
group of diffeomorphisms
generated
by $\rho$ then
we
will be interested by the 
one parameter group of scaling
flows
$(e^{\log\lambda\rho})_{\lambda\in(0,1]}$ 
and
the open subsets
$U$ which are 
stable by the flow 
$(e^{\log\lambda\rho})_{\lambda\in(0,1]}$
are called \emph{$\rho$
convex}.

For every
manifold $M$ and $I\subset M$
a closed embedded 
submanifold,
we construct in section \ref{Esiuconstructionsection} 
a collection 
of spaces
$(E_{s,I}(U))_U$, indexed by open subsets of $M$, of
\emph{weakly homogeneous
distributions
of degree} $s$ 
where $E_{s,I}(U)\subset \mathcal{D}^\prime(U)$,
with the following properties:
\begin{enumerate}

\item $E_{s,I}$ satisfies a restriction
property, if $V\subset U$ then
the restriction of $E_{s,I}(U)$ on $V$
is $E_{s,I}(V)$ and satisfies the following gluing property,
if $\cup_i V_i$ is an open cover of $U$ s.t. 
$\cup_i int\left(\overline{V_i}\right)$ is a neighborhood of $U$, then
for $t\in\mathcal{D}^\prime(\cup_i V_i)$,
$t\in E_{s,I}(V_i),\forall i\implies t\in E_{s,I}(U)$.

\item $E_{s,I}$ has the
important property of diffeomorphism
invariance:
\begin{thm}\label{Esdiffinvariance}
Let $M,M^\prime$ be two smooth manifolds,
$I\subset M,I^\prime\subset M^\prime$
smooth embedded submanifolds and $\Phi:=U\longmapsto U^\prime$ 
a local diffeomorphism
such that $\Phi(I\cap U)= I^\prime\cap U^\prime$.
Then $\Phi^*E_{s,I}(U^\prime)=E_{s,I}(U)$.
\end{thm}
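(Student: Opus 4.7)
The strategy reduces diffeomorphism invariance to two facts: the naturality of the scaling flow under pullback, and the local conjugacy of Euler vector fields. First, by the restriction and gluing properties stated in item~1, it suffices to prove the statement locally around an arbitrary point $p\in U$. Away from $I$ there is nothing to prove since $\Phi^*$ is a topological isomorphism of $\mathcal{D}^\prime$, so one may assume $p\in I\cap U$ and shrink $U,U^\prime$ to $\rho$-convex neighborhoods of $p$ and $\Phi(p)$ respectively.

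Second, pick any Euler vector field $\rho^\prime$ on $U^\prime$. By Theorem~\ref{eulerdiffinvariance} the pullback $\rho:=(\Phi^{-1})_*\rho^\prime$ is Euler on $U$, and the two one-parameter flows intertwine:
\begin{equation}
\Phi\circ e^{t\rho}=e^{t\rho^\prime}\circ\Phi,\qquad \Phi^*\circ(e^{\log\lambda\,\rho^\prime})^*=(e^{\log\lambda\,\rho})^*\circ\Phi^*.
\end{equation}
Since membership in $E_{s,I}(U^\prime)$ is expressed through a uniform bound on the family $\lambda^{-s}(e^{\log\lambda\,\rho^\prime})^*t^\prime$ in $\mathcal{D}^\prime(U^\prime)$ as $\lambda\to 0^+$ (tested against compactly supported test functions), and since $\Phi^*$ is a topological isomorphism $\mathcal{D}^\prime(U^\prime)\to\mathcal{D}^\prime(U)$ carrying compactly supported test functions to compactly supported test functions, the intertwining identity transports any such bound for $t^\prime$ into the corresponding bound for $\Phi^*t^\prime$. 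This yields $\Phi^*E_{s,I}(U^\prime)\subset E_{s,I}(U)$, and applying the same argument to $\Phi^{-1}$ gives the reverse inclusion.

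For this to make sense one must know that the definition of $E_{s,I}$ does not depend on the particular Euler vector field chosen: this is the main subtlety and it is precisely where Theorem~\ref{eulerlocconjugate} intervenes. Given two Euler vector fields $\rho_1,\rho_2$ near a point of $I$, the local diffeomorphism $\Psi$ conjugating them in Theorem~\ref{eulerlocconjugate} fixes $I$ setwise and plays exactly the role of $\Phi$ in the naturality argument above, so a weakly homogeneous bound for $\rho_1$ yields one for $\rho_2$. Combined with the gluing property this upgrades a germwise statement to a statement on all of $U$.

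The bulk of the work is therefore hidden in verifying the local conjugacy (already granted by Theorem~\ref{eulerlocconjugate}) and in checking that the scaling family behaves continuously enough in $\lambda$ for the pullback to preserve the uniform estimate; once these are in place, the diffeomorphism invariance is the formal shadow of the naturality of pushforward on vector fields.
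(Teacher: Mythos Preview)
Your main intertwining argument---pushing forward an Euler vector field by $\Phi$ and using $\Phi^*\circ(e^{\log\lambda\,\rho^\prime})^*=(e^{\log\lambda\,\rho})^*\circ\Phi^*$---is exactly what the paper does, and that part is fine. The gap is in your treatment of the $\rho$-independence of $E_{s,I}$. You claim it follows from Theorem~\ref{eulerlocconjugate}: if $\Psi$ conjugates $\rho_1$ to $\rho_2$, then ``a weakly homogeneous bound for $\rho_1$ yields one for $\rho_2$.'' But run the naturality argument: from $\rho_2=\Psi^{-1}_*\rho_1$ one gets $(e^{\log\lambda\,\rho_2})^*=\Psi^*\circ(e^{\log\lambda\,\rho_1})^*\circ(\Psi^{-1})^*$, so boundedness of $\lambda^{-s}(e^{\log\lambda\,\rho_1})^*t$ only gives boundedness of $\lambda^{-s}(e^{\log\lambda\,\rho_2})^*(\Psi^*t)$. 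You have shown $\Psi^*t\in E^{\rho_2}_{s,p}$, not $t\in E^{\rho_2}_{s,p}$. Closing this by invoking diffeomorphism invariance again is circular.

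The paper avoids this by not using the single conjugating diffeomorphism of Theorem~\ref{eulerlocconjugate} at all. Instead, $\rho$-independence (Theorem~\ref{locthmGOOD}) is proved from Proposition~\ref{propositionvariablefamily}, which produces a \emph{smooth $\lambda$-dependent family} $\Phi(\lambda)=S_1^{-1}(\lambda)\circ S_2(\lambda)$ with $\Phi(0)$ well defined. Then $\lambda^{-s}S_2(\lambda)^*t=\Phi(\lambda)^*\bigl(\lambda^{-s}S_1(\lambda)^*t\bigr)$ relates the two scalings applied to the \emph{same} $t$, and boundedness transfers because $(\Phi(\lambda))_{\lambda\in[0,1]}$ is a bounded family of diffeomorphisms. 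Once that is in place, the diffeomorphism-invariance proof is just your intertwining step plus a citation of Theorem~\ref{locthmGOOD}. So the fix is to replace your appeal to Theorem~\ref{eulerlocconjugate} by an appeal to Proposition~\ref{propositionvariablefamily} (or directly to Theorem~\ref{locthmGOOD}).
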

\item The following proposition
gives a
concrete characterization
of elements in $E_{s,I}(U)$ for arbitrary open sets $U$ which 
could be used as a definition of $E_{s,I}(U)$:
\begin{prop}\label{caracterisationEs}
$t$ belongs to the
local space $E_{s,I}(U)$ if and only if
for all $p\in  int\left(\overline{U}\right)\cap I$, there is
some open chart $\psi:V_p\subset int\left(\overline{U}\right)\mapsto \mathbb{R}^{n+d}$, 
$\psi(I)\subset \mathbb{R}^n\times\{0\}$
where $\lambda^{-s}(\psi_*t)(x,\lambda h) $
is bounded in $\mathcal{D}^\prime(\psi(V_p\cap U))$.
\end{prop}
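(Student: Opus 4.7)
The plan is to use diffeomorphism invariance together with the local conjugacy of Euler vector fields to reduce to the flat model $M=\mathbb{R}^{n+d}$, $I=\mathbb{R}^n\times\{0\}$, where the Euler field $\rho_0=h^j\partial_{h^j}$ generates exactly the flow $(x,h)\mapsto(x,\lambda h)$. Once on the flat model, the boundedness of $\lambda^{-s}(\psi_*t)(x,\lambda h)$ is literally a restatement of the boundedness of $\lambda^{-s}(e^{\log\lambda\rho_0})^*\psi_*t$ in $\mathcal{D}'$, which should coincide with the definition of $E_{s,\mathbb{R}^n\times\{0\}}$ on $\rho_0$-convex opens.

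For the forward implication, I would fix $p\in\mathrm{int}(\overline{U})\cap I$, pick a chart straightening $I$ near $p$, and on a $\rho$-convex neighborhood $V_p\subset\mathrm{int}(\overline{U})$ choose any Euler vector field $\rho$. By Theorem \ref{eulerlocconjugate}, some germ of diffeomorphism $\Phi$ at $p$ conjugates $\rho$ to $\rho_0$; composing the chart with $\Phi$ produces a chart $\psi$ with $\psi_*\rho=\rho_0$ and $\psi(I\cap V_p)\subset\mathbb{R}^n\times\{0\}$. Since $t\in E_{s,I}(U)$ implies $t\in E_{s,I}(V_p)$ by the restriction property, Theorem \ref{Esdiffinvariance} transports this to $\psi_*t\in E_{s,\mathbb{R}^n\times\{0\}}(\psi(V_p\cap U))$, and the assumed compatibility of $E_{s,\cdot}$ with the scaling flow gives exactly the desired boundedness of $\lambda^{-s}(\psi_*t)(x,\lambda h)$.

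For the backward implication, suppose the local chart condition holds at every $p\in\mathrm{int}(\overline{U})\cap I$. In each chart $\psi_p$, the boundedness says $\psi_{p,*}t$ lies in $E_{s,\mathbb{R}^n\times\{0\}}\bigl(\psi_p(V_p\cap U)\bigr)$, so Theorem \ref{Esdiffinvariance} yields $t|_{V_p\cap U}\in E_{s,I}(V_p\cap U)$. Away from $I$ the condition is vacuous: pick any open $W\subset\mathrm{int}(\overline{U})\setminus I$; then $t|_W\in\mathcal{D}'(W)$ is automatically in $E_{s,I}(W)$ since the scaling constraint only concerns the behavior near $I$. The family $\{V_p\}_{p\in I\cap\mathrm{int}(\overline{U})}\cup\{\mathrm{int}(\overline{U})\setminus I\}$ forms an open cover of $\mathrm{int}(\overline{U})$ and a fortiori satisfies the hypothesis of the gluing property, so $t\in E_{s,I}(U)$.

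The main obstacle is the technical step in the forward direction where one must check that the bounded family $\lambda^{-s}(e^{\log\lambda\rho})^*t$ in $\mathcal{D}'(V_p)$ is genuinely transferred into a bounded family in $\mathcal{D}'(\psi(V_p\cap U))$ after pushforward and after replacing $\rho$ by $\rho_0$; this requires that $V_p$ be chosen simultaneously $\rho$-convex and such that $\psi(V_p\cap U)$ be $\rho_0$-convex, so that the scaling flow does not leave the chart for $\lambda\in(0,1]$. This can be arranged by shrinking $V_p$ to a product neighborhood of the form $\psi^{-1}(W\times B)$ with $B$ a ball around $0\in\mathbb{R}^d$, which is manifestly stable under $(x,h)\mapsto(x,\lambda h)$ for $\lambda\in(0,1]$; the pull-back of such a set by the local conjugacy $\Phi$ is $\rho$-convex by construction, so the two scaling flows become intertwined on a common neighborhood and the transfer of boundedness is immediate from Theorem \ref{eulerdiffinvariance} and Theorem \ref{Esdiffinvariance}.
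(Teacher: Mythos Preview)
Your overall strategy is sound, but it is considerably more elaborate than what the paper does, and the backward direction as written has a genuine circularity.

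The paper treats this proposition as an immediate restatement of Definition~\ref{defEs} once Theorem~\ref{locthmGOOD} is available; no separate proof is given. The point is that in any chart $\psi$ around $p$ with $\psi(I)\subset\mathbb{R}^n\times\{0\}$, the pushforward $\psi^{-1}_*(h^j\partial_{h^j})$ is \emph{already} an Euler vector field (Example~\ref{fundexEuler}), and for this particular $\rho$ the condition ``$t\in E_{s,p}^\rho$'' is \emph{literally} the chart condition in the proposition. Theorem~\ref{locthmGOOD} says $E_{s,p}^\rho$ is independent of $\rho$, hence equals $E_{s,p}$, and Definition~\ref{defEs} is nothing but the conjunction of $t\in E_{s,p}$ over all $p\in I\cap\mathrm{int}(\overline U)$. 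So both directions are immediate, pointwise in $p$. Your detours through Theorem~\ref{eulerlocconjugate} (to conjugate an arbitrary $\rho$ to $\rho_0$) and Theorem~\ref{Esdiffinvariance} are harmless but unnecessary: rather than starting from an arbitrary $\rho$ and conjugating, one simply \emph{chooses} $\rho=\psi^{-1}_*\rho_0$ from the start.

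The gap is in your backward direction. You claim that for $W=\mathrm{int}(\overline U)\setminus I$ the membership $t\in E_{s,I}(W)$ is vacuous. It is not: since $I$ has positive codimension, $W$ is dense in $\mathrm{int}(\overline U)$, so $\mathrm{int}(\overline W)=\mathrm{int}(\overline U)$ and hence $I\cap\mathrm{int}(\overline W)=I\cap\mathrm{int}(\overline U)$. Thus $t\in E_{s,I}(W)$ imposes exactly the condition $t\in E_{s,p}$ at every $p\in I\cap\mathrm{int}(\overline U)$, which is what you are trying to prove; the gluing argument is circular. The fix is simply to drop the gluing step: once you have, for each $p$, that the chart condition gives $t\in E_{s,p}$, Definition~\ref{defEs} already yields $t\in E_{s,I}(U)$ directly, with no cover to assemble.
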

However, the property of diffeomorphism invariance
imply that
$E_{s,I}$ \textbf{does not depend on the choice
of Euler vector fields.}  
In particular
in the flat case 
where
$M=\mathbb{R}^{n+d}$
with coordinates $(x,h)=(x^i,h^j)_{1\leqslant i\leqslant n,1\leqslant j\leqslant d}$
and
$I=\{h=0\}$,
$t\in E_{s,I}(\mathbb{R}^{n+d})$
if $\left(\lambda^{-s}t(x,\lambda h)\right)_{\lambda\in(0,1]}$
is a bounded family
of distributions in $\mathcal{D}^\prime(V)$
where $\overline{V}$ is some neighborhood of $I$.
\item The 
collection
$(E_{s,I})_{s\in\mathbb{R}}$
is filtered, $s^\prime\geqslant s \implies 
E_{s,I}\subset E_{s^\prime,I}$ and
$E_{s,I}$ satisfies
an extension property (section \ref{extthmsection}):
\begin{thm}\label{extthmEsintro1}
Let $U\subset M$ be some open set. If
$t\in E_{s,I}(U\setminus I)$ then $t$
is extendible. Conversely,
if $t\in\mathcal{D}^\prime(M)$ then
for any bounded open set $U\subset M$,
$t\in E_{s,I}(U)$
for some $s\in\mathbb{R}$. 
\end{thm}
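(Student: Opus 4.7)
My plan is to handle the two implications separately. For the direct implication, I would reduce to a local computation in the flat model using Theorem~\ref{Esdiffinvariance}: cover $I\cap\overline{U}$ by finitely many submanifold charts $\psi_\alpha:V_\alpha\to\mathbb{R}^{n+d}$ with $\psi_\alpha(I\cap V_\alpha)\subset\mathbb{R}^n\times\{0\}$, pick a subordinate partition of unity $(\chi_\alpha)$, and observe that it is enough to extend each $\chi_\alpha t$ across $I\cap V_\alpha$, since away from $I$ the distribution is already defined. By Proposition~\ref{caracterisationEs} each pushforward $\psi_{\alpha*}(\chi_\alpha t)$ is weakly homogeneous of degree $s$ with respect to the standard Euler field $h^j\partial_{h^j}$, so the problem reduces to extending a weakly homogeneous distribution on $\mathbb{R}^n\times(\mathbb{R}^d\setminus\{0\})$ across $\mathbb{R}^n\times\{0\}$.

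In the flat model I would build the extension by a dyadic telescoping argument. Pick $\theta\in C^\infty(\mathbb{R}^d)$ with $\theta(h)=0$ for $|h|\le 1/2$ and $\theta(h)=1$ for $|h|\ge 1$, set $\theta_k(h)=\theta(2^k h)$, and decompose any $\varphi\in\mathcal{D}(\mathbb{R}^{n+d})$ as $\theta_0\varphi+\sum_{k\ge 0}(\theta_{k+1}-\theta_k)\varphi$; the first term pairs with $t$ in the usual way because it vanishes near $I$. After the change of variable $h=2^{-k}h'$, each increment is the pairing of the rescaled distribution $t(x,2^{-k}h')$ with a test function of $C^N$-norm controlled by $\|\varphi\|_{C^N}$, and weak homogeneity of degree $s$ bounds it by $C\,2^{-k(s+d+N)}\|\varphi\|_{C^N}$. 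When $s+d>-N$ for $N$ exceeding the order of $t$ on a tube around $I$, the series converges absolutely and defines the extension. When $s+d$ is too small, I would first subtract a projector $W\varphi$ equal to a smooth cutoff times the $h$-Taylor jet of $\varphi$ along $I$ truncated at order $\lfloor -s-d\rfloor$, so that $\varphi-W\varphi$ vanishes along $I$ to sufficiently high order and Taylor's theorem provides the extra decay needed to close the estimate. Reassembling via $(\chi_\alpha)$ yields the global extension.

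For the converse, let $U$ be bounded, fix an open $V$ with $\overline{U}\subset V$ and $\overline{V}$ compact, so $t|_V$ has finite order $k$. A finite atlas of submanifold charts reduces the claim to the flat model, where the change of variable $y=\lambda h$ yields $\langle t(x,\lambda h),\varphi(x,h)\rangle=\lambda^{-d}\langle t,\varphi(x,y/\lambda)\rangle$ for test functions supported in a fixed compact set. Since derivatives of order at most $k$ in $y$ pick up a factor $\lambda^{-k}$ when $\lambda\in(0,1]$, one has $\|\varphi(x,y/\lambda)\|_{C^k}\le\lambda^{-k}\|\varphi\|_{C^k}$, hence $|\langle\lambda^{-s}t(x,\lambda h),\varphi\rangle|\le C\lambda^{-s-d-k}\|\varphi\|_{C^k}$. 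Choosing any $s\le -d-k$ shows that the family $(\lambda^{-s}t(x,\lambda h))_{\lambda\in(0,1]}$ is bounded in $\mathcal{D}'$, so $t\in E_{s,I}(U)$.

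The main obstacle is the subtraction step in the case $s\le-d$: one has to verify that the extension built with the projector $W$ is really a distribution on $U$ (not just on $U\setminus I$), that it restricts back to $t$ on $U\setminus I$, and that the construction is independent of the cutoff $\theta$, of $W$ and of the partition of unity $(\chi_\alpha)$ modulo distributions supported on $I$. This is the classical Epstein--Glaser/Brunetti--Fredenhagen renormalization step; once the bookkeeping of the Taylor remainder in terms of $C^N$-seminorms is in place, the estimates are routine but unavoidable.
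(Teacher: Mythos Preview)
Your proposal is essentially correct and follows the same architecture as the paper: localize via submanifold charts and a partition of unity, reduce to the flat model, extend by a Littlewood--Paley/Meyer scheme with a Taylor subtraction when $s+d\le 0$, and for the converse bound the scaling by the finite order of $t$ on a relatively compact set. Two remarks are worth making.

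First, a minor gap: for the direct implication the open set $U$ is arbitrary, so $I\cap\overline{U}$ need not be compact and you cannot in general cover it by \emph{finitely many} charts. The paper uses a locally finite subcover $(V_a)_{a\in A}$ with subordinate partition of unity $(\varphi_a)$ and glues the local extensions as $\sum_a \varphi_a\,\psi_a^*\overline{\psi_{a*}t} + (1-\sum_a\varphi_a)t$; you should do the same. This is a one-line fix.

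Second, a genuine but harmless methodological difference: in the flat model you run a \emph{dyadic} telescoping sum $\sum_{k\ge 0}(\theta_{k+1}-\theta_k)$, which is exactly Meyer's original construction; the paper instead introduces a \emph{continuous} partition of unity
\[
\chi-\chi_{\varepsilon^{-1}}=\int_{\varepsilon}^1\frac{d\lambda}{\lambda}\,\psi_{\lambda^{-1}},\qquad \psi=-\rho\chi,
\]
and writes the extension as $\overline t=\lim_{\varepsilon\to 0} t(1-\chi_{\varepsilon^{-1}})$, with the singular case handled by pairing against the Taylor remainder $I_m\varphi$. The estimates are the same up to replacing sums over $k$ by integrals in $\lambda$. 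The paper's continuous formulation pays off later when one wants wave-front bounds (Theorem~\ref{microlocextensionthm}): it recasts the extension as a push-forward of a product on the extended space $\mathbb{R}_\lambda\times\mathbb{R}^{n+d}$ and lets one read off $WF(\overline t)$ from the functorial rules for pull-back, product, and push-forward. For the bare extendibility statement of Theorem~\ref{extthmEsintro1}, your dyadic argument is perfectly adequate and arguably more elementary. Your converse computation coincides with Proposition~\ref{converse} (note the correct conclusion is $s\le -(d+k)$, as you wrote).
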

Moreover,
\begin{thm}\label{extthmEsintro2}
For all $s\in\mathbb{R}$, there is a linear map
\begin{eqnarray*}
t\in E_{s,I}(U\setminus I)\longmapsto \overline{t}\in  E_{s^\prime,I}(U)
\end{eqnarray*}
where $s^\prime = s$ if 
$s+d\notin -\mathbb{N}$ and $s^\prime< s$ otherwise.
\end{thm}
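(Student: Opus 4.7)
The plan is to adapt the Epstein--Glaser/Brunetti--Fredenhagen extension procedure to the intrinsic Euler scaling used in the paper. The first step is to localize: by Theorem \ref{eulerlocconjugate}, Theorem \ref{Esdiffinvariance} and the restriction/gluing property of $E_{s,I}$, around each $p \in I$ one may choose a chart in which $I \cap V = \mathbb{R}^n \times \{0\}$ and the Euler field is $\rho = h \cdot \partial_h$, while on any chart disjoint from $I$ the extension is trivially the identity. A partition of unity subordinate to such a cover reduces the problem to constructing a linear extension $E_{s,I}(V \setminus I) \to E_{s',I}(V)$ in the flat model $V = \mathbb{R}^n \times B(0,r)$, with the local pieces glued back by the gluing property.

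In that model, fix cutoffs $\chi, w \in \mathcal{D}(\mathbb{R}^d)$ equal to $1$ near $0$, set $\chi_\lambda(h) := \chi(h/\lambda)$, and pick the integer $N$ as the smallest integer with $N + 1 > -s - d$ in the non-resonant case, and $N = -s - d$ at resonance. Introduce the Taylor subtraction
\[
(T_N\varphi)(x, h) := \varphi(x, h) - w(h)\sum_{|\alpha| \leq N}\frac{h^\alpha}{\alpha!}(\partial_h^\alpha \varphi)(x, 0),
\]
so that $T_N\varphi$ vanishes to order $N + 1$ at $I$. I would define
\[
\langle \overline{t}, \varphi\rangle := \lim_{\lambda \to 0^+} \bigl\langle t,\,(1 - \chi_\lambda)T_N\varphi\bigr\rangle
\]
in the non-resonant case, and as the Hadamard finite part of the same expression at resonance. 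Each pairing on the right is legitimate since $(1 - \chi_\lambda)T_N\varphi \in \mathcal{D}(V \setminus I)$, the range on which $t$ is given.

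The analytic heart is the estimate of $\partial_\lambda\langle t, (1-\chi_\lambda)T_N\varphi\rangle = -\langle t, (\partial_\lambda\chi_\lambda)T_N\varphi\rangle$. Changing variables $h = \lambda h'$ rewrites this as a pairing against $\mu_\lambda^* t$: combining the hypothesis that $\lambda^{-s}\mu_\lambda^* t$ is bounded in $\mathcal{D}'$, the factor $\lambda^{-1}$ coming from $\partial_\lambda\chi_\lambda$ localized on $\{|h|\sim\lambda\}$, the Taylor vanishing $T_N\varphi(x,\lambda h') = O(\lambda^{N+1})$, and the Jacobian $\lambda^d$, one obtains
\[
\bigl|\partial_\lambda \langle t, (1 - \chi_\lambda) T_N\varphi\rangle\bigr| \lesssim \lambda^{s + d + N}
\]
uniformly in $\lambda \in (0, 1]$. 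In the non-resonant case $s + d + N > -1$ strictly, so this is integrable down to $\lambda = 0$ and the limit exists; at resonance, the integrand is of order $\lambda^{-1}$ and produces a $\log\lambda$ divergence whose finite part is the definition of $\overline{t}$. The same estimate, applied now to $\langle \overline{t}, \mu_\sigma^* \varphi\rangle$, shows that $\sigma^{-s}\mu_\sigma^*\overline{t}$ is bounded in $\mathcal{D}'(V)$ outside resonance; at resonance only a $\log\sigma$ factor remains, which is absorbed into $\sigma^{-s'}$ for any $s' < s$, giving $\overline{t} \in E_{s', I}(V)$.

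The main obstacle is the delicate accounting in the resonant case: one must verify that the Hadamard subtraction produces at most a logarithmic loss of scaling, and that altering the auxiliary cutoffs $\chi, w$ changes $\overline{t}$ only by an element supported on $I$ which still lies in $E_{s', I}(V)$. Diffeomorphism invariance (Theorem \ref{Esdiffinvariance}) is what ultimately guarantees that this construction, carried out in a flat chart with arbitrary choices, yields a well-defined linear operator into the intrinsic space $E_{s', I}(U)$ after the local pieces are assembled by the partition of unity.
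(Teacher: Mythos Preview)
Your overall architecture matches the paper's proof (Theorem~\ref{thmfi}): localize via charts in which $I=\{h=0\}$ and $\rho=h\cdot\partial_h$, invoke diffeomorphism invariance (Theorem~\ref{Esdiffinvariance}) and the gluing property, and reduce everything to the flat extension Theorem~\ref{flatgenextensionthm}. In the flat model the paper also uses exactly the cutoff limit $t(1-\chi_{\varepsilon^{-1}})$ paired against the Taylor remainder $I_m\varphi$, and its continuous partition of unity formula $\chi-\chi_{\varepsilon^{-1}}=\int_\varepsilon^1\frac{d\lambda}{\lambda}\psi_{\lambda^{-1}}$ is precisely your device of differentiating in $\lambda$ and integrating back. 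So the strategy is correct and essentially the same.

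There is, however, a genuine slip in your treatment of the resonant case. With $N=m$ when $-m-1<s+d\le -m$, your own estimate gives $|\partial_\lambda\langle t,(1-\chi_\lambda)T_N\varphi\rangle|\lesssim\lambda^{s+d+N}$, and at resonance $s+d+N=0$, which is \emph{integrable} on $(0,1]$. No Hadamard finite part is needed to define $\overline{t}$: the limit $\lim_{\lambda\to 0}\langle t,(1-\chi_\lambda)T_N\varphi\rangle$ exists outright (this is the paper's formula~(\ref{renormformula}), which converges because $s+d+m+1>0$). The logarithm does not appear in the construction of $\overline{t}$ but only in the verification of its scaling degree: when you compute $\mu^{-s}\langle\overline{t}_\mu,\varphi\rangle$ and split it as in the paper's Propositions~\ref{scal2}--\ref{scal3noninteger}, the contribution of the subtracted Taylor polynomial produces a term of the form $\int_{c\mu}^1\frac{d\lambda}{\lambda}\,(\lambda/\mu)^{s+d+|\alpha|}(\cdots)$, and it is the index $|\alpha|=m$ with $s+d+m=0$ that yields $\log\mu$. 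Your sentence ``the same estimate, applied now to $\langle\overline{t},\mu_\sigma^*\varphi\rangle$'' hides this: the estimate is \emph{not} literally the same, because scaling $\varphi$ interacts with the Taylor subtraction (one must track $(P_m\varphi)_\lambda$ and $(I_m\varphi)_\lambda$ separately), and this is exactly where the resonant loss from $s$ to $s'<s$ is generated. Once you relocate the logarithm to the scaling check rather than to the definition of $\overline{t}$, your argument goes through and coincides with the paper's.
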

\end{enumerate}
Using diffeomorphism invariance and locality
of $E_{s,I}$,
the proof of the above property
is a consequence
of the microlocal
extension Theorem \ref{microlocextensionthm}
proved in 
the flat case.
The space $E_{s,I}$
only takes into account
the growth of distributions
along the submanifold $I$
which is not enough for many
applications, in particular
in quantum field theory where
we need to know the wave front set
of the extension $\overline{t}$
since we must 
\emph{multiply
distributions to define
Feynman amplitudes}. Therefore,
we need to refine
the definition
of weakly homogeneous
distributions, let
us introduce the necessary
definitions to
state our theorem.
We work in $\mathbb{R}^{n+d}$ with coordinates $(x,h)=(x^i,h^j)_{1\leqslant i\leqslant n,1\leqslant j\leqslant d}$, $I=\mathbb{R}^n\times\{0\}$ is the linear subspace $\{h=0\}$.
We assume $U$ to be of the form
$U_1\times U_2$ where $U_1$ (resp $U_2$) is an open subset of $\mathbb{R}^n$
(resp $\mathbb{R}^d$) s.t. $\lambda U_2\subset U_2,\forall\lambda\in(0,1]$.

We denote by $(x,h;\xi,\eta)$ 
the coordinates 
in cotangent space $T^* U$, 
where $\xi$ (resp $\eta$) is dual to $x$ (resp $h$).
$T^\bullet U$ denotes the cotangent $T^*U$ 
minus the zero section $\underline{0}$. 
If $U$ is \emph{convex}, 
then a set $\Gamma\subset T^\bullet U$ 
is \emph{stable by scaling}
if 
\begin{eqnarray}
\forall\lambda\in(0,1], \left( \{(x,\lambda^{-1}h;\xi,\lambda\eta) 
; (x,h;\xi,\eta)\in\Gamma   
\}\cap T^\bullet U\right) \subset \Gamma.
\end{eqnarray}
For $\Gamma$
a closed conic set
in $T^\bullet U$, $\mathcal{D}^\prime_\Gamma(U)$
is the 
space of distributions
in $\mathcal{D}^\prime(U)$
with wave front set
in $\Gamma$.
For $U$ a convex set
and $\Gamma\subset T^\bullet U$ 
a closed conic set
stable by scaling,
we denote
by 
$E_s(\mathcal{D}^\prime_\Gamma(U))$
the locally convex space 
of
\emph{weakly homogeneous distributions
of degree $s$ in $\mathcal{D}^\prime_\Gamma(U)$} 
defined
as follows:
$t\in E_s(\mathcal{D}^\prime_\Gamma(U))$
if $(\lambda^{-s}t(x,\lambda h))_{\lambda\in(0,1]}$
is a bounded family
of distributions in $\mathcal{D}^\prime_\Gamma(U)$.

We denote by $N^*\left(I\right)$ the conormal
bundle of $I$.
The central result of our paper is 
a general extension theorem 
(subsection \ref{genexttheoremssubsection}) for distributions
in flat space with control of the wave front set:
\begin{thm}\label{microlocextensionthm}
Let $U\subset \mathbb{R}^{n+d}$ be of the form
$U_1\times U_2$ where $U_1$ (resp $U_2$) is an open subset of $\mathbb{R}^n$
(resp $\mathbb{R}^d$) s.t. $\lambda U_2\subset U_2,\forall\lambda\in(0,1]$
and $\Gamma$ some closed conic set
in $T^\bullet U$. 
Set $\Xi=\{(x,0;\xi,\eta)|  (x,h;\xi,0)\in \Gamma\}\subset T^*_IU$.
For all $s\in\mathbb{R}$ there exists a linear, bounded map 
$t\in E_s(\mathcal{D}^\prime_\Gamma(U\setminus I))\longmapsto
\overline{t}\in  E_{s^\prime}(\mathcal{D}^\prime_{\Gamma\cup\Xi\cup N^*(I)}(U))$, 
where $s^\prime = s$ if 
$s+d\notin -\mathbb{N}$ and $s^\prime< s$ otherwise.
\end{thm}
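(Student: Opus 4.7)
The plan is to adapt to the microlocal setting the classical Taylor-subtraction construction (Meyer, Brunetti--Fredenhagen), exploiting the scaling stability of $\Gamma$ to transfer the bounded-family hypothesis onto the wave front of the extension.

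First I would localize near $I$: fix $\chi\in C^\infty_c(\mathbb{R}^d)$ equal to $1$ in a neighborhood of the origin and supported in $\{|h|\leqslant 1\}$. The piece $(1-\chi(h))t$ extends trivially by zero across $I$ and its wave front remains inside $\Gamma$, so it suffices to extend $\chi(h)t$. Choose $N\in\mathbb{N}$ with $N+s+d>0$ and introduce the Taylor-remainder operator
\[
W_N\varphi(x,h)=\varphi(x,h)-\chi(h)\sum_{|\alpha|<N}\frac{h^\alpha}{\alpha!}(\partial_h^\alpha\varphi)(x,0),
\]
so that $W_N\varphi$ vanishes to order $N$ along $I$ and admits a Hadamard factorization $W_N\varphi=\sum_{|\alpha|=N}h^\alpha\psi_\alpha$ with $\psi_\alpha\in\mathcal{D}(U)$ depending continuously on $\varphi$.

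Second, I would define the extension by $\langle\overline{t},\varphi\rangle:=\langle t,W_N\varphi\rangle$, declaring the renormalization constants of the subtracted monomials to be zero. Writing $\mu_\lambda(x,h)=(x,\lambda h)$ and combining $W_N\varphi=\sum_{|\alpha|=N}h^\alpha\psi_\alpha$ with the change of variable $h\mapsto\lambda h$, this pairing rewrites as
\[
\langle t,W_N\varphi\rangle=\int_0^1\lambda^{s+d+N-1}\langle\lambda^{-s}\mu_\lambda^*t,\Psi_\lambda(\varphi)\rangle\,d\lambda,
\]
where $\Psi_\lambda(\varphi)$ is a bounded family in $\mathcal{D}(U\setminus I)$. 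The bounded-family hypothesis on $\lambda^{-s}\mu_\lambda^*t$ in $\mathcal{D}^\prime_\Gamma(U\setminus I)$ together with the positive exponent $s+d+N-1>-1$ makes the integral absolutely convergent at $\lambda=0$ when $s+d\notin-\mathbb{N}$, and $s'=s$; in the resonant case $s+d=-k\in-\mathbb{N}$ a logarithmic factor appears and forces $s'<s$ by an arbitrarily small amount.

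Third, wave front control. Away from $I$ one has $\overline{t}=t$ so $\mathrm{WF}(\overline{t})\subset\Gamma$. On $I$, I would test microlocally by pairing with an oscillating localizer $\varphi(x,h)e^{-i\tau(\xi\cdot x+\eta\cdot h)}$ centered at $(x_0,0;\xi_0,\eta_0)\notin\Gamma\cup\Xi\cup N^*(I)$. Inserting this into the integral representation above, the scaling $h\mapsto\lambda h$ turns the phase covariable $\eta$ into $\lambda\eta$ while $\xi$ is preserved; the scaling stability of $\Gamma$ then guarantees that the excluded direction stays excluded for every $\lambda\in(0,1]$, so the $\mathcal{D}^\prime_\Gamma$-bounded family yields rapid decay in $\tau$ uniformly in $\lambda$, which the integral inherits. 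Frequencies in $N^*(I)$ absorb singularities generated by derivatives of $\chi$, and the $\xi$-frequencies surviving the limit $\lambda\to 0$ with $\eta$ forced to zero are exactly those collected in $\Xi$, explaining why both pieces must be added.

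The hard part will be making this last microlocal argument quantitative: one must prove convergence of the integral in the topology of $\mathcal{D}^\prime_{\Gamma\cup\Xi\cup N^*(I)}(U)$, which requires controlling the H\"ormander seminorms of $\Psi_\lambda(\varphi)$ uniformly in $\lambda$, and then verify the scaling bound $\lambda^{-s'}\mu_\lambda^*\overline{t}$ in that same topology. The enlargement by $\Xi\cup N^*(I)$ is dictated precisely by the requirement that $\Gamma\cup\Xi\cup N^*(I)$ be stable under the scaling $(x,\lambda^{-1}h;\xi,\lambda\eta)$, so the construction closes on itself; checking this stability is what both motivates and justifies the definition of $\Xi$.
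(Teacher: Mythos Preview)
Your overall strategy---Taylor subtraction on the test function, an integral representation over the scaling parameter, then microlocal estimates---matches the paper's architecture (its Theorems~3.1--3.3 and Propositions~3.1--3.2), and your identification of the set $\Xi$ as arising from the $\lambda\to 0$ limit where $\eta$ is killed but $\xi$ survives is exactly right. Two points deserve comment.

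First, the integral formula
\[
\langle t,W_N\varphi\rangle=\int_0^1\lambda^{s+d+N-1}\langle\lambda^{-s}\mu_\lambda^*t,\Psi_\lambda(\varphi)\rangle\,d\lambda
\]
does not follow merely from the Hadamard factorization $W_N\varphi=\sum h^\alpha\psi_\alpha$ and a change of variable; you need a mechanism that introduces an integral over $\lambda$. The paper supplies this via a \emph{continuous partition of unity}: setting $\psi=-\rho\chi$ one has $\chi-\chi_{\varepsilon^{-1}}=\int_\varepsilon^1\psi_{\lambda^{-1}}\,d\lambda/\lambda$, and it is this Littlewood--Paley type identity that converts the pairing into a scaling integral with integrand supported away from $I$. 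Without it your formula is unjustified as written; you should either insert this device or replace the integral by a dyadic sum \`a la Meyer.

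Second, your proposed wave front control by direct oscillatory testing is valid in principle but differs substantially from the paper's method. The paper avoids hands-on Fourier estimates by \emph{doubling the space}: it rewrites $t(\chi-\chi_{\varepsilon^{-1}})$ as the push-forward by $\pi_3:(\lambda,x,h,x',h')\mapsto(x',h')$ of a product
\[
1_{[\varepsilon,1]}(\lambda)\lambda^{s+d-1}\lambda^{-s}t(x,\lambda h)\psi(x,h)\,\delta(x-x',\lambda h-h')
\]
on $\mathbb{R}\times U\times U$, then reads off the wave front set purely from the H\"ormander product theorem and the push-forward theorem (both cited from \cite{Viet-wf2}). This reduces the microlocal analysis to algebraic bookkeeping on cones $\Lambda_1,\Lambda_2$ and their sum, and the bounds $\pi_{3*}(\Lambda_1+\Lambda_2)\cap T^*_IU\subset\Xi$, $\pi_{3*}\Lambda_2\cap T^*_IU\subset N^*(I)$ drop out by inspection. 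Your direct approach would require uniform-in-$\lambda$ control of the H\"ormander seminorms of $\Psi_\lambda(\varphi)$ against oscillatory test functions, which is feasible but more laborious; the paper's structural route buys you boundedness in $\mathcal{D}'_\Lambda$ essentially for free once the integral formula is in place, and then convergence in $\mathcal{D}'$ plus boundedness in $\mathcal{D}'_\Lambda$ gives convergence in $\mathcal{D}'_\Lambda$.
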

An immediate
corollary of the above theorem
is the bound
$WF(\overline{t}) \subset\left( WF(t)\cup\Xi\cup N^*(I)\right)$
on the wave front of the extension. The central
ingredients of the proof are: 
a partition 
of unity formula which is a continuous analog
of the Littlewood--Paley decomposition used by Meyer \cite{Meyer-98},
to consider $(\lambda,x,h)\mapsto \lambda^{-s}t(x,\lambda h)$
as a distribution on the extended space $\mathbb{R}\times\mathbb{R}^{n+d}$
and a new integral formula
for the extension which reduces the bounds 
on the wave front set as applications of the Theorems in \cite{Viet-wf2}.

A particular case of the above
theorem was proved 
by Brunetti and Fredenhagen \cite{Brunetti2}
when 
the closure 
$\overline{\Gamma}$
of $\Gamma$
over $I$ 
is contained in
$N^*(I)$.
In that case, 
one can choose an 
extension $\overline{t}$
such that
$WF(\overline{t})\subset WF(t)\cup N^*(I)$ and
$\overline{t}\in  E_{s^\prime}(\mathcal{D}^\prime_{\Gamma\cup N^*(I)}(U))$.
The important condition
$\left(
\overline{\Gamma}\cap T_I^\bullet M\right) 
\subset N^*(I)$
called \emph{conormal landing
condition}
is \emph{intrinsic}
and
generalizes in a straightforward way to manifolds. 
It is a kind of microlocal regularity
condition and ensures that the wave front set 
of the extension is minimal.

Motivated by this intrinsic geometric
condition and the result of
Theorem \ref{microlocextensionthm},
we construct in section \ref{Esmicrolocsection} 
a subspace $E_{s,N^*(I)}\subset E_{s,I}$
which satisfies the following properties:
\begin{enumerate}
\item $E_{s,N^*(I)}$ satisfies the 
same restriction and gluing properties as $E_{s,I}$
\item $E_{s,N^*(I)}$ has the
important property of diffeomorphism
invariance:
\begin{thm}
Let $M,M^\prime$ be two smooth manifolds,
$I\subset M,I^\prime\subset M^\prime$
smooth embedded submanifolds and $\Phi:=U\longmapsto U^\prime$ 
a local diffeomorphism
such that $\Phi(I\cap U)= I^\prime\cap U^\prime$.
Then $\Phi^*E_{s,N^*(I)}(U^\prime)=E_{s,N^*(I)}(U)$.
\end{thm}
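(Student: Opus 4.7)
The plan is to mimic the proof of Theorem~\ref{Esdiffinvariance} (diffeomorphism invariance of $E_{s,I}$), with the additional task of tracking that the microlocal refinement based on the conormal bundle $N^*(I)$ is preserved. By the shared restriction and gluing properties, it suffices to work locally near a point $p\in I\cap U$; away from $I$ both spaces coincide with $\mathcal{D}'$ and the statement is just standard diffeomorphism invariance of distributions.

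Near $p$, I would first use Theorem~\ref{eulerdiffinvariance} to push forward an Euler vector field $\rho$ on $U$ adapted to $I$ to an Euler vector field $\Phi_*\rho$ on $U'$ adapted to $I'$. By Theorem~\ref{eulerlocconjugate} applied near $\Phi(p)$, $\Phi_*\rho$ is conjugate to any chosen model Euler field on $U'$ by a further germ of diffeomorphism fixing $I'$. Composing, I may assume that the scaling flows on both sides intertwine, $\Phi\circ e^{\log\lambda\,\rho}=e^{\log\lambda\,(\Phi_*\rho)}\circ\Phi$, so that in adapted local charts in which both Euler fields take the canonical form $h^{j}\partial_{h^{j}}$, the scaling $(x,h)\mapsto(x,\lambda h)$ is literally commuted through $\Phi$.

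The genuinely new ingredient compared to Theorem~\ref{Esdiffinvariance} is the microlocal condition. The key observation is that the conormal bundle $N^*(I)$ is intrinsically attached to $I$, so the cotangent lift $\Phi^{*}\colon T^{\bullet}U'\to T^{\bullet}U$ sends $N^*(I')$ bijectively to $N^*(I)$. Consequently, if a closed conic set $\Gamma'\subset T^{\bullet}U'$ is stable under the scaling generated by $\Phi_*\rho$ and satisfies the conormal landing condition $\overline{\Gamma'}\cap T^{\bullet}_{I'}U'\subset N^*(I')$, then $\Phi^{*}\Gamma'\subset T^{\bullet}U$ is stable under the scaling generated by $\rho$ and satisfies $\overline{\Phi^{*}\Gamma'}\cap T^{\bullet}_{I}U\subset N^*(I)$. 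All operations involved (closure, intersection with the conormal of $I$, symbol scaling) commute with the cotangent lift of a diffeomorphism mapping $I$ to $I'$.

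The main obstacle is showing that boundedness of the scaled family $(\lambda^{-s}(\psi_{*}t)(x,\lambda h))_{\lambda\in(0,1]}$ in $\mathcal{D}'_{\Phi^{*}\Gamma'}$ transfers to boundedness of the corresponding family after $\Phi$ in $\mathcal{D}'_{\Gamma'}$. Here I would invoke the continuity of pullback by a diffeomorphism as a map $\mathcal{D}'_{\Gamma'}(U')\to\mathcal{D}'_{\Phi^{*}\Gamma'}(U)$, in the bornological form established by \cite{Viet-wf2,dabrowski2013functional,dabrowski2014functional}, which promotes the classical microlocal pullback theorem from individual distributions to bounded families. Because the scaling flows have been intertwined by our choice of $\Phi$ in the previous step, this continuity sends the scaled bounded family on one side to the scaled bounded family on the other. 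Symmetry in $\Phi$ and $\Phi^{-1}$ then yields both inclusions $\Phi^{*}E_{s,N^*(I)}(U')\subset E_{s,N^*(I)}(U)$ and the reverse, completing the argument.
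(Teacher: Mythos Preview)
Your proposal is correct and follows essentially the same route as the paper: localize at $p\in I$, intertwine the scalings via the functorial identity $\Phi\circ e^{\log\lambda\rho}=e^{\log\lambda(\Phi_*\rho)}\circ\Phi$, apply boundedness of pull-back between $\mathcal{D}'_\Gamma$ spaces \cite{Viet-wf2}, and observe that $\Phi^*N^*(I')=N^*(I)$ so the conormal landing condition is preserved. Your appeal to Theorem~\ref{eulerlocconjugate} is unnecessary, however: the intertwining identity holds automatically from the definition of the pushforward of a vector field, and since $E_{s,N^*(I),p}$ is already independent of the choice of Euler field (Theorem~\ref{thminvmuloc}), there is no need to reduce $\Phi_*\rho$ to canonical form---the paper simply uses $\Phi^{-1}_*\rho$ directly as the Euler field on the other side.
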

A consequence of the above diffeomorphism
invariance is that the definition
of $E_{s,N^*(I)}$ \textbf{does not depend on the choice
of Euler vector fields.}  
\item The 
collection
of spaces
$(E_{s,N^*(I)})_{s\in\mathbb{R}}$
is filtered, $s^\prime\geqslant s \implies 
E_{s,N^*(I)}\subset E_{s^\prime,N^*(I)}$
\end{enumerate}
The subspace $E_{s,N^*(I)}$ satisfies 
an extension theorem (section \ref{extthmsectionmicroloc})
\begin{thm}\label{microlocextensionfinal}
Let $U\subset M$ be some
open neighborhood of $I$. 
If $t\in E_{s,N^*(I)}(U\setminus I)$ 
then there exists an extension $\overline{t}$
with $WF(\overline{t})\subset WF(t)\cup N^*(I)$ and
$\overline{t}\in  E_{s^\prime,N^*(I)}(U)$, 
where $s^\prime = s$ if 
$s+d\notin -\mathbb{N}$ and $s^\prime< s$ otherwise.
\end{thm}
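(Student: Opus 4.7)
The plan is to reduce the global extension problem on a manifold to the flat-space microlocal extension theorem by localizing with adapted charts and then gluing via a partition of unity, crucially using that membership in $E_{s,N^*(I)}$ encodes the conormal landing condition for the wave front of $t$ near $I$.

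First I would cover $U\cap I$ by open submanifold charts $\psi_i : V_i \to \mathbb{R}^{n+d}$ sending $V_i\cap I$ onto $\psi_i(V_i)\cap (\mathbb{R}^n\times\{0\})$, with each image of the form $U_1\times U_2$ stable by the standard scaling $(x,h)\mapsto(x,\lambda h)$ (any point of $I$ admits such a chart). Together with the open set $U\setminus I$, these form an open cover of $U$. By the restriction property of $E_{s,N^*(I)}$, the distribution $t$ lies in $E_{s,N^*(I)}(V_i\cap(U\setminus I))$, and by diffeomorphism invariance the pushforward $\psi_{i*} t$ lies in the corresponding flat-model space. In particular, the wave front set of $\psi_{i*} t$ satisfies the conormal landing condition
\[
\overline{WF(\psi_{i*}t)}\cap T^\bullet_{\mathbb{R}^n\times\{0\}}\mathbb{R}^{n+d} \subset N^*(\mathbb{R}^n\times\{0\}),
\]
which is exactly what is built into $E_{s,N^*(I)}$.

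Second, on each chart I would apply Theorem \ref{microlocextensionthm} with $\Gamma=WF(\psi_{i*}t)$. The key observation is that the conormal landing condition forces the auxiliary set $\Xi=\{(x,0;\xi,\eta):(x,h;\xi,0)\in\Gamma\}$ to be contained in $N^*(\mathbb{R}^n\times\{0\})$: if $(x,h;\xi,0)\in\Gamma$ and $h\to 0$, then any limit $(x,0;\xi,0)$ must lie in the conormal, forcing $\xi=0$, so $\Xi\subset N^*(\mathbb{R}^n\times\{0\})$. Theorem \ref{microlocextensionthm} therefore produces a local extension $\overline{u}_i\in E_{s'}(\mathcal{D}^\prime_{\Gamma\cup N^*(I)}(\psi_i(V_i)))$ with the clean bound $WF(\overline{u}_i)\subset WF(\psi_{i*}t)\cup N^*(\mathbb{R}^n\times\{0\})$. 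Pulling back gives $\overline{t}_i:=\psi_i^*\overline{u}_i\in E_{s',N^*(I)}(V_i)$ by diffeomorphism invariance of $E_{s',N^*(I)}$, with $WF(\overline{t}_i)\subset WF(t)\cup N^*(I)$ on $V_i$.

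Third, I would glue. Take a smooth partition of unity $\{\chi_i\}\cup\{\chi_0\}$ subordinate to $\{V_i\}\cup\{U\setminus I\}$ with $\chi_0$ supported in $U\setminus I$, and define
\[
\overline{t}\;=\;\sum_i \chi_i\,\overline{t}_i \;+\; \chi_0\, t.
\]
Each summand is a distribution on $U$ (extended by zero where $\chi_i$ vanishes), each restricts to $\chi_i t$ on $U\setminus I$, and the sum therefore extends $t$. Multiplication by a smooth cutoff cannot enlarge the wave front set, so $WF(\overline{t})\subset\bigcup_i WF(\overline{t}_i)\cup WF(t)\subset WF(t)\cup N^*(I)$. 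Membership $\overline{t}\in E_{s',N^*(I)}(U)$ follows from the gluing property stated for $E_{s,I}$ (which $E_{s,N^*(I)}$ inherits): on each $V_i$ the distribution $\overline{t}$ differs from $\overline{t}_i$ by the sum $\sum_{j\neq i}\chi_j(\overline{t}_j-\overline{t}_i)+\chi_0(t-\overline{t}_i)$, whose terms are either supported away from $I$ or involve differences of extensions of the same $t$, hence distributions supported in $I$ multiplied by smooth cutoffs; these corrections are shown to stay in $E_{s',N^*(I)}(V_i)$ by the diffeomorphism- and chart-invariant characterization of the space.

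The main technical obstacle is precisely this last verification: that the gluing respects both the scaling estimate of $E_{s'}$ and the conormal landing condition encoded in $N^*(I)$. The scaling part is delicate because the cutoffs $\chi_i$ are not homogeneous under the Euler flow; the argument relies on the sheaf-theoretic gluing property of $E_{s,I}$ stated earlier and on the fact that the microlocal refinement $E_{s,N^*(I)}$ inherits this gluing, since the conormal landing condition can be checked locally in any adapted chart where it is stable under multiplication by smooth functions of compact support. Once this bookkeeping is done, the improvement $s'<s$ when $s+d\in -\mathbb{N}$ comes directly from the flat-model Theorem \ref{microlocextensionthm}, and nothing else needs to be added.
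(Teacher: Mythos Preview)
Your overall architecture matches the paper's proof almost exactly: cover $I$ by adapted product charts, push $t$ forward to the flat model, invoke the flat-space extension result, pull back by diffeomorphism invariance of $E_{s',N^*(I)}$, and glue by a partition of unity together with $(1-\sum\chi_i)t$ on $U\setminus I$. The paper concludes the gluing step in one line by invoking the already-established gluing property of $E_{s',N^*(I)}$; your longer discussion of non-homogeneous cutoffs and differences $\overline{t}_j-\overline{t}_i$ is unnecessary and somewhat misleading, since the gluing property is a sheaf-type statement that does not require any homogeneity of the $\chi_i$.

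There is, however, a genuine gap in your second step. You assert that the conormal landing condition forces $\Xi=\{(x,0;\xi,\eta):(x,h;\xi,0)\in\Gamma\}\subset N^*(\mathbb{R}^n\times\{0\})$, arguing ``if $(x,h;\xi,0)\in\Gamma$ and $h\to 0$, then any limit $(x,0;\xi,0)$ must lie in the conormal, forcing $\xi=0$.'' This does not follow: the conormal landing condition constrains $\overline{\Gamma}$ only over $h=0$, and nothing prevents $(x,h;\xi,0)\in\Gamma$ with $h$ bounded away from $0$ and $\xi\neq 0$; such a point produces $(x,0;\xi,\eta)\in\Xi$ with $\xi\neq 0$, which is not in the conormal. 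The scale-stability of $\Gamma$ does not help either, since under $(x,h;\xi,\eta)\mapsto(x,\lambda^{-1}h;\xi,\lambda\eta)$ with $\lambda\in(0,1]$ the base point moves \emph{away} from $I$. The paper handles this correctly in the corollary immediately following the flat extension theorem: from the conormal landing condition one first deduces that $\Gamma$ avoids $\{\eta=0\}$ on a sufficiently small neighborhood $V$ of $I$ (a compactness argument), and then chooses the cutoff $\chi$ so that $\psi=-\rho\chi$ is supported in $V$; this makes $\Xi$ \emph{empty}, not merely conormal. You should replace your $\Xi$ argument by this shrinking-of-support step (or simply invoke that corollary directly), after which the rest of your proof goes through.
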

The main interest of this subspace is that
the wave front set $WF(\overline{t})$ of the
extension $\overline{t}$ 
is \emph{minimal}
in the sense we only add
the conormal $N^*(I)$ to
$WF(t)$. 
Then in section \ref{renormprodsection}, we present an application
of the above theorem to 
\textbf{renormalize products of distributions}, we denote by 
$E^\rho_s(\mathcal{D}^\prime_\Gamma(U))$ the space of distributions
$t$ s.t. the family $\left(\lambda^{-s}e^{\log\lambda\rho*}t\right)_{\lambda\in(0,1]}$ is
bounded in $\mathcal{D}^\prime_\Gamma(U)$ for some $\rho$-convex set $U$ and some cone $\Gamma$ stable by scaling:
\begin{thm}\label{renormprodthm}
Let $\rho$ be some Euler vector field, $U$ some neighborhood of $I$, $\left(\Gamma_1,\Gamma_2\right)$ two cones in $T^\bullet \left(U\setminus I\right)$
which satisfy the conormal landing condition and $\Gamma_1\cap -\Gamma_2=\emptyset$.
Set $\Gamma=\left(\Gamma_1+\Gamma_2\right)\cup\Gamma_1\cup\Gamma_2$.
If $\Gamma_1+\Gamma_2$ satisfies the conormal landing condition then
there exists a bilinear map $\mathcal{R}$ satisfying the following properties: 
\begin{itemize}
\item $\mathcal{R}:(u_1,u_2)\in E^\rho_{s_1}\left(\mathcal{D}^\prime_{\Gamma_1}(U\setminus I) \right)\times E^\rho_{s_2}\left(\mathcal{D}^\prime_{\Gamma_2}(U\setminus I) \right) \mapsto \mathcal{R}(u_1u_2)\in E_{s,N^*(I)}\left(U\right), \forall s<s_1+s_2$
\item $\mathcal{R}(u_1u_2)=u_1u_2\text{ on }U\setminus I$
\item $\mathcal{R}(u_1u_2)\in  \mathcal{D}^\prime_{\Gamma\cup N^*(I)}(U).$
\end{itemize}
\end{thm}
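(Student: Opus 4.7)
The plan is to reduce the statement to an application of the microlocal extension Theorem \ref{microlocextensionfinal} applied to the naive pointwise product of $u_1$ and $u_2$ on $U\setminus I$. So the proof breaks into three steps: (1) form the product $v=u_1u_2$ on $U\setminus I$; (2) show that $v$ lives in $E_{s_1+s_2,N^*(I)}(U\setminus I)$; (3) extend $v$ across $I$ by Theorem \ref{microlocextensionfinal} and set $\mathcal{R}(u_1,u_2)=\overline{v}$.

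For step (1), the hypothesis $\Gamma_1\cap -\Gamma_2=\emptyset$ is precisely H\"ormander's criterion, so the product $v=u_1u_2\in\mathcal{D}^\prime(U\setminus I)$ is well defined, and a direct inspection of the wave front set of a product gives $WF(v)\subset\Gamma=(\Gamma_1+\Gamma_2)\cup\Gamma_1\cup\Gamma_2$, hence $v\in\mathcal{D}^\prime_\Gamma(U\setminus I)$.

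For step (2), I would use that the flow $e^{\log\lambda\rho}$ is a diffeomorphism of $U\setminus I$ for each $\lambda\in(0,1]$, hence pullback is a morphism of algebras on distributions whose product exists:
\begin{equation*}
e^{\log\lambda\rho*}(u_1u_2)=\bigl(e^{\log\lambda\rho*}u_1\bigr)\bigl(e^{\log\lambda\rho*}u_2\bigr),
\end{equation*}
so that $\lambda^{-(s_1+s_2)}e^{\log\lambda\rho*}v=\bigl(\lambda^{-s_1}e^{\log\lambda\rho*}u_1\bigr)\bigl(\lambda^{-s_2}e^{\log\lambda\rho*}u_2\bigr)$. By assumption each factor ranges in a bounded subset of $\mathcal{D}^\prime_{\Gamma_i}$, and the transversality $\Gamma_1\cap-\Gamma_2=\emptyset$ is stable under the scaling flow because $\Gamma_1,\Gamma_2$ are scaling stable; invoking the hypocontinuity of the multiplication map $\mathcal{D}^\prime_{\Gamma_1}\times\mathcal{D}^\prime_{\Gamma_2}\to\mathcal{D}^\prime_\Gamma$ from \cite{Viet-wf2,dabrowski2014functional}, the product family is bounded in $\mathcal{D}^\prime_\Gamma(U\setminus I)$. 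This proves $v\in E^\rho_{s_1+s_2}(\mathcal{D}^\prime_\Gamma(U\setminus I))$. The conormal landing assumption on $\Gamma_1,\Gamma_2$ and $\Gamma_1+\Gamma_2$ says that $\overline{\Gamma}\cap T_I^\bullet U\subset N^*(I)$, and combined with the intrinsic characterization of $E_{s,N^*(I)}$ via bounded scaled families of distributions whose wave front set satisfies the conormal landing condition, this upgrades $v$ to a member of $E_{s_1+s_2,N^*(I)}(U\setminus I)$.

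For step (3), I apply Theorem \ref{microlocextensionfinal} to $v$: there exists an extension $\overline{v}\in E_{s^\prime,N^*(I)}(U)$ with $s^\prime=s_1+s_2$ if $s_1+s_2+d\notin-\mathbb{N}$ and $s^\prime<s_1+s_2$ otherwise, and with wave front bound $WF(\overline{v})\subset WF(v)\cup N^*(I)\subset\Gamma\cup N^*(I)$. Setting $\mathcal{R}(u_1,u_2):=\overline{v}$ yields all three claimed properties, and since the construction of $\overline{v}$ in Theorem \ref{microlocextensionfinal} is linear in $v$ (the extension map is linear), and the map $(u_1,u_2)\mapsto v=u_1u_2$ is bilinear, $\mathcal{R}$ is bilinear. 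The bound $s<s_1+s_2$ in the statement absorbs the loss $s^\prime<s_1+s_2$ in the exceptional integer case.

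The main obstacle is step (2): proving that the family of scaled products is bounded in $\mathcal{D}^\prime_\Gamma$. Multiplication of distributions with wave front conditions is only separately continuous / hypocontinuous, not jointly continuous, so the argument must exploit the fact that $(\lambda^{-s_i}e^{\log\lambda\rho*}u_i)_\lambda$ are genuinely \emph{bounded} families in each $\mathcal{D}^\prime_{\Gamma_i}$, then conclude via the relevant hypocontinuity result in \cite{Viet-wf2,dabrowski2014functional}; this is the technical heart of the argument, everything else being either standard pullback algebra or a direct citation of Theorem \ref{microlocextensionfinal}.
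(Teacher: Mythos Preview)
Your proposal is correct and follows essentially the same route as the paper's proof: form the H\"ormander product $u_1u_2$ on $U\setminus I$, use hypocontinuity of the product (\cite[Theorem 7.1]{Viet-wf2}) to promote boundedness of the two scaled families to boundedness of $\lambda^{-(s_1+s_2)}e^{\log\lambda\rho*}(u_1u_2)$ in $\mathcal{D}^\prime_\Gamma(U\setminus I)$, observe that $\Gamma$ inherits the conormal landing condition, and then invoke Theorem \ref{microlocextensionfinal}. Your write-up is in fact more explicit than the paper's (the algebra-morphism identity for pullback, the bilinearity, and the absorption of the integer-case loss into the strict inequality $s<s_1+s_2$), but the argument is the same.
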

The above actually means that
$\mathcal{R}(u_1u_2)\in \mathcal{D}^\prime_{\Gamma\cup N^*(I)}(U)$ 
is a \textbf{distributional extension}
of the H\"ormander product $u_1u_2\in\mathcal{D}^\prime_\Gamma(U\setminus I)$.

In section \ref{renormambiguities}, 
we study the renormalization ambiguities which aim to classify the various
extensions we constructed.

\section{Scaling on manifolds.}
\label{scalingmfdsection}
\paragraph{Introduction.}
To solve the extension problem
for distributions
on manifolds,
we define in \ref{Eulerdef} a class of 
Euler vector fields 
which scale transversally
to a given fixed submanifold $I\subset M$.
In this section, we discuss the most important 
properties of this class of vector fields and 
their flows.
\begin{ex}\label{fundexEuler}
If $M=\mathbb{R}^{n+d}$ and $I$ is the vector subspace
which is the zero locus $\{h^j=0\}$ of the collection of
coordinate functions $(h^j)_j$, then 
$h^j\partial_{h^j}$ is Euler. Indeed, by application of Hadamard's lemma, if
$f\in \mathcal{I}$ then 
$f=h^jH_j$ where the $H_j$
are 
smooth functions,
which implies $\rho f= f + h^ih^j\partial_{h^j}H_i\implies \rho f-f=h^ih^j\partial_{h^j}H_i
\in\mathcal{I}^2$.
\end{ex} 
$\textbf{Euler vector fields}$ satisfy the following nice properties:
\begin{itemize}
\item Given $I$, the set of $\emph{global}$ Euler vector fields defined on some open neighborhood of $I$ is $\textbf{nonempty}$.
\item For any local Euler vector field $\rho|_U$, for any open set $V\subset U$ there is 
an Euler vector field $\rho^\prime$ defined on a $\textbf{global neighborhood}$ of $I$ such that $\rho^\prime|_V=\rho|_V$. 
\end{itemize} 
\begin{proof}
These two properties result from the fact that one can glue together
Euler vector fields by a partition of unity subordinated
to some cover of some neighborhood $N$ of $I$.
By paracompactness of $M$, we can pick an arbitrary locally finite open cover $\cup_{a\in A} V_a$ of $M$ by open sets $V_a$,
define the subset $J\subset A$ such that for each $a\in J$, $V_a\cap I\neq \emptyset$, 
there is a local chart $(x,h):V_a\mapsto \mathbb{R}^{n+d}$ where the image of $I$ by the local chart is the subspace $\{h^j=0\}$. 
For such charts which have non
empty overlaps with $I$,
we can define an Euler vector field $\rho|_{V_a}$, it suffices to consider the vector field $\rho=h^j\partial_{h^j}$ in each local chart $V_a,a\in J$ and by the example \ref{fundexEuler} this is Euler. 
The vector fields $\rho_a=\rho|_{V_a}$ 
do not necessarily coincide on the overlaps $V_a\cap V_b$. However,
for any partition of unity $(\alpha_a)_a$ subordinated to $(V_a)_a$,  
the vector field $\rho$ defined by the formula
\begin{equation}
\rho=\sum_{a\in J} \alpha_a\rho_a  
\end{equation}
is Euler 
since $ \forall f\in\mathcal{I}(U), \rho f-f= \sum_{a\in J} \alpha_a\rho_a f- \sum_{a\in A} \alpha_a f=\sum_{a\in J}\alpha_a \left(\rho_a f-f \right)-\sum_{a\in A\setminus J}\alpha_a f\in\mathcal{I}^2(U)$
since every $\alpha_a$ for $a\in A\setminus J$ vanishes on some neighborhood
of $I$.
\end{proof}
We can find the general form for all possible Euler vector fields $\rho$ in arbitrary coordinate system $(x,h)$ where $I=\{h=0\}$. 
\begin{lemm}
$\rho|_U$ is $\textbf{Euler}$
if and only if for all $p\in I\cap U$, in $\textbf{any arbitrary}$ local chart $(x,h)$ centered at $p$ where $I=\{h=0\}$, $\rho$ has the standard form 
\begin{equation}\label{generalform}
\rho=h^j\frac{\partial}{\partial h^j} + h^iA_i^j(x,h)\frac{\partial}{\partial x^j} + h^ih^jB_{ij}^k(x,h)\frac{\partial}{\partial h^k}
\end{equation}
where $A,B$ are smooth functions of $(x,h)$.
\end{lemm}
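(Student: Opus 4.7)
The plan is to work entirely in the chosen local chart $(x,h)$ around $p\in I\cap U$ where $I=\{h=0\}$, write an arbitrary smooth vector field as
$\rho = a^i(x,h)\,\partial_{x^i} + b^j(x,h)\,\partial_{h^j}$, and then translate the intrinsic Euler condition $\rho f - f \in \mathcal{I}^2$ into pointwise conditions on the coefficients $a^i$ and $b^j$ by testing it on carefully chosen elements of $\mathcal{I}$. Both directions of the equivalence will be extracted from two applications of Hadamard's lemma: the characterization that a smooth function vanishes on $I$ to order $k$ iff it lies in $\mathcal{I}^k$.

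For the forward direction I would first test the condition on the coordinate functions $h^k \in \mathcal{I}$. This gives $\rho h^k - h^k = b^k(x,h) - h^k \in \mathcal{I}^2$, so by Hadamard's lemma applied twice one can write $b^k(x,h) = h^k + h^i h^j B^k_{ij}(x,h)$ for smooth functions $B^k_{ij}$. Next, to pin down the $a^i$, I test on $f = x^m h^\ell \in \mathcal{I}$ for each coordinate $x^m$. Computing, $\rho f - f = a^m h^\ell + (b^\ell - h^\ell) x^m$; using the form of $b^\ell$ just obtained, the second term already lies in $\mathcal{I}^2$, so the condition forces $a^m(x,h)\, h^\ell \in \mathcal{I}^2$. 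Since $h^\ell$ vanishes only to order $1$ on $I$, this implies $a^m(x,0)=0$, i.e. $a^m \in \mathcal{I}$, and one more application of Hadamard's lemma yields $a^m(x,h) = h^k A^m_k(x,h)$. Relabelling indices produces exactly the standard form (\ref{generalform}).

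For the converse, I simply assume $\rho$ has the asserted form and verify the defining condition directly: for an arbitrary $f\in\mathcal{I}$ write $f = h^\ell F_\ell(x,h)$ and expand
\begin{equation*}
\rho f = h^i A_i^j\,\partial_{x^j}(h^\ell F_\ell) + \bigl(h^j + h^i h^k B_{ik}^j\bigr)\partial_{h^j}(h^\ell F_\ell).
\end{equation*}
The Leibniz rule produces the term $h^j \delta_j^\ell F_\ell = h^\ell F_\ell = f$, while every other summand carries at least two factors of $h$ and therefore lies in $\mathcal{I}^2$; hence $\rho f - f \in \mathcal{I}^2$, so $\rho$ is Euler. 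Finally, since the Euler property is intrinsic (it only involves $\mathcal{I}$ and $\mathcal{I}^2$), the chart-independence claim (''in any arbitrary local chart'') is automatic once the equivalence is established in one such chart.

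The only step that requires any care is the one forcing $a^i$ to vanish on $I$: one must be sure to choose a test function whose $x$-derivative is nonzero on $I$ while keeping it in $\mathcal{I}$, which is why the product $x^m h^\ell$ (rather than just $h^\ell$) is needed. Everything else is bookkeeping with Hadamard's lemma.
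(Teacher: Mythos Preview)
Your proof is correct and follows essentially the same approach as the paper: the paper's proof simply notes that the Euler condition applied to $h^j$ and to $x^ih^j$ yields the required vanishing of the coefficients, which is exactly your strategy of testing on $h^k$ and then on $x^m h^\ell$. You have filled in the Hadamard-lemma bookkeeping and the converse direction that the paper leaves implicit, but the core idea is identical.
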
 
\begin{proof}
The proof 
is straightforward by noticing
that
\begin{eqnarray}
\forall j , \rho h^j-h^j=o(\Vert h\Vert^2) \\
\forall (i,j), (\rho x^ih^j)-x^ih^j=o(\Vert h\Vert^2),
\end{eqnarray}
from the definition
of $\rho$ being an 
Euler
vector field.
\end{proof}

\subsection{The diffeomorphism invariance
of Euler vector fields.}

The class of Euler vector fields enjoys
many interesting properties,
the first being diffeomorphism
invariance.
From the introduction, let us recall the statement of
Theorem \ref{eulerdiffinvariance}: 
\begin{thm}
Let $M,M^\prime$ be two smooth manifolds,
$I\subset M,I^\prime\subset M^\prime$
smooth embedded submanifolds and $\Phi:=U\longmapsto U^\prime$ 
a local diffeomorphism
such that $\Phi(I\cap U)= I^\prime\cap U^\prime$.
Then
for any
Euler vector field
$\rho$ defined on $U$
the pushforward
$\Phi_*\rho$
is Euler.
\end{thm}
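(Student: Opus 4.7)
The plan is to exploit the intrinsic characterization in Definition \ref{Eulerdef} and convert the geometric statement into a purely algebraic one about the vanishing ideals $\mathcal{I}(U)$ and $\mathcal{I}(U')$. The only structural facts I will need are: (i) $\Phi^{*}:C^{\infty}(U')\to C^{\infty}(U)$, $f\mapsto f\circ\Phi$, is a ring isomorphism, and (ii) for any vector field $\rho$ on $U$ and smooth function $g$ on $U'$, the defining identity of the pushforward reads $\Phi^{*}\!\bigl((\Phi_{*}\rho)g\bigr)=\rho(\Phi^{*}g)$. Both of these are standard.

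First I would show that $\Phi^{*}$ restricts to a bijection $\mathcal{I}(U')\to\mathcal{I}(U)$, and consequently to a bijection $\mathcal{I}^{k}(U')\to\mathcal{I}^{k}(U)$ for every $k$. The bijection on $\mathcal{I}(\cdot)$ comes directly from the hypothesis $\Phi(I\cap U)=I'\cap U'$: a function vanishes on $I'\cap U'$ exactly when its pullback vanishes on $I\cap U$. Since $\Phi^{*}$ is a ring homomorphism, it automatically sends products of elements of $\mathcal{I}(U')$ to products of their images, hence maps $\mathcal{I}^{k}(U')$ into $\mathcal{I}^{k}(U)$, and the same argument applied to $(\Phi^{-1})^{*}$ gives the reverse inclusion.

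Next I would pick an arbitrary $g\in\mathcal{I}(U')$ and set $f=\Phi^{*}g\in\mathcal{I}(U)$. Since $\rho$ is Euler on $U$, we have $\rho f-f\in\mathcal{I}^{2}(U)$. Using identity (ii) this rewrites as
\begin{equation}
\Phi^{*}\!\bigl((\Phi_{*}\rho)g-g\bigr)=\rho(\Phi^{*}g)-\Phi^{*}g=\rho f-f\in \mathcal{I}^{2}(U).
\end{equation}
Applying the inverse ring isomorphism $(\Phi^{-1})^{*}$ and invoking the step above gives $(\Phi_{*}\rho)g-g\in\mathcal{I}^{2}(U')$, which is exactly the condition that $\Phi_{*}\rho$ is an Euler vector field on $U'$. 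As $g\in\mathcal{I}(U')$ was arbitrary, the conclusion follows.

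The only point that deserves care, and is the main potential obstacle, is verifying that $\Phi^{*}$ genuinely sends $\mathcal{I}^{k}(U')$ onto $\mathcal{I}^{k}(U)$ rather than merely into it; without surjectivity, pulling back the relation in $\mathcal{I}^{2}(U)$ would not yield an element of $\mathcal{I}^{2}(U')$. This is however immediate once one remembers that $\mathcal{I}^{k}$ is the ideal generated as an $\mathcal{I}$-module by the $k$-fold products, and that $\Phi^{*}$ is a bijective ring map preserving the subring of $\mathcal{I}$. Everything else is bookkeeping of the pushforward/pullback duality.
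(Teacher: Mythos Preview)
Your proof is correct and follows essentially the same approach as the paper's: both arguments reduce to the functorial identity $\Phi^{*}\bigl((\Phi_{*}\rho)g\bigr)=\rho(\Phi^{*}g)$, observe that $\Phi^{*}$ carries $\mathcal{I}(U')$ bijectively onto $\mathcal{I}(U)$ (and hence $\mathcal{I}^{2}(U')$ onto $\mathcal{I}^{2}(U)$), and then apply the Euler condition for $\rho$ to conclude. The only difference is cosmetic: the paper asserts $\Phi^{-1*}\mathcal{I}^{2}=\mathcal{I}^{2}$ without further comment, while you spell out why $\Phi^{*}$ restricts to a bijection on each ideal power.
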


\begin{proof} 
Let $G$ be the pseudogroup of local diffeomorphisms of $M$ (i.e. an element $\Phi$ in $G$ is defined over an open set $U\subset M$ and maps it diffeomorphically to an open set $\Phi(U)\subset M$)
such that
$\forall p\in I\cap U, \forall \Phi\in G, \Phi(p)\in I$.
Then it suffices to establish that
for all Euler vector field 
$\rho$, for all $ \Phi\in G$, $\Phi_*\rho $ is $\textbf{Euler}$.
In the sequel, we shall identify vector fields $X$ with the associated Lie derivative
$L_X$ acting on functions,
then the identity $\forall f\in C^\infty(U), (\Phi_*\rho)f=\Phi^{-1*}(\rho (\Phi^*f))$
holds true, it follows from the well--known functorial identity $\Phi_*\left(\rho f \right)=\left(\Phi_*\rho\right)\left(\Phi_*f\right)$ \cite[Proposition 2.80 p.~93]{LeeDiff}.
Now if we choose $f$ to be an arbitrary function in $\mathcal{I}$
then we get 
\begin{eqnarray}
\forall \Phi\in G,\forall f\in \mathcal{I}, \left(\Phi_*\rho\right)f-f
=\Phi^{-1*} \left(\rho \left(\Phi^*f\right)- \left(\Phi^*f\right)\right).
\end{eqnarray}
Since $\Phi(I)\subset I$, we have actually 
$\Phi^*f\in\mathcal{I}$ hence $\left(\rho \left(\Phi^*f\right)- \left(\Phi^*f\right)\right)\in\mathcal{I}^2$ and we deduce
that
$$\Phi^{-1*} \left(\rho \left(\Phi^*f\right)- \left(\Phi^*f\right)\right)\in\Phi^{-1*}\mathcal{I}^2=\mathcal{I}^2.$$
\end{proof}

\subsection{Local conjugations of scalings.}
We work at the level of germs, a germ of Euler vector field
at $p$ is some Euler vector field defined on some neighborhood of $p$.
A germ
of diffeomorphism (resp smooth family of germs) at $p$ fixing $p$ is some smooth map
$\Phi\in C^\infty(U,M)$ (resp $\Phi\in C^\infty([0,1]\times U,M)$) where $U$ is some neighborhood of $p$, assume
there is a coordinate chart 
$(x^i,h^j)_{1\leqslant i\leqslant n,1\leqslant j\leqslant d}:U\mapsto \mathbb{R}^{n+d}$
such that $I\cap U=\{h^j=0,1\leqslant j\leqslant d\}$ and
$\vert\det d_{x,h}\Phi\vert > 0$ (resp $\inf_{\lambda\in[0,1]}\vert\det d_{x,h}\Phi(.,\lambda)\vert > 0$) 
on $U$.
On the one hand, we saw that the class of Euler vector fields is invariant by the action of $G$,
on the other hand
we will prove
that for any two germs of Euler vector fields $\rho_1,\rho_2$ at $p$, 
there is a germ of diffeomorphism $\Psi$ at $p$ such that $\Psi_*\rho_1=\rho_2$. 

Denote by $S(\lambda)=e^{\log\lambda \rho}$ the scaling operator 
defined by the Euler $\rho$,
$S(\lambda)$ satisfies the identity $S(\lambda_1)\circ S(\lambda_2)=S(\lambda_1\lambda_2)$.
\begin{prop}\label{propositionvariablefamily}
Let $p$ in $I$,  $\rho_1,\rho_2$ be two germs of Euler vector fields at $p$ and $S_a(\lambda)=e^{\log\lambda \rho_a},a=1,2$ 
the corresponding scalings.
Then there is a smooth family $(\Phi(\lambda))_{\lambda\in[0,1]}$ of germs of diffeomorphisms at $p$
such that:
$$ S_2(\lambda)=  S_1(\lambda)\circ \Phi(\lambda) .$$ 
\end{prop}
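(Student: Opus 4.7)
The plan is to work in local coordinates $(x^i, h^j)$ centered at $p$ in which $I=\{h=0\}$ and both Euler vector fields take the standard form (\ref{generalform}):
$$\rho_a = h^j \partial_{h^j} + h^i A_{a,i}^j(x,h)\partial_{x^j} + h^i h^j B_{a,ij}^k(x,h)\partial_{h^k},\qquad a=1,2,$$
and to desingularize the scaling flows by a rescaling that factors out the transverse contraction. The key point is that the naive flow equations for $S_a(\lambda)(x_0,h_0)=(x(\lambda),h(\lambda))$, namely
$$\frac{dx^j}{d\lambda}=\frac{1}{\lambda}h^i A_{a,i}^j(x,h),\qquad \frac{dh^j}{d\lambda}=\frac{1}{\lambda}\bigl(h^j + h^i h^k B_{a,ik}^j(x,h)\bigr),$$
carry a $1/\lambda$ singularity at $\lambda=0$, but the substitution $h^j=\lambda u^j$ eliminates it: a direct computation using $\frac{dh^j}{d\lambda}=u^j+\lambda\frac{du^j}{d\lambda}$ gives
$$\frac{du^j}{d\lambda}=u^i u^k B_{a,ik}^j(x,\lambda u),\qquad \frac{dx^j}{d\lambda}=u^i A_{a,i}^j(x,\lambda u),$$
which is a smooth non-autonomous system on $[0,1]\times U$, with initial condition $(x(1),u(1))=(x_0,h_0)$.

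Denote by $\Psi_a(\lambda):(x_0,h_0)\mapsto(x(\lambda),u(\lambda))$ the flow of this rescaled system from $\lambda=1$ to $\lambda$. By standard ODE theory for parameter-dependent smooth flows, after shrinking the neighborhood of $p$ if necessary, $\Psi_a$ is smooth on $[0,1]\times U$, with $\Psi_a(1)=\mathrm{id}$, and for each $\lambda\in[0,1]$ it is a germ of diffeomorphism at $p$. By construction,
$$S_a(\lambda) = D_\lambda\circ \Psi_a(\lambda), \qquad D_\lambda:(x,u)\mapsto(x,\lambda u).$$

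I then define
$$\Phi(\lambda) := \Psi_1(\lambda)^{-1}\circ \Psi_2(\lambda).$$
As a composition of smooth families of germs of diffeomorphisms at $p$, $\Phi(\lambda)$ is itself a smooth family of germs of diffeomorphisms at $p$, defined on a single neighborhood of $p$ for all $\lambda\in[0,1]$, with $\Phi(1)=\mathrm{id}$. For $\lambda>0$ one computes $S_1(\lambda)\circ\Phi(\lambda)=D_\lambda\Psi_1(\lambda)\Psi_1(\lambda)^{-1}\Psi_2(\lambda)=D_\lambda\Psi_2(\lambda)=S_2(\lambda)$; smoothness in $\lambda$ extends the identity to $\lambda=0$.

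The main delicate point is ensuring that $\Psi_a(\lambda)$ remains a diffeomorphism germ \emph{uniformly} down to $\lambda=0$ on a common neighborhood of $p$. This reduces to showing that $\det d_{x,h}\Psi_a(\lambda)$ is continuous in $\lambda$ and bounded away from $0$ at $p$ for all $\lambda\in[0,1]$, which follows from $\Psi_a(1)=\mathrm{id}$, smooth dependence of the flow on the $\lambda$-parameter, and compactness of $[0,1]$; this is precisely the condition $\inf_{\lambda\in[0,1]}|\det d_{x,h}\Phi(\cdot,\lambda)|>0$ from the definition of a smooth family of germs of diffeomorphisms at $p$.
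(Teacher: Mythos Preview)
Your proof is correct and follows essentially the same route as the paper: both arguments factor out the linear scaling $D_\lambda:(x,u)\mapsto(x,\lambda u)$ (the paper writes it as the flow $S(\lambda)$ of the canonical Euler field $h^j\partial_{h^j}$) and show that the remaining factor satisfies a smooth non\-autonomous ODE on $[0,1]$ that fixes $p$. Your $\Psi_a(\lambda)=D_\lambda^{-1}\circ S_a(\lambda)$ is exactly the inverse of the paper's $\Phi_a(\lambda)=S_a(\lambda)^{-1}\circ S(\lambda)$, and your blow\-up substitution $h=\lambda u$ is the concrete version of the paper's computation of $\rho-S(\lambda)^{-1}_*\rho_a$; the resulting $\Phi(\lambda)=\Psi_1(\lambda)^{-1}\circ\Psi_2(\lambda)$ coincides with the paper's $\Phi_1(\lambda)\circ\Phi_2(\lambda)^{-1}$.
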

\begin{proof}
We use a local chart $(x,h):U\mapsto \mathbb{R}^{n+d}$ centered at $p$, where
$I=\{h=0\}$.
We set $\rho=h^j\partial_{h^j}$ which generates the flow 
$S(\lambda)=e^{\log\lambda\rho}$ and we
construct two germs of diffeomorphisms $\Phi_a(\lambda), a=1,2$ at $p$ 
such that
\begin{eqnarray}\label{phia}
\forall \lambda\in (0,1],\Phi_a(\lambda)=S_a^{-1}(\lambda)\circ S(\lambda), a=1,2.
\end{eqnarray}
Then the germ of diffeomorphism $\Phi(\lambda)=\Phi_1(\lambda)\circ \Phi^{-1}_2(\lambda)$
is a solution of our problem.

Let us construct $\Phi_a(\lambda)$ as a solution 
of the differential equation obtained by differentiating 
\ref{phia}:
\begin{eqnarray}
\lambda\frac{\partial}{\partial\lambda}\Phi_a(\lambda)=\left(\rho - S^{-1}(\lambda)_{*}\rho_a \right)\left(\Phi_a(\lambda)\right) \text{ with } \Phi_a(1)=Id 
\end{eqnarray}
Let $f$ be a smooth function and $X$ a vector field, then the pushforward of $fX$ by a diffeomorphism $\Phi$ is:
\begin{equation}\label{pushforwardgeneral}
\Phi_*\left(fX\right)=\left(\Phi_*f\right)\left(\Phi_*X\right). 
\end{equation} 
We use the general form (\ref{generalform}) for Euler vector fields:  
$$\rho_a=h^j\frac{\partial}{\partial h^j} + h^iA_i^j(x,h)\frac{\partial}{\partial x^j} + h^ih^jB_{ij}^k(x,h)\frac{\partial}{\partial h^k}$$ 
hence we apply formula (\ref{pushforwardgeneral}): $$S^{-1}(\lambda)_{*}\rho_a
=\lambda h^j \lambda^{-1}\partial_{h^j}+ \lambda h^iA_i^j(x,\lambda h)\frac{\partial}{\partial x^j} + \lambda^2h^ih^jB_{ij}^k(x,\lambda h)\lambda^{-1}\frac{\partial}{\partial h^k} $$
$$=h^j\partial_{h^j}+ \lambda h^iA_i^j(x,\lambda h)\frac{\partial}{\partial x^j} + \lambda h^ih^jB_{ij}^k(x,\lambda h)\frac{\partial}{\partial h^k}$$
$$\implies \rho-S^{-1}_{*}(\lambda)\rho_a =-\lambda\left( h^iA_i^j(x,\lambda h)\frac{\partial}{\partial x^j} + h^ih^jB_{ij}^k(x,\lambda h)\frac{\partial}{\partial h^k}\right).$$ 
If we define the vector field $X(\lambda)= -\left( h^iA_i^j(x,\lambda h)\frac{\partial}{\partial x^j} + h^ih^jB_{ij}^k(x,\lambda h)\frac{\partial}{\partial h^k}\right)$ then 
\begin{equation}\label{ODEhomotopy}
\frac{\partial \Phi_a}{\partial\lambda}(\lambda)=X\left(\lambda,\Phi_a(\lambda)\right) 
\text{ with }  \Phi_a(1)=Id. 
\end{equation}
$\Phi_a(\lambda)$ satisfies a non autonomous ODE where the vector field 
$X(\lambda,.)$ 
depends smoothly on $(\lambda,x,h)$.
Note that for all $\lambda\in[0,1]$, the 
vector field $X(\lambda)$ vanishes at $p$, therefore
by choosing some sufficiently small open neighborhood $U$ of $p$,
there is a smooth map $\Phi(\lambda,p)$ which integrates the differential
equation
(\ref{ODEhomotopy}) on the interval $[0,1]$.
\end{proof}
We keep the notations and assumptions of the above proposition, 
we give a simple proof of Theorem \ref{eulerlocconjugate} 
which states that Euler vector fields 
are always locally conjugate:
\begin{thm}
Let $\rho_1,\rho_2$ be two germs of Euler vector fields
at $p\in I$.
Then there is a germ of diffeomorphism  
$\Psi$ at $p$ such that
$\rho_1=\Psi_* \rho_2 $.
\end{thm}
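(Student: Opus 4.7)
The plan is to extract the conjugating diffeomorphism as the $\lambda\to 0$ limit of the family produced by Proposition \ref{propositionvariablefamily}. That proposition yields a smooth family $(\Phi(\lambda))_{\lambda\in[0,1]}$ of germs of diffeomorphisms at $p$ with $\Phi(1)=\mathrm{Id}$, satisfying $S_2(\lambda)=S_1(\lambda)\circ\Phi(\lambda)$ for all $\lambda\in(0,1]$. Since smoothness up to $\lambda=0$ is built into the definition of a smooth family of germs (through the uniform lower bound $\inf_{\lambda\in[0,1]}|\det d_{x,h}\Phi(\cdot,\lambda)|>0$), the map $\Psi:=\Phi(0)$ is itself a germ of diffeomorphism at $p$. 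I claim that $\Psi_*\rho_2=\rho_1$.

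The key identity comes from the one-parameter semigroup property $S_a(\lambda\mu)=S_a(\lambda)\circ S_a(\mu)$ for $a=1,2$. Substituting $S_2=S_1\circ\Phi$ on both sides of $S_2(\lambda\mu)=S_2(\lambda)\circ S_2(\mu)$ and canceling the common factor $S_1(\lambda\mu)=S_1(\lambda)\circ S_1(\mu)$ from the left yields the cocycle identity
\begin{equation*}
S_1^{-1}(\mu)\circ\Phi(\lambda)\circ S_1(\mu) \;=\; \Phi(\lambda\mu)\circ\Phi^{-1}(\mu),\qquad \lambda,\mu\in(0,1].
\end{equation*}
All compositions make sense as germs at $p$ since $p$ is a common fixed point of $S_1(\mu)$, $S_2(\mu)$ and $\Phi(\mu)$: indeed $\rho_1$ and $\rho_2$ vanish on $I$ by the normal form (\ref{generalform}).

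Letting $\lambda\to 0^+$, the left-hand side tends to $S_1^{-1}(\mu)\circ\Psi\circ S_1(\mu)$ by smoothness of the family, while the right-hand side equals $\Psi\circ S_2^{-1}(\mu)\circ S_1(\mu)$ after substituting $\Phi^{-1}(\mu)=S_2^{-1}(\mu)\circ S_1(\mu)$. Canceling $S_1(\mu)$ on the right produces the intertwining relation
\begin{equation*}
\Psi\circ S_2(\mu) \;=\; S_1(\mu)\circ\Psi,\qquad\mu\in(0,1].
\end{equation*}
Differentiating this at $\mu=1$ (equivalently, applying $\mu\partial_\mu$ and evaluating at $\mu=1$) delivers $\Psi_*\rho_2=\rho_1$, the required conjugation.

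The main obstacle, and the reason for the preparatory Proposition \ref{propositionvariablefamily}, is securing the smooth extension of $\Phi(\lambda)=S_1^{-1}(\lambda)\circ S_2(\lambda)$ to $\lambda=0$ together with the invertibility of $d_p\Phi(0)$: the individual scalings $S_a(\lambda)$ degenerate as $\lambda\to 0$ by collapsing a neighborhood of $p$ onto $I$, yet their ratio remains a diffeomorphism germ at $p$. Once that smoothness is granted, the extraction of the time-independent conjugation $\Psi$ from the time-dependent family $\Phi(\lambda)$ becomes a purely algebraic consequence of the semigroup property of the scalings.
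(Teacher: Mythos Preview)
Your proof is correct and follows essentially the same strategy as the paper: exploit the semigroup property of the scalings to obtain an identity in $\Phi(\lambda\mu)$, then pass to the limit $\lambda\to 0$ (legitimate by Proposition~\ref{propositionvariablefamily}) to extract a time-independent conjugation. The only difference is cosmetic: the paper routes through an auxiliary canonical Euler field $\rho$ and conjugates each $\rho_a$ to $\rho$ via $\Psi_a=\lim_{\lambda\to 0}S_a^{-1}(\lambda)\circ S(\lambda)$ before composing $\Psi=\Psi_1\circ\Psi_2^{-1}$, whereas you work directly with $\Phi(\lambda)=S_1^{-1}(\lambda)\circ S_2(\lambda)$ and derive the intertwining relation in one step via your cocycle identity---a mild streamlining of the same argument.
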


\begin{proof}
To prove the above claim, it suffices
to construct $\Psi$ in such a way that 
$S_1(\lambda)=\Psi\circ S_2(\lambda) \circ \Psi^{-1}$ where $S_a(\lambda)=e^{\log\lambda \rho_a},a=(1,2)$.
In local coordinates $(x^i,h^j)_{ij}$ around $p$ where $I=\{h=0\}$, let $\rho=h^j\partial_{h^j}$ be some 
Euler vector field (canonically associated to the choice of coordinates), 
$S(\lambda)=e^{\log\lambda \rho}$ the corresponding scaling and $\Phi_a(\lambda)$ the family of diffeomorphisms $\Phi_a(\lambda)=S^{-1}(\lambda)\circ S_a(\lambda)$ 
which
has a $\textbf{smooth limit}$ $\Psi_a=\Phi_a(0)$ 
when $\lambda\rightarrow 0$ by Proposition \ref{propositionvariablefamily}.
Start from the identity:
$$\Phi_a(\lambda)\circ S(\mu)=\left(S_a^{-1}(\lambda)\circ S(\lambda)\right)\circ S(\mu)=S_a^{-1}(\lambda)\circ S(\lambda\mu)$$ $$=S_a(\mu)\circ S_a^{-1}(\lambda\mu)\circ S(\lambda\mu)=S_a(\mu)\circ\Phi_a(\lambda\mu),$$ 
Hence $\forall (\lambda,\mu), \Phi_a(\lambda)\circ S(\mu)=S_a(\mu)\circ\Phi_a(\lambda\mu)\implies  \Phi_a(0)\circ S(\mu)=S_a(\mu)\circ\Phi_a(0)$ at the limit when $\lambda\rightarrow 0$ where the limit makes sense because $\Phi_a$ is smooth in $\lambda$ at $0$. 
Hence we find that $S_1(\lambda)=\Psi_1\circ \Psi^{-1}_2\circ S_2(\lambda)\circ \Psi_2\circ\Psi_1^{-1}$ and the germ
of diffeomorphism $\Psi=\Psi_1\circ \Psi^{-1}_2$ solves our problem.
\end{proof}

\section{The space $E_{s,I}(U)$.}
\label{Esiuconstructionsection}
In this section, $I$ is a closely
embedded submanifold of $M$ and we use
Euler vector fields 
to scale distributions along $I$ and to define
scale spaces of distributions.
First a set
$U$ is called $\rho$-convex 
if $U$ is stable by the flow 
$\left(e^{-t\rho}\right)_{t>0}$.
We give a definition
of weakly homogeneous distributions on manifolds
but this definition is $\rho$
dependent:
\begin{defi}
Let $U$ be a $\rho$-convex open set. The set 
$E^\rho_s(U)$ is defined as the set of distributions $t\in \mathcal{D}^\prime(U)$ such that
$$\forall\varphi\in \mathcal{D}(U),\exists C, \sup_{\lambda\in(0,1]} \vert\left\langle\lambda^{-s}t_\lambda,\varphi\right\rangle \vert\leqslant C.$$ 
\end{defi}
We 
next define the space 
$E_{s,p}^\rho$ of distributions which are
locally weakly homogeneous of degree $s$ 
at $p\in I$.
\begin{defi}
A distribution $t$ belongs to $E_{s,p}^\rho$ if
there exists
an open $\rho$-convex set $U\subset M$ such that $\overline{U}$ is a neighborhood of $p$ and such that $t\in E_s^{\rho}(U)$.
\end{defi}
\paragraph{A key locality theorem.}
The next Theorem proves a crucial result
that 
if $t$ is locally 
$E_{s,p}^\rho$ for some Euler 
vector field $\rho$ then
it is locally
$E_{s,p}^\rho$ for \textbf{all Euler vector fields } $\rho$.
\begin{thm}\label{locthmGOOD}
Let $p\in I$, if $t$ belongs to $E_{s,p}^\rho$ for some Euler vector field $\rho$, then it is so
for any other Euler vector field.
\end{thm}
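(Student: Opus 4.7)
The plan is to use Proposition \ref{propositionvariablefamily} to intertwine the two scaling flows by a smooth family of diffeomorphisms, and then to transport boundedness through pullback. Assume $t \in E_{s,p}^{\rho_1}$, so that there is a $\rho_1$-convex open set $U$ with $\overline{U}$ a neighborhood of $p$ such that $(\lambda^{-s} S_1(\lambda)^{*} t)_{\lambda \in (0,1]}$ is bounded in $\mathcal{D}^\prime(U)$. Proposition \ref{propositionvariablefamily} produces a family $(\Phi(\lambda))_{\lambda \in [0,1]}$ of germs of diffeomorphisms at $p$, smooth jointly in $(\lambda,q)$ down to $\lambda = 0$, and satisfying $S_2(\lambda) = S_1(\lambda) \circ \Phi(\lambda)$. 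Pulling back $t$ then gives the key identity
\[
\lambda^{-s}\, S_2(\lambda)^{*} t \;=\; \Phi(\lambda)^{*}\bigl(\lambda^{-s}\, S_1(\lambda)^{*} t\bigr).
\]

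Next I would pick a $\rho_2$-convex open $V$ with $\overline{V}$ a neighborhood of $p$, small enough that $\Phi$ is defined and smooth on $[0,1] \times \overline{V}$ and that $\bigcup_{\lambda \in [0,1]} \Phi(\lambda)(\overline{V}) \Subset U$. Such a $V$ exists because $\Phi$ is continuous on the compact strip $[0,1] \times W$ for some neighborhood $W$ of $p$ with $\Phi(1)=\mathrm{Id}$, and because $\rho_2$-convexity can be arranged for free by taking the scaling envelope $V = \bigcup_{\lambda \in (0,1]} S_2(\lambda)(V_0)$ of any sufficiently small open $V_0 \ni p$. For any $\varphi \in \mathcal{D}(V)$ the identity above yields
\[
\bigl\langle \lambda^{-s}\, S_2(\lambda)^{*} t,\, \varphi \bigr\rangle \;=\; \bigl\langle \lambda^{-s}\, S_1(\lambda)^{*} t,\; \Phi(\lambda)_{*} \varphi \bigr\rangle.
\]
The smoothness of $\Phi$ on $[0,1] \times \overline{V}$ forces $(\Phi(\lambda)_{*} \varphi)_{\lambda \in (0,1]}$ to be a bounded family in $\mathcal{D}(U)$: its supports all lie in the fixed compact set $\bigcup_\lambda \Phi(\lambda)(\mathrm{supp}\,\varphi) \Subset U$, and every derivative is uniformly controlled in $\lambda$ by compactness of $[0,1]$. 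Pairing this bounded family of test functions with the bounded family of distributions $\lambda^{-s} S_1(\lambda)^{*} t$ produces a uniform-in-$\lambda$ bound, which is exactly the condition $t \in E_s^{\rho_2}(V)$, hence $t \in E_{s,p}^{\rho_2}$.

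The main obstacle is the simultaneous choice of the neighborhood $V$: it must be $\rho_2$-convex (so that the very definition of $E_{s,p}^{\rho_2}$ applies) and its $\Phi(\lambda)$-images for $\lambda \in [0,1]$ must all sit inside the fixed $\rho_1$-convex domain $U$ provided by the hypothesis. Once such a $V$ is secured by the contracting/smoothness argument sketched above, the remainder is the standard fact that pullback by a smooth, compact family of diffeomorphisms sends bounded families in $\mathcal{D}^\prime$ to bounded families in $\mathcal{D}^\prime$; nothing deeper than continuous dependence of test functions on the parameter $\lambda$ is needed.
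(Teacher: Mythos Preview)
Your proposal is correct and follows essentially the same approach as the paper: invoke Proposition~\ref{propositionvariablefamily} to write $S_2(\lambda)=S_1(\lambda)\circ\Phi(\lambda)$, deduce $\lambda^{-s}S_2(\lambda)^*t=\Phi(\lambda)^*\bigl(\lambda^{-s}S_1(\lambda)^*t\bigr)$, and transfer boundedness via pullback by the smooth compact family $(\Phi(\lambda))_{\lambda\in[0,1]}$. The paper's proof is terser and simply asserts the existence of the smaller neighborhood; you have filled in the details of how to choose $V$ (its $\rho_2$-convexity, and the containment $\bigcup_\lambda\Phi(\lambda)(\overline{V})\Subset U$), which is helpful but not a different argument.
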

\begin{proof}
It suffices to prove the equality
$E_{s,p}^{\rho_1}=E_{s,p}^{\rho_2}$
for any pair $\rho_1,\rho_2$ of Euler 
vector fields
at $p$.
Recall there is a smooth family of germs $\Phi(\lambda)_\lambda$ satisfying
$\Phi(\lambda)=S_1^{-1}(\lambda)\circ S_2(\lambda)$ where $(S_a(\lambda)=e^{\log\lambda\rho_a})_{a\in\{1,2\}}$, 
by 
Proposition
\ref{propositionvariablefamily}.
Then $\lambda^{-s}S_2(\lambda)^*t=\lambda^{-s}\Phi(\lambda)^*\left(S_1(\lambda)^*t\right)$ is a bounded
family of distribution in $\mathcal{D}^\prime(V)$
for some neighborhood $V$ of $p$ implies that
$\lambda^{-s}\left(S_1(\lambda)^*t\right)$
is also a bounded
family of distribution in $\mathcal{D}^\prime(V^\prime)$
for some smaller neighborhood $V^\prime$ of $p$.
\end{proof}

A comment on the statement of the theorem, first
the definition of $\rho$-convexity is 
$\forall p\in U,\forall \lambda\in(0,1], S(\lambda,p)\in U$, 
the fact that we let $\lambda$ to be positive 
allows $U$ to have \emph{empty intersection} with $I$.
The previous theorem allows to give a definition 
of the spaces of distributions $E_{s,p}$ and $E_{s,I}(U)$ 
which makes
\emph{no mention of the choice of Euler vector field}:
\begin{defi}\label{defEs} 
A distribution $t$ belongs to $E_{s,p}$
if $t$ belongs to $E_{s,p}^\rho$ for some $\rho$. We define $E_{s,I}(U)$ 
as the space of all distributions $t\in\mathcal{D}^\prime(U)$ such that $t\in E_{s,p}^\rho,$
$\forall p\in I\cap int(\overline{U})$.
\end{defi}
An equivalent definition of the space $E_{s,I}(U)$
is the following:
\emph{ $t$ belongs to the
local space $E_{s,I}(U)$ if and only if
for all $p\in  I\cap int(\overline{U})$, there is
some open chart $\psi:V_p\mapsto \mathbb{R}^{n+d}$, 
$\psi(I)\subset \mathbb{R}^n\times\{0\}$
where $\lambda^{-s}(\psi_*t)(x,\lambda h) $
is bounded in $\mathcal{D}^\prime(\psi(V_p\cap U))$.}

It is immediate that
$E_{s,I}$ satisfies the restriction
property: if $V\subset U$ then $p\in int(\overline{V})\cap I\implies p\in int(\overline{U})\cap I$
and therefore 
the restriction of $E_{s,I}(U)$ on $V$
is $E_{s,I}(V)$.

A consequence
of Theorem \ref{locthmGOOD} is the following 
properties of $E_{s,I}$ under 
gluings:
\begin{thm}
$E_{s,I}$ satisfies the following gluing property:
if $\cup_i V_i$ is an open cover of $U$ s.t. 
$\cup_i int\left(\overline{V_i}\right)$ is a neighborhood of $U$ then
for $t\in\mathcal{D}^\prime(\cup V_i)$,
$t\in E_{s,I}(V_i),\forall i\implies t\in E_{s,I}(U)$
\end{thm}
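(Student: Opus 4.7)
The plan is to verify the gluing claim directly from Definition \ref{defEs}, using Theorem \ref{locthmGOOD} to handle the Euler-vector-field independence. To show $t \in E_{s,I}(U)$, I must check that for every $p \in I \cap \text{int}(\overline{U})$ there exists some Euler vector field $\rho$ and some $\rho$-convex open set $W$ with $\overline{W}$ a neighborhood of $p$ such that $t|_W \in E_s^\rho(W)$.

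First I would fix a point $p \in I \cap \text{int}(\overline{U})$. The hypothesis that $\bigcup_i \text{int}(\overline{V_i})$ is a neighborhood of $U$ produces an index $i$ with $p \in \text{int}(\overline{V_i})$; for points $p\in U$ this is automatic, since $V_i$ open implies $V_i\subset \text{int}(\overline{V_i})$, while for boundary points $p \in \partial U \cap \text{int}(\overline{U})$ the strengthened covering hypothesis is exactly what is needed. Given this $i$, the assumption $t \in E_{s,I}(V_i)$ together with $p \in I \cap \text{int}(\overline{V_i})$ yields, by Definition \ref{defEs}, some Euler vector field $\rho$ defined near $p$ and some $\rho$-convex open set $W \subset V_i$ with $\overline{W}$ a neighborhood of $p$ such that $t|_W \in E_s^\rho(W)$. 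This says $t \in E_{s,p}^\rho$, and by Theorem \ref{locthmGOOD} this membership is intrinsic, i.e.\ $t \in E_{s,p}$.

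Since $p$ was arbitrary in $I \cap \text{int}(\overline{U})$, Definition \ref{defEs} then gives $t \in E_{s,I}(U)$, which is the desired conclusion. The only subtle ingredient is the topological bookkeeping in the first step: one must use that the cover hypothesis is formulated with interiors of closures rather than with the $V_i$ themselves, since otherwise points of $\partial U \cap I$ that lie in $\text{int}(\overline{U})$ but in none of the $V_i$ would be out of reach, and it is precisely for such points that an $\rho$-convex neighborhood has to be produced inside some $V_i$. Once this topological point is in place the result is essentially a formal unwinding of the definitions, with Theorem \ref{locthmGOOD} doing all the real work behind the scenes by ensuring that the Euler vector field selected for each point $p$ need not be globally consistent.
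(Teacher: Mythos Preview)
Your proof is correct and follows essentially the same approach as the paper's: fix $p \in I \cap \text{int}(\overline{U})$, locate $p$ in some $\text{int}(\overline{V_i})$ via the covering hypothesis, invoke $t \in E_{s,I}(V_i)$ at that point to obtain $t \in E_{s,p}$, and conclude. The paper's version is marginally more explicit about passing from the witness set inside $V_i$ to one relative to $U$ (writing $\tilde{V}_p = V_p \cap \text{int}(\overline{U})$), but the content is identical. One small remark: your explicit appeal to Theorem~\ref{locthmGOOD} is not strictly needed here, since the $\rho$-independence is already absorbed into Definition~\ref{defEs}; once $t \in E_{s,p}^\rho$ for some $\rho$, you have $t \in E_{s,p}$ by definition, and that is all the gluing argument uses.
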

\begin{proof}
It suffices to prove that
$t\in E_{s,p}(U)$ for all 
$p\in int(\overline{U})\cap I$.
Let $p\in int\left(\overline{U}\right)\cap I$, then obviously
$p\in \cup_i int\left(\overline{V_i}\right)$ since $\cup_i int\left(\overline{V_i}\right)$
is a neighborhood of $U$. Then by definition of $t\in E_{s,p}(V_i)$, there is some neighborhood 
$V_p$ of $p$ s.t. $V_p\subset int\left(\overline{V_i}\right)$ for some $i$
and $\lambda^{-s}e^{\log\lambda\rho*}t$ is 
bounded in $\mathcal{D}^\prime(V_p\cap V_i)$ which implies 
in particular that
$\lambda^{-s}e^{\log\lambda\rho*}t$ is 
bounded in $\mathcal{D}^\prime(\tilde{V_p} \cap U)$ where $\tilde{V_p}=V_p\cap int\left(\overline{U}\right)$
is a neighborhood of $p$ in $int\left(\overline{U}\right)$, therefore $t\in E_{s,p}(U)$.
\end{proof}
We prove Theorem \ref{Esdiffinvariance}
which claims that
$E_{s,I}(U)$ satisfies
a property of
diffeomorphism
invariance:
\begin{thm}
Let $I$ (resp $I^\prime$) be a closed embedded submanifold of $M$
(resp $M^\prime$), $U\subset M$ (resp $U^\prime\subset M^\prime$) open and $\Phi:U^\prime\mapsto U$ 
a diffeomorphism s.t. $\Phi(U^\prime\cap I^\prime)=I\cap U$.
Then $\Phi^*E_{s,I}(U)=E_{s,I^\prime}(U^\prime)$.
\end{thm}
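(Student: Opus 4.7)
The plan is to reduce the equality $\Phi^*E_{s,I}(U)=E_{s,I^\prime}(U^\prime)$ to a pointwise statement. By Definition \ref{defEs}, membership in $E_{s,I}(U)$ is determined locally: $t\in E_{s,I}(U)$ iff $t\in E_{s,p}^\rho$ at every $p\in I\cap \mathrm{int}(\overline{U})$ for some choice of Euler vector field $\rho$ near $p$. Since $\Phi$ is a diffeomorphism, it restricts to a bijection $I^\prime\cap\mathrm{int}(\overline{U^\prime})\to I\cap\mathrm{int}(\overline{U})$. Thus it suffices to show that for every $p^\prime\in I^\prime\cap\mathrm{int}(\overline{U^\prime})$ and $p=\Phi(p^\prime)$ we have $t\in E_{s,p}\iff \Phi^*t\in E_{s,p^\prime}$. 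The inclusion in one direction together with the same argument applied to $\Phi^{-1}$ gives the reverse inclusion.

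Fix $p^\prime$ and $p=\Phi(p^\prime)$ and take $t\in E_{s,p}$. By Theorem \ref{locthmGOOD} we are free to work with any Euler vector field at $p$, and by Theorem \ref{eulerdiffinvariance} the pullback $\rho^\prime:=\Phi^*\rho=(\Phi^{-1})_*\rho$ of any Euler $\rho$ at $p$ is Euler at $p^\prime$. The next key step is the functoriality of the flow under conjugation: if $S(\lambda)=e^{\log\lambda\rho}$ and $S^\prime(\lambda)=e^{\log\lambda\rho^\prime}$ then
\begin{equation}
\Phi\circ S^\prime(\lambda)=S(\lambda)\circ\Phi,\qquad \lambda\in(0,1],
\end{equation}
which implies at the level of pullbacks of distributions
\begin{equation}
S^\prime(\lambda)^*(\Phi^* t)=\Phi^*\bigl(S(\lambda)^*t\bigr).
\end{equation}

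Now choose a $\rho$-convex neighborhood $V$ of $p$ in which $\lambda^{-s}S(\lambda)^*t$ is bounded in $\mathcal{D}^\prime(V\cap U)$. Its preimage $V^\prime:=\Phi^{-1}(V)$ is $\rho^\prime$-convex and a neighborhood of $p^\prime$. Since $\Phi^*:\mathcal{D}^\prime(V\cap U)\to\mathcal{D}^\prime(V^\prime\cap U^\prime)$ is a continuous linear isomorphism, it maps bounded families to bounded families; combined with the intertwining identity this shows that $\lambda^{-s}S^\prime(\lambda)^*(\Phi^*t)$ is bounded in $\mathcal{D}^\prime(V^\prime\cap U^\prime)$, hence $\Phi^*t\in E_{s,p^\prime}^{\rho^\prime}\subset E_{s,p^\prime}$. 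Applying the same reasoning to $\Phi^{-1}$ yields the reverse inclusion, and the theorem follows.

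The only non-routine point is the intertwining identity $\Phi\circ S^\prime(\lambda)=S(\lambda)\circ\Phi$, which is a standard naturality property of flows under diffeomorphisms (one checks both sides satisfy the same ODE with initial condition $\Phi$); everything else is bookkeeping about $\rho$-convex neighborhoods and the locality statement of Theorem \ref{locthmGOOD}.
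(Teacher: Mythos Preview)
Your proof is correct and follows essentially the same route as the paper: localize at each $p\in I\cap\mathrm{int}(\overline{U})$, push the Euler field $\rho$ forward by $\Phi^{-1}$ using Theorem~\ref{eulerdiffinvariance}, use the conjugation identity $\Phi^*e^{\log\lambda\rho*}\Phi^{-1*}=e^{\log\lambda(\Phi^{-1}_*\rho)*}$ (your intertwining relation), and conclude by boundedness of the pull--back $\Phi^*:\mathcal{D}^\prime(V)\to\mathcal{D}^\prime(\Phi^{-1}(V))$. The paper's argument is the same, only phrased slightly more tersely.
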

\begin{proof}
By Theorem \ref{locthmGOOD}, we can localize the proof
at all points $p\in int\left(\overline{U}\right)\cap I$. Let $p\in int\left(\overline{U}\right)\cap I$,
then $t\in E_{s,I}(U)$ implies
by definition 
that $t\in E_{s,p}^\rho$ for some $\rho$ 
which means that:
$$\lambda^{-s}e^{\log\lambda\rho*}t \text{ bounded in }\mathcal{D}^\prime(V), int(\overline{V})\text{ neighborhood of }p$$
$$\Leftrightarrow \lambda^{-s}\Phi^* e^{\log\lambda\rho*}\Phi^{-1*} (\Phi^*t)
\text{ bounded in }\mathcal{D}^\prime(\Phi^{-1}(V))$$
because the pull--back by a diffeomorphism
is bounded from $\mathcal{D}^\prime(V)$ to $\mathcal{D}^\prime(\Phi^{-1}(V))$~\cite[Prop 6.1]{Viet-wf2},
$$\Leftrightarrow   \lambda^{-s} e^{\log\lambda(\Phi^{-1}_*\rho)*} (\Phi^*t)
\text{ bounded in }\mathcal{D}^\prime(\Phi^{-1}(V)) $$
where the vector field $\Phi^{-1}_*\rho$
is a germ of Euler field near $p$ by 
\ref{eulerdiffinvariance}.
Therefore $\Phi^*t$
is in $E_{s,p^\prime}^{\Phi^{-1}_*\rho}$ where $p^\prime=\Phi^{-1}(p)$
and repeating the proof for all $p\in int\left(\overline{U}\right)\cap I$
yields the claim.
\end{proof}

\section{The extension problem 
on flat space.}
\label{flatspacesection}
\paragraph{Formulation of the problem.}
We work in $\mathbb{R}^{n+d}$ with coordinates $(x,h)$, $I=\mathbb{R}^n\times\{0\}$ is the linear subspace $\{h=0\}$.
In the sequel, unless it is specified otherwise, we will
always assume that we work with open sets $U$ of the form
$U_1\times U_2$ where $U_1$ (resp $U_2$) is an open subset of $\mathbb{R}^n$
(resp $\mathbb{R}^d$) s.t. $\lambda U_2\subset U_2,\forall\lambda\in(0,1]$ in particular
such $U$ is \emph{convex} meaning that: 
\begin{eqnarray}\label{convex}
(x,h)\in U\implies \forall\lambda\in(0,1], (x,\lambda h)\in U.
\end{eqnarray}
We reformulate the extension problem
on flat space:
\begin{defi}
We are given a convex open set $U\subset \mathbb{R}^{n+d}$ and $
I= \mathbb{R}^n\times\{0\}$. We have a distribution 
$t\in\mathcal{D}^\prime(U\setminus I)$
and we would like to
find under what reasonable conditions
on $t$ one can construct an extension
$\overline{t}\in\mathcal{D}^\prime(U)$.
\end{defi}
\subsection{Construction of a formal extension.}
In this subsection,
we construct a candidate
for the formal
extension.
\paragraph{Defining a smooth partition of unity.}
A partition of unity
will provide us with some family
of smooth functions
supported in $\mathbb{R}^{n+d}\setminus I$
approximating the constant function $1\in C^\infty(\mathbb{R}^{n+d}\setminus I)$.
\begin{defi}
A smooth partition 
of unity is a function
$\Psi\in C^\infty\left((0,\infty),\mathbb{R}^{n+d}\setminus I\right)$
such that $\forall\Lambda\in(0,\infty),\Psi_\Lambda=0$ in some neighborhood of $I$
and
$\Psi_\Lambda\underset{\Lambda\rightarrow \infty}{\rightarrow} 1$
for the Fr\'echet topology of $C^\infty(\mathbb{R}^{n+d}\setminus I)$.
\end{defi}
Motivated by the above definition,
we choose a function $\chi$
such that $\chi=1$ in a neighborhood
of $I$ and the projection $\pi:\mathbb{R}^n\times\mathbb{R}^d\longmapsto \mathbb{R}^n\times \{0\}$ is proper on the support
of $\chi$. This implies $\chi$ satisfies the following constraint: 
for all compact set $K\subset \mathbb{R}^{n},\exists (a,b)\in \mathbb{R}^2$ such that $b>a>0$ and $\chi|_{(K\times\mathbb{R}^d)\cap \{\vert h\vert\leqslant a\}}=1, \chi|_{(K\times\mathbb{R}^d)\cap \{\vert h\vert\geqslant b\}}=0$.
We set $\Psi(\Lambda,x,h)=1-\chi(x,\Lambda h)$ and
it is a simple exercice to verify that this defines
a partition of unity
of $\mathbb{R}^{n+d}\setminus I$.
\paragraph{A candidate for the extension.}
\begin{prop}
Let $U$ be an open set of $\mathbb{R}^{n+d}$, if
$t\in\mathcal{D}^\prime(U\setminus I)$ then
for any smooth partition of unity $\Psi_\Lambda$
\begin{eqnarray}
t=\lim_{\Lambda\rightarrow +\infty} t\Psi_\Lambda
\end{eqnarray}
as distribution on $U\setminus I$.
\end{prop}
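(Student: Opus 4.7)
The goal is to show that for any test function $\varphi\in\mathcal{D}(U\setminus I)$, the quantity $\langle t\Psi_\Lambda,\varphi\rangle$ converges to $\langle t,\varphi\rangle$ as $\Lambda\to+\infty$. By duality, $\langle t\Psi_\Lambda,\varphi\rangle=\langle t,\Psi_\Lambda\varphi\rangle$, so the plan is to prove the stronger statement that $\Psi_\Lambda\varphi\to\varphi$ in $\mathcal{D}(U\setminus I)$; in fact I expect to show this sequence is \emph{eventually constant}, which makes the conclusion immediate regardless of the continuity properties of $t$.

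First I would fix $\varphi\in\mathcal{D}(U\setminus I)$ and set $K=\mathrm{supp}(\varphi)$. Since $K$ is compact and disjoint from $I=\mathbb{R}^n\times\{0\}$, its projection $K_1=\pi(K)\subset\mathbb{R}^n$ is compact and there exists $\delta>0$ such that $|h|\geqslant\delta$ for every $(x,h)\in K$. Next I would invoke the properness hypothesis on $\chi$: applied to the compact set $K_1$, it provides constants $b>a>0$ such that $\chi(x,h)=0$ whenever $x\in K_1$ and $|h|\geqslant b$.

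The key observation is then that $\Psi_\Lambda\varphi-\varphi=-\chi(x,\Lambda h)\,\varphi$. For any $(x,h)\in K$ and any $\Lambda\geqslant b/\delta$, one has $x\in K_1$ and $|\Lambda h|\geqslant\Lambda\delta\geqslant b$, so that $\chi(x,\Lambda h)=0$ on $K$. Consequently $\Psi_\Lambda\varphi=\varphi$ identically in $\mathcal{D}(U\setminus I)$ as soon as $\Lambda\geqslant b/\delta$, and in particular $\langle t,\Psi_\Lambda\varphi\rangle=\langle t,\varphi\rangle$ eventually in $\Lambda$. Since $\varphi$ was arbitrary, this proves the weak convergence $t\Psi_\Lambda\to t$ in $\mathcal{D}'(U\setminus I)$.

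There is no genuine obstacle here: the only point worth flagging is that the argument works precisely because test functions on $U\setminus I$ are compactly supported \emph{away from $I$}, so the cut-off $\chi(x,\Lambda h)$ sweeps past the support of $\varphi$ for $\Lambda$ large. Note this argument says nothing about convergence on $U$ itself — indeed this is the whole content of the extension problem, which the subsequent sections address.
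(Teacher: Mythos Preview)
Your argument is correct for the specific partition of unity $\Psi_\Lambda(x,h)=1-\chi(x,\Lambda h)$ built from the cut-off $\chi$, and indeed in that case you obtain the stronger conclusion that $\Psi_\Lambda\varphi$ is eventually \emph{equal} to $\varphi$. The paper itself gives no proof of this proposition, so there is nothing to compare against directly.

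There is, however, a small mismatch with the statement: the proposition is asserted for \emph{any} smooth partition of unity in the sense of the definition just above, not only for the particular $\Psi_\Lambda=1-\chi(\cdot,\Lambda\,\cdot)$. For a general $\Psi_\Lambda$ you cannot expect eventual constancy, and your properness argument for $\chi$ is not available. But the general case is in fact easier and essentially built into the definition: the requirement $\Psi_\Lambda\to 1$ in the Fr\'echet topology of $C^\infty(\mathbb{R}^{n+d}\setminus I)$ means uniform convergence of all derivatives on every compact subset of $\mathbb{R}^{n+d}\setminus I$; hence for $\varphi\in\mathcal{D}(U\setminus I)$ with support $K$, the Leibniz rule gives $\Psi_\Lambda\varphi\to\varphi$ in $\mathcal{D}_K(U\setminus I)$, and continuity of $t$ on this Fr\'echet space yields $\langle t,\Psi_\Lambda\varphi\rangle\to\langle t,\varphi\rangle$. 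So to match the proposition as stated, you should either replace your argument by this one-line observation, or keep your argument and note separately that the general case follows immediately from the definition.
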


From the above proposition, we deduce that
if $\lim_{\Lambda\rightarrow +\infty} t\Psi_\Lambda$
\textbf{converges in} $\mathcal{D}^\prime(U)$
the limit defines an extension
of $t$.
So this raises the question,
for all test function $\varphi\in \mathcal{D}(U)$, does
the limit 
$\lim_{\Lambda\rightarrow +\infty}\left\langle t,\Psi_\Lambda\varphi \right\rangle$
exist ?
To study this question,
we introduce a continuous decomposition
of our partition of unity
formula
\begin{eqnarray*}
1-\chi(x,\Lambda h)&=&1-\chi(x,h)+\chi(x,h)-\chi(x,\Lambda h)\\
&=& \int_{\Lambda^{-1}}^1 \frac{d\lambda}{\lambda}\lambda\frac{d}{d\lambda}\left[\chi(x,\lambda^{-1}h)\right]+1-\chi(x,h)\\
&=& \int_{\Lambda^{-1}}^1\frac{d\lambda}{\lambda} \left(-\rho\chi\right)(x,\lambda^{-1}h)+1-\chi(x,h),
\end{eqnarray*}
where $\rho$ is the Euler vector field
$\sum_{j=1}^dh^j\frac{\partial}{\partial h^j}$
which scales tranversally to $I=\{h^j=0,1\leqslant j\leqslant d\}$.
In the sequel, we
will 
write $\rho=h\frac{\partial}{\partial h}$
for brevity.
Set $\psi=-\rho\chi$ and define
the scaling by a factor $\lambda\in(0,1]$:
\begin{eqnarray*}
\psi_\lambda(x,h)=\psi(x,\lambda h).
\end{eqnarray*}
In these notations,
the partition of
unity formula simply writes:
\begin{eqnarray}\label{partitionofunityformula}
\chi-\chi_{\Lambda}=\int_{\Lambda^{-1}}^1\frac{d\lambda}{\lambda} \psi_{\lambda^{-1}}.
\end{eqnarray}
In the sequel,
instead
of studying the
limit
$\Lambda\rightarrow\infty$,
we will set
$\varepsilon^{-1}=\Lambda$
and study instead
the limit $\varepsilon\rightarrow 0$.
We will also denote by $\pi$
the projection $(x,h)\in\mathbb{R}^n\times\mathbb{R}^d\longmapsto x\in\mathbb{R}^n$
and $\chi$ will always designate
a smooth function
such that $\chi=1$ in some neighborhood of $I$ and 
$\pi$
is proper on $\text{supp }\chi$.
In what follows, we will
study
the behaviour of
\begin{eqnarray}
t(1-\chi_{\varepsilon^{-1}})=\int_{\varepsilon}^1\frac{d\lambda}{\lambda} t\psi_{\lambda^{-1}}+t(1-\chi).
\end{eqnarray}
when $\varepsilon\rightarrow 0$.

\subsection{The extension theorems.}

\subsubsection{Some definitions and notations.}
Let us introduce
the terminology
needed to state our theorems.
Let $U\subset \mathbb{R}^{n+d}$ be an open set,
we denote by $(x,h;\xi,\eta)$ 
the coordinates 
in cotangent space $T^* U$, 
where $\xi$ (resp $\eta$) is dual to $x$ (resp $h$).
$T^\bullet U$ denotes the cotangent $TU$ 
minus the zero section $\underline{0}$. 

A set $\Gamma\subset T^\bullet U$ 
is \emph{stable by scaling}
if 
\begin{eqnarray}\label{stabilitycondition}
\forall\lambda\in(0,1], \left(\{(x,\lambda^{-1}h;\xi,\lambda\eta) 
; (x,h;\xi,\eta)\in\Gamma   \}
\cap T^\bullet U\right) \subset \Gamma.
\end{eqnarray}
Concisely, if we denote by
$\Phi_\lambda^*\Gamma$
the pull--back of $\Gamma$
by $\Phi_\lambda$ \cite{Viet-wf2}, we require
that $\forall\lambda\in(0,1],
\left(\Phi_\lambda^*\Gamma\cap T^\bullet U\right)\subset \Gamma$.
We also denote by
$T^*_I\mathbb{R}^{n+d}$
the restriction
of $T^*\mathbb{R}^{n+d}$
on $I$ and $N^*(I)$ 
the conormal bundle of $I$.
As we explained in the introduction, the extension theorem
has no positive solution
for arbitrary distributions
in $U\setminus I$.
However, if we impose
that the distribution
has ``moderate growth''
in terms of scaling
then we will be able 
to solve it.
The scaling of
distribution
is defined
by duality
\begin{eqnarray*}
\forall \varphi\in\mathcal{D}(U),
\left\langle t_\lambda, \varphi \right\rangle
=\lambda^{-d} \left\langle t, \varphi_{\lambda^{-1}} \right\rangle\\
\text{where }\varphi_{\lambda^{-1}}=\varphi(x,\lambda^{-1} h).
\end{eqnarray*}

In the sequel, for a given open set $U$, 
a compact set $K\subset U$, 
we will denote by
$\left(\pi_{m,K}\right)_{m\in\mathbb{N}}$ 
the collection of continuous
seminorms on the Fr\'echet space
$\mathcal{D}_K(U)$ of test functions supported
on $K$ defined
as:
\begin{eqnarray*}
\forall \varphi\in\mathcal{D}_K(U), \pi_{m,K}(\varphi)=\sup_{\vert\alpha\vert\leqslant m,x\in K} \vert\partial^\alpha\varphi(x)\vert.
\end{eqnarray*}

\paragraph{Weakly homogeneous distributions
in $\mathcal{D}^\prime_\Gamma$.}
Let us formalize
this notion of distribution
having nice behaviour 
under scaling by
defining 
the main space of
distributions 
for which the
extension problem
has a positive answer.

Using the recent work \cite[6.3]{dabrowski2013functional},
we can characterize
bounded sets in $\mathcal{D}^\prime_\Gamma$
by duality pairing.
A set $B\subset \mathcal{D}^\prime_\Gamma(U)$
is bounded if for every $v\in \mathcal{E}^\prime_\Lambda(U)$ where
$\Lambda$ is an open cone s.t. $\Lambda\cap-\Gamma=\emptyset$,
we have
\begin{eqnarray*}
\sup_{t\in B}\vert\left\langle t,v \right\rangle  \vert<+\infty.
\end{eqnarray*}

\begin{defi}
Let 
$\Gamma\subset T^\bullet U$ be 
a closed conic set stable by scaling.
A distribution $t$
is weakly homogeneous of degree $s$
in $\mathcal{D}^\prime_\Gamma(U)$, if
for all distribution 
$v\in \mathcal{E}^\prime_\Lambda(U)$
where $\Lambda=-\Gamma^c$,
\begin{eqnarray*}
\sup_{\lambda\in[0,1]}
\vert\left\langle\lambda^{-s}t_\lambda,v\right\rangle\vert<+\infty.
\end{eqnarray*}
We denote this space
by $E_{s}(\mathcal{D}^\prime_\Gamma(U))$
and we endow
it with the 
locally convex topology
generated
by the seminorms
\begin{equation*}
P_{B}(t)=\underset{\lambda\in[0,1],v\in B}{\sup}
\vert\left\langle\lambda^{-s}t_\lambda,v\right\rangle\vert
\end{equation*}
for $B$ equicontinuous \cite[lemma 6.3]{Viet-wf2}
in $\mathcal{E}_\Lambda^\prime$.
\end{defi}

We recover the definition
of Yves Meyer in the particular case where
$\Gamma=T^\bullet \mathbb{R}^{n+d}$ in which case
$\mathcal{D}^\prime_\Gamma=\mathcal{D}^\prime$.

A key conceptual step
in our approach
is to think of $\lambda^{-s}t(x,\lambda h)$
as a distribution
of the three variables
$(\lambda,x,h)$.
Let us define the map
\begin{eqnarray}
\Phi:(\lambda,x,h)\in\mathbb{R}\times\mathbb{R}^{n+d}\longmapsto (x,\lambda h)\in\mathbb{R}^{n+d}.
\end{eqnarray}

\begin{thm}\label{thm1}
Let $s\in\mathbb{R}$ s.t. $s+d> 0$, 
$\Gamma\subset T^\bullet U$ 
a closed conic set stable 
by scaling.
If $t\in\mathcal{D}^\prime(U\setminus I)$ is weakly
homogeneous of degree $s$ in 
$\mathcal{D}^\prime_\Gamma(U\setminus I)$,
then $\overline{t}=\lim_{\varepsilon\rightarrow 0}
t(1-\chi_{\varepsilon^{-1}})$
is a well defined extension
of $t$
and $WF(\overline{t})\subset 
WF(t)\cup N^*(I)\cup \Xi$
where 
\begin{equation*}
\Xi=\{(x,0;\xi,\eta) |\exists (x,h;\xi,0)\in\Gamma\cap T_{\text{supp }\psi}^*U  \}.
\end{equation*}

\end{thm}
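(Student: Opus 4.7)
\medskip

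\noindent\textbf{Proof proposal.} The starting point is the continuous partition of unity formula (\ref{partitionofunityformula}), which lets us write
\[
t(1-\chi_{\varepsilon^{-1}}) \;=\; (1-\chi)\,t \;+\; \int_{\varepsilon}^{1}\frac{d\lambda}{\lambda}\, t\,\psi_{\lambda^{-1}}.
\]
The first piece is a fixed distribution supported away from $I$. For the second, the plan is to perform the change of variable $h=\lambda h'$ inside the duality bracket to turn $t(x,h)\psi(x,\lambda^{-1}h)$ into $\lambda^d\,T(\lambda,x,h')\psi(x,h')$, where $T(\lambda,x,h')=\lambda^{-s}t(x,\lambda h')=\lambda^{-s}t_\lambda(x,h')$. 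Paired with a test function $\varphi\in\mathcal{D}(U)$, one obtains
\[
\int_{\varepsilon}^{1}\lambda^{s+d-1}\,\bigl\langle T(\lambda),\,\psi\cdot\Phi_\lambda^{*}\varphi\bigr\rangle\,d\lambda .
\]
This reduces everything to controlling a distribution indexed by $\lambda$ against a $\lambda$-dependent family of test functions, which is exactly the content of weak homogeneity.

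\medskip

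\noindent For the \emph{existence} of the limit, I would note that $\psi$ has compact support in the annular region $\{a\leqslant|h'|\leqslant b\}$, hence $\{\psi\cdot\Phi_\lambda^{*}\varphi:\lambda\in[0,1]\}$ is supported in a fixed compact subset of $U\setminus I$ and forms an equicontinuous family in $\mathcal{D}(U\setminus I)\subset \mathcal{E}'_\Lambda$ for any cone $\Lambda$ disjoint from $-\Gamma$. By definition of $E_s(\mathcal{D}'_\Gamma)$, the bracket $\langle T(\lambda),\psi\cdot\Phi_\lambda^{*}\varphi\rangle$ is therefore bounded uniformly in $\lambda\in(0,1]$, and the assumption $s+d>0$ makes $\lambda^{s+d-1}$ integrable on $(0,1]$. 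This yields $\overline{t}=\lim_{\varepsilon\to 0} t(1-\chi_{\varepsilon^{-1}})$ in $\mathcal{D}'(U)$, and clearly $\overline{t}=t$ away from $I$.

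\medskip

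\noindent For the \emph{wave front estimate}, the key idea is to assemble the integrand into a single distribution $V$ on $\mathbb{R}_\lambda\times U$, defined by
\[
V(\lambda,x,h')\;=\;\theta(\lambda)\,\lambda^{s+d-1}\,\psi(x,h')\,T(\lambda,x,h'),
\]
with $\theta$ a smooth cutoff supported in $[0,1]$, and to rewrite the oscillatory part of the extension as the pushforward $\tilde{\Phi}_{*}V$ under the map $\tilde{\Phi}:(\lambda,x,h')\mapsto (x,\lambda h')$, whose restriction to $\operatorname{supp}V$ is proper by compactness of $\operatorname{supp}\psi$. I would then compute $WF(V)$ in two regimes. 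On $\{\lambda>0\}$, $\tilde{\Phi}$ is a submersion, so $V$ is a pointwise product of smooth factors with the pullback $\tilde{\Phi}^{*}t$, and the usual pullback formula gives $WF(V)\cap\{\lambda>0\}\subset\{(\lambda,x,h';\,h'\!\cdot\!\eta,\xi,\lambda\eta):(x,\lambda h';\xi,\eta)\in\Gamma\}$. On the boundary $\{\lambda=0\}$, one exploits the boundedness of $T(\lambda)$ in $\mathcal{D}'_\Gamma$ together with the integrability of $\lambda^{s+d-1}$ (using the functional analytic characterization of bounded sets in $\mathcal{D}'_\Gamma$ recalled just before the theorem, cf.\ \cite{Viet-wf2,dabrowski2013functional}) to conclude that $V$ is a well-defined distribution on $\mathbb{R}\times U$ whose wave front over $\{\lambda=0\}$ can only contain covectors of the form $(0,x,h';\lambda^{*},\xi,0)$ with $(x,h;\xi,0)\in\Gamma\cap T^{*}_{\operatorname{supp}\psi}U$ for some $h$, plus the purely $\lambda$-directional covectors $(0,x,h';\lambda^{*},0,0)$ that come from the non-smoothness of $\theta(\lambda)\lambda^{s+d-1}$ at the origin.

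\medskip

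\noindent Finally, I would apply the pushforward formula for wave front sets to $\tilde{\Phi}_{*}V$, using $d\tilde{\Phi}^{T}(\xi,\eta)=(h'\!\cdot\!\eta,\xi,\lambda\eta)$: contributions from $\lambda>0$ reproduce exactly $WF(t)$; covectors at the boundary with $\xi=0$ and $\eta\neq 0$ push forward to $N^{*}(I)$; covectors at the boundary with $\xi\neq 0$ and $\eta=0$ push forward to the set $\Xi$ defined in the statement. Adding the contribution of the smooth piece $(1-\chi)t$ (which only adds elements of $WF(t)$) yields the bound $WF(\overline{t})\subset WF(t)\cup N^{*}(I)\cup\Xi$. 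The main obstacle I expect is the analysis at $\lambda=0$: one has to promote the \emph{uniform} bound of $T(\lambda)$ in $\mathcal{D}'_\Gamma$ to an actual wave front estimate on $V$ as a distribution on the extended space, which is exactly the bridge between the functional analytic content of the definition of $E_s(\mathcal{D}'_\Gamma)$ and the microlocal conclusion.
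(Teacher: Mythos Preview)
Your proposal is correct in outline and shares the paper's core architecture: pass to the extended space $\mathbb{R}_\lambda\times U$, bound the wave front there, and push forward to $U$. The existence step is identical to the paper's Step~1.

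The implementation of the wave front estimate differs. You push forward directly by $\tilde\Phi:(\lambda,x,h')\mapsto(x,\lambda h')$, which is not a submersion at $\lambda=0$; the normal set of $\tilde\Phi$ over $\{\lambda=0\}$ is what produces $N^*(I)$, and the remaining fibre contribution produces $\Xi$. The paper instead \emph{doubles} the target: it writes
\[
t(\chi-\chi_{\varepsilon^{-1}})(x',h')=\int_{\mathbb{R}\times\mathbb{R}^{n+d}} d\lambda\,dx\,dh\; B_{1,\varepsilon}(\lambda,x,h)\,\delta(x-x',\lambda h-h'),
\]
with $B_{1,\varepsilon}=1_{[\varepsilon,1]}(\lambda)\lambda^{s+d-1}\lambda^{-s}t(x,\lambda h)\psi(x,h)$ exactly your $V$. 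This factors $\tilde\Phi_*$ as a H\"ormander product with $\delta_\Delta(\Phi(\cdot),\cdot)$ followed by the \emph{linear} projection $\pi_3:(\lambda,x,h,x',h')\mapsto(x',h')$. The gain is that both operations have ready-made continuity and wave front bounds (\cite[Thms~7.1,~7.3]{Viet-wf2}), so the computation of $\Lambda_1$, $\Lambda_2$, $\Lambda_1+\Lambda_2$ and $\pi_{3*}(\cdots)$ is entirely mechanical. Your route is shorter but needs the pushforward estimate for a non-submersion, which you should state explicitly; the end result is the same.

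The step you flag as the ``main obstacle'' is exactly the content of the paper's Lemma~\ref{propfamily2}: for $f\in L^1([0,1])$ and $t\in E_s(\mathcal{D}'_\Gamma(U))$, the family $\bigl(f\,1_{[\varepsilon,1]}\,\lambda^{-s}\Phi^*t\bigr)_\varepsilon$ is bounded in $\mathcal{D}'_V(\mathbb{R}\times U)$ with
\[
V=\bigl\{(\lambda,x,h;\hat\lambda,\hat\xi,\hat\eta):(x,h;\hat\xi,\hat\eta)\in\Gamma\cup\underline{0},\ (x,h)\in\operatorname{supp}\psi\bigr\}.
\]
This is proved by a direct Fourier estimate from the seminorms of $E_s(\mathcal{D}'_\Gamma)$, and you should include it. One correction: a ``smooth cutoff $\theta$ supported in $[0,1]$'' necessarily vanishes at $0$ and would kill the boundary contribution; the paper keeps the non-smooth $1_{[\varepsilon,1]}$, shows the family $\bigl(t(\chi-\chi_{\varepsilon^{-1}})\bigr)_\varepsilon$ is \emph{bounded} in $\mathcal{D}'_\Lambda$, and combines this with convergence in $\mathcal{D}'$ to get convergence in $\mathcal{D}'_\Lambda$.
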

Before we prove the theorem,
let us show why the set
$WF(t)\cup N^*(I)\cup \Xi$
is a closed conic set.
Recall that $U=U_1\times U_2\subset \mathbb{R}^n\times\mathbb{R}^d$
where $\lambda U_2\subset U_2,\forall\lambda\in(0,1]$. We may assume w.l.o.g
that $U_2$ contains a set of the form $\{0<\vert h\vert\leqslant \varepsilon\}$.
There is nothing to prove
over $U\setminus I$
since $WF(t)$ is closed
in $T^\bullet(U\setminus I)$,
therefore we study the closure
of $WF(t)\cup N^*(I)\cup \Xi$
in $T_I^*U$.
Let $(x,0;\xi,\eta)$
be in its closure
$\overline{WF(t)\cup N^*(I)\cup \Xi}$.
If $\xi=0$ then
$(x,0;0,\eta)\in N^*(I)$.
Otherwise $\xi\neq 0$, 
there is a sequence
$(x_n,h_n;\xi_n,\eta_n)\rightarrow 
(x,0;\xi,\eta)$
where $(x_n,h_n;\xi_n,\eta_n)\in WF(t)$
and $h_n\rightarrow 0$.
But since $WF(t)\subset \Gamma$
and since $\Gamma$
is scale invariant
then
$(x_n,\varepsilon\frac{h_n}{\vert h_n\vert};\xi_n,\varepsilon^{-1}\vert h_n\vert\eta_n)\in 
\Gamma$. By compactness
of the unit sphere,
we can extract
a convergent subsequence for
$\varepsilon\frac{h_n}{\vert h_n\vert}$
and the limit $(x,h;\xi,0)$
is in $\Gamma$. 
Therefore by definition
of $\Xi$, we will have
$(x,0;\xi,\eta)\in\Xi$
and this implies that
$WF(t)\cup N^*(I)\cup \Xi$
is closed.

\begin{proof}

We have to establish the convergence
of $t(1-\chi_{\varepsilon^{-1}})$ in 
$\mathcal{D}_\Lambda^\prime(U)$
when $\varepsilon\rightarrow 0$ for $\Lambda=WF(t)\cup N^*(I)\cup \Xi$.
Our proof is divided in three parts, in the first,
we prove that the limit exists
in $\mathcal{D}^\prime(U)$ with arguments
similar to \cite{Meyer-98} but in our setting
of continuous partition of unity.
Then in the second part, 
we derive a new integral formula
for $t(1-\chi_{\varepsilon^{-1}})$, and
we shall
use the integral formula
to show that the family
$\left(t(1-\chi_{\varepsilon^{-1}})\right)_\varepsilon$
is bounded in $\mathcal{D}^\prime_\Lambda(U)$
using the behaviour of the WF under
the fundamental operations on distributions 
\cite{Viet-wf2}. Then $\underset{\varepsilon\rightarrow 0}{\lim} t(\chi-\chi_\varepsilon)$
converges in $\mathcal{D}^\prime(U)$ and is bounded in $\mathcal{D}^\prime_\Lambda(U)$
implies that $\underset{\varepsilon\rightarrow 0}{\lim} t(\chi-\chi_\varepsilon)$
converges in $\mathcal{D}^\prime_\Lambda(U)$.

\textbf{Step 1.}
We prove that
$\underset{\varepsilon\rightarrow 0}{\lim}
t(1-\chi_{\varepsilon^{-1}})$
exists in $\mathcal{D}^\prime(U)$ when 
$\varepsilon\rightarrow 0$.
Let us give a
different analytical expression
using
the partition
of unity formula,
\begin{eqnarray*}
t(1-\chi_{\varepsilon^{-1}})
=\int_{\varepsilon}^1 \frac{d\lambda}{\lambda}
t\psi_{\lambda^{-1}}+
t(1-\chi).
\end{eqnarray*}
Therefore:
\begin{eqnarray}\label{Meyerformula}
\left\langle t(\chi-\chi_{\varepsilon^{-1}}),\varphi\right\rangle
&=&\int_{\varepsilon}^1 d\lambda
\lambda^{s+d-1} \left\langle
(\lambda^{-s}t_\lambda)\psi,
\varphi_\lambda \right\rangle.
\end{eqnarray}
It follows that the r.h.s of \ref{Meyerformula} has a limit
when $\varepsilon\rightarrow 0$ since $\lambda^{s+d-1}$
is integrable on $[0,1]$.
It remains to prove that the limit is a distribution.
$(\lambda^{-s}t_\lambda)_{\lambda\in(0,1]}$ is bounded in $\mathcal{D}^\prime(U\setminus I)$ therefore
for all compact subset $K\subset U\setminus I$:
\begin{eqnarray*}
\exists C_K, \forall\varphi\in\mathcal{D}_K(U), \underset{\lambda\in(0,1]}{\sup} \vert\left\langle\lambda^{-s}t_\lambda,\varphi \right\rangle\vert\leqslant C_K\pi_{m,K}(\varphi).
\end{eqnarray*}
For all compact subset
$K^\prime\subset \mathbb{R}^{n+d}$ 
and for all $\varphi\in\mathcal{D}_{K^\prime}(U)$, 
the family
$(\psi\varphi_\lambda)_\lambda$ has 
fixed compact support which does not meet
$I$ and is bounded in
$\mathcal{D}_{K}(U\setminus I)$ for some compact
set $K$:
\begin{eqnarray*}
\forall\lambda\in(0,1], \pi_{m,K}( \psi\varphi_\lambda )\leqslant C_2\pi_{m,K^\prime}(\varphi).
\end{eqnarray*}
The two above bounds
easily imply
that:
\begin{eqnarray*}
\forall\varphi\in\mathcal{D}_{K^\prime}(U),\,\ \underset{\lambda\in(0,1]}{\sup} \vert\left\langle\lambda^{-s}t_\lambda, \psi\varphi_\lambda\right\rangle\vert &\leqslant & C_KC_2\pi_{m,K^\prime}(\varphi)\\
\implies \vert \left\langle t(\chi-\chi_{\varepsilon^{-1}}),\varphi\right\rangle\vert
&\leqslant &\vert\int_{\varepsilon}^1 d\lambda
\lambda^{s+d-1} \left\langle
(\lambda^{-s}t_\lambda)\psi,
\varphi_\lambda \right\rangle\vert\\
&\leqslant &
\frac{1-\varepsilon^{s+d}}{s+d}C_KC_2\pi_{m,K^\prime}(\varphi)\\
\implies \lim_{\varepsilon\rightarrow 0} \vert \left\langle t(\chi-\chi_{\varepsilon^{-1}}),\varphi\right\rangle\vert
&\leqslant & \frac{C_KC_2\pi_{m,K^\prime}(\varphi)}{s+d}
\end{eqnarray*}
The above bound means that
$\underset{\varepsilon\rightarrow 0}{\lim} t(\chi-\chi_{\varepsilon^{-1}})$
is well defined
in $\mathcal{D}^\prime(U)$. 
But the difficult 
point is to control the wave front set of the
limit over the 
subspace $I=\{h=0\}$.

\textbf{Step 2} We just proved
that $\underset{\varepsilon\rightarrow 0}{\lim} t(\chi-\chi_\varepsilon)$
converges in $\mathcal{D}^\prime(U)$. In order to control the WF
of the limit, it suffices to prove that the family
$ t(\chi-\chi_\varepsilon)_\varepsilon$ is bounded
in $\mathcal{D}^\prime_\Lambda(U),\Lambda=WF(t)\cup N^*(I)\cup \Xi$. 
We propose
a simple method which consists
in giving a new integral
formula for the identity \ref{Meyerformula}.
We double
the space $\mathbb{R}^{n+d}$ and transform
the  formula $\int_{\varepsilon}^1 \frac{d\lambda}{\lambda}
\lambda^{s+d} \left\langle
(\lambda^{-s}t_\lambda)\psi,
\varphi_\lambda \right\rangle$ into an integral formula
on $\mathbb{R}\times\mathbb{R}^{n+d}\times\mathbb{R}^{n+d}$.
We work in $\mathbb{R}\times\mathbb{R}^{n+d}\times\mathbb{R}^{n+d}$
with coordinates $(\lambda,x,h,x^\prime,h^\prime)$. We denote
by $\delta\in\mathcal{D}^\prime(\mathbb{R}^{n+d})$ the delta distribution
supported at $(0,0)\in\mathbb{R}^{n+d}$ and $\delta_\Delta(.,.)$ the 
delta distribution supported
by the diagonal $\Delta\subset\mathbb{R}^{n+d}\times \mathbb{R}^{n+d}$
where we have the relation
$\delta_\Delta((x,h),(x^\prime,h^\prime))=\delta(x-x^\prime,h-h^\prime)$.  
Thus $\left\langle t(\chi-\chi_{\varepsilon^{-1}}),\varphi\right\rangle$
\begin{eqnarray*}
&=&\int_{\varepsilon}^1 \frac{d\lambda}{\lambda}
\lambda^{s+d} \left\langle
(\lambda^{-s}t_\lambda)\psi,
\varphi_\lambda \right\rangle\\
&=&\int_{\mathbb{R}^{n+d}}dx^\prime dh^\prime \int_{\mathbb{R}\times\mathbb{R}^{n+d}}
\frac{d\lambda}{\lambda} dxdh 1_{[\varepsilon,1]}(\lambda)\lambda^{s+d}\lambda^{-s}t(x,\lambda h)\psi(x,h)\delta(x-x^\prime,\lambda h-h^\prime)\varphi(x^\prime,h^\prime). 
\end{eqnarray*}
Finally, we end up
with the integral formula:
\begin{eqnarray}\label{fundintegralformula}
t(\chi-\chi_{\varepsilon^{-1}})(x^\prime,h^\prime)&=&
\int_{\mathbb{R}\times\mathbb{R}^{n+d}}
d\lambda dxdh 1_{[\varepsilon,1]}(\lambda)\lambda^{s+d-1}\lambda^{-s}t(x,\lambda h)\psi(x,h)\delta(x-x^\prime,\lambda h-h^\prime).
\end{eqnarray}

It suffices to
estimate $\Lambda$ 
over $I$
since we already know that
the family $t(\chi-\chi_{\varepsilon^{-1}})_\varepsilon$
is bounded in $\mathcal{D}_{WF(t)}^\prime(U\setminus I)$ i.e. $\Lambda\cap T^*\left(U\setminus I\right)=WF(t)$. 
We want to calculate the WF of the r.h.s of
(\ref{fundintegralformula}) in $T^*_IU$. 
\begin{enumerate}
\item decompose the r.h.s of (\ref{fundintegralformula})
in two blocks
\begin{eqnarray*}
\underset{B_{1,\varepsilon}}{\underbrace{1_{[\varepsilon,1]}(\lambda)\lambda^{s+d-1}\lambda^{-s}t(x,\lambda h)\psi(x,h)}}
\underbrace{\delta(x-x^\prime,\lambda h-h^\prime)}
\end{eqnarray*} 
\item $1_{[\varepsilon,1]}(\lambda)\lambda^{s+d-1}\in L^1(\mathbb{R})$ and 
$t\in E_s(\mathcal{D}^\prime_\Gamma(U))$ hence by Lemma \ref{propfamily2} proved in appendix, the block
$\left(B_{1,\varepsilon}=1_{[\varepsilon,1]}(\lambda)\lambda^{s+d-1}\lambda^{-s}t(x,\lambda h)\psi(x,h)\right)_{\varepsilon}$
is a bounded family in
$\mathcal{D}_{V}^\prime(\mathbb{R}\times U)$ when
$\varepsilon\in(0,1]$ and
where 
\begin{eqnarray}
V=\{
\left(\begin{array}{ccc}
\lambda & ;&\widehat{\lambda} \\
x &;& \widehat{\xi}\\
h &;&\widehat{\eta}  
\end{array}\right)
| \left(\begin{array}{ccc}
x &;& \widehat{\xi}\\
h &;& \widehat{\eta}  
\end{array}\right)\in  \Gamma\cup \underline{0}, (x,h)\in\text{supp }\psi\}.
\end{eqnarray}
\end{enumerate}

We evaluate the wave front set of the family
of products
of distributions $\left(B_{1,\varepsilon}(\lambda,x,h)\delta(x-x^\prime,\lambda h-h^\prime)\right)_\varepsilon$
in $T^*(\mathbb{R}\times U\times \mathbb{R}^{n+d})$ using the functional properties of the H\"ormander
product \cite[Theorem 7.1]{Viet-wf2}. 
We start with
the wave front set of the
various distributions involved
in formula (\ref{fundintegralformula}), the family 
$B_{1,\varepsilon}(\lambda,x,h)\otimes 1(x^\prime,h^\prime)$ is bounded in $\mathcal{D}^\prime_{\Lambda_1}(\mathbb{R}\times\mathbb{R}^{n+d}\times\mathbb{R}^{n+d})$
where:
\begin{eqnarray*}
&&\Lambda_1=\{ \left(\begin{array}{ccc}
\lambda & ;&\widehat{\lambda} \\
x & ;& \widehat{\xi}\\
h& ;&\widehat{\eta}\\
x^\prime &;& 0 \\
h^\prime &;& 0  
\end{array}\right)
| \left(\begin{array}{ccc}
x &;& \widehat{\xi}\\
h &;& \widehat{\eta}  
\end{array}\right)\in  \Gamma\cup \underline{0}, (x,h)\in\text{supp }\psi
\}\\
WF\left(\delta_\Delta(\Phi,.)\right)&\subset &\Lambda_2=\{\left(
\begin{array}{ccc}
\lambda &;& -\left\langle h,\eta \right\rangle\\
x &;& -\xi\\
h &;& -\lambda\eta\\
x^\prime &;&\xi\\
h^\prime &;&\eta
\end{array}
\right) 
| (x,\lambda h)=(x^\prime,h^\prime) \text{ and }(\xi,\eta)\neq (0,0)  \}.
\end{eqnarray*}
Note that $\Lambda_1
\cap -\Lambda_2=\emptyset$ which
implies by hypocontinuity of the H\"ormander
product \cite[Theorem 7.1]{Viet-wf2} that
the products
$\left(B_{1,\varepsilon}(\lambda,x,h)\delta(x-x^\prime,\lambda h-h^\prime)\right)_\varepsilon$
are bounded in $\mathcal{D}^\prime_{\Lambda_1+\Lambda_2\cup \Lambda_1\cup\Lambda_2}$.

The projection
\begin{eqnarray*}
\pi_3:=(\lambda,x,h,x^\prime,h^\prime)\longmapsto (x^\prime,h^\prime)
\end{eqnarray*}
is proper on the support of $u$ therefore
the pushforward of $ B_{1,\varepsilon}(\lambda,x,h)\delta(x-x^\prime,\lambda h-h^\prime)$
by $\pi_3$, which equals the integral
$\int_{\mathbb{R}\times U} d\lambda dxdh  B_{1,\varepsilon}(\lambda,x,h)\delta(x-x^\prime,\lambda h-h^\prime)$, 
exists in the
distributional sense.
By continuity hence boudedness of the pushforward \cite[Theorem 7.3]{Viet-wf2},
we find that the family
$\left(t(\chi-\chi_{\varepsilon^{-1}})\right)_\varepsilon$ is bounded
in $\mathcal{D}^\prime_\Lambda$ where
\begin{eqnarray}\label{formula2}
\left(\Lambda \cap T^*_IU\right)
\subset \pi_{3*}\left(\left(\Lambda_1+\Lambda_2\right)\cup\Lambda_1\cup\Lambda_2 \right)\cap T^*_IU .
\end{eqnarray}
We study the closed conic set  
$\pi_{3*}\left(\Lambda_1+\Lambda_2\right)\cap T^*_IU$:
\begin{eqnarray*}
&(x^\prime,0;\xi,\eta)\in \pi_{3*}\left(\Lambda_1+\Lambda_2 \right)\\
\Leftrightarrow & 
\left\lbrace 
\begin{array}{c}
\widehat{\lambda}-\left\langle h,\eta \right\rangle=0\\
\widehat{\xi}-\xi=0\\
\widehat{\eta}-\lambda\eta=0
\end{array}
\text{ s.t. }
(x,\lambda h)=(x^\prime,0), (x,h)\in\text{supp }\psi,
\left( \begin{array}{ccc} x&;& \widehat{\xi} \\
h &;& \widehat{\eta} \end{array}\right)\in\Gamma\cup\underline{0}\right\rbrace
\end{eqnarray*}
has a solution. Note that
$\left\lbrace\begin{array}{c}(x,h)\in\text{supp }\psi \\ 
(x,\lambda h)=(x^\prime,0)\\
\widehat{\eta}-\lambda\eta=0
\end{array}\right\rbrace\implies \vert h\vert\neq 0, \lambda=0, \widehat{\eta}=0$.
Therefore
\begin{eqnarray*}
&(x^\prime,0;\xi,\eta)\in \pi_{3*}\left(\Lambda_1+\Lambda_2 \right)\\
\Leftrightarrow & 
\left\lbrace 
\left( \begin{array}{ccc} x &;& \xi \\
h &;& 0 \end{array}\right)\in\Gamma\cup\underline{0}, (x,h)\in\text{supp }\psi\right\rbrace\\
\Leftrightarrow &  \pi_{3*}\left(\Lambda_1+\Lambda_2 \right)\cap T^*_IU\subset\Xi.
\end{eqnarray*}
It is immediate that $\pi_{3*}\Lambda_1=\emptyset$, finally
\begin{eqnarray*}
\left(\begin{array}{ccc}
x^\prime &; &\xi\\
0&; &\eta
\end{array} \right)\in \pi_{3*}\Lambda_2\cap T^*_IU
&\Leftrightarrow & \left\lbrace\begin{array}{c} \left\langle h,\eta \right\rangle=0\\
\xi=0\\
\lambda\eta=0\end{array} \right\rbrace
\text{ for } (x,\lambda h)=(x^\prime,0), (x,h)\in\text{supp }\psi\\
&\implies &\xi=0
\implies \pi_{3*}\Lambda_2\cap T^*_IU \subset N^*(I).
\end{eqnarray*}

Finally, we can summarize the bounds
that we obtained:
\begin{eqnarray}
\Lambda\cap T_I^*\mathbb{R}^{n+d}
&\subset & \Xi \cup N^*(I)
\end{eqnarray}
which establishes 
the claim
of our theorem.
\end{proof}

Now we prove that under
the assumptions 
of 
Theorem \ref{thm1},
the extension $\overline{t}$
constructed
is weakly homogeneous
of degree $s$ in
$\mathcal{D}^\prime_{\Gamma\cup N^*(I)\cup \Xi}(U)$.

\begin{thm}\label{thm2}
Let $s\in\mathbb{R}$ s.t. $s+d> 0$, 
$\Gamma\subset T^\bullet U$ 
a closed conic set stable 
by scaling.
Then 
the extension
$\overline{t}=\lim_{\varepsilon\rightarrow 0}t(1-\chi_{\varepsilon^{-1}})$ is in $E_s\left(\mathcal{D}^\prime_{\Xi\cup \Gamma\cup N^*(I)}(U)\right)$ for
$\Xi=\{(x,0;\xi,\eta) | (x,h)\in\text{supp }\psi,\,\ (x,h;\xi,0)\in\Gamma  \}$.

\end{thm}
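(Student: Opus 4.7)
The plan is to reduce the statement to Theorem \ref{thm1} by observing that the extension procedure intertwines with scaling. Fix $\lambda\in(0,1]$ and set $\tau_\lambda=\lambda^{-s}t_\lambda$. I would first prove the identity
\begin{equation*}
\lambda^{-s}\overline{t}_\lambda=\overline{\tau_\lambda}\quad\text{in }\mathcal{D}^\prime(U),
\end{equation*}
where $\overline{\tau_\lambda}:=\lim_{\varepsilon\to 0}\tau_\lambda(1-\chi_{\varepsilon^{-1}})$ is the extension of $\tau_\lambda$ furnished by the same construction. This follows by applying the continuous pullback $\Phi_\lambda^*$, which is well-defined on $\mathcal{D}^\prime(U)$ because $\Phi_\lambda$ maps the scale-invariant set $U$ into itself, to the defining limit $\overline{t}=\lim_{\varepsilon\to 0}t(1-\chi_{\varepsilon^{-1}})$, and then performing the change of variable $\varepsilon\mapsto\varepsilon/\lambda$ inside the identity $(1-\chi_{\varepsilon^{-1}})(x,\lambda h)=1-\chi_{(\varepsilon/\lambda)^{-1}}(x,h)$.

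Next I would show that the family $\{\tau_\lambda:\lambda\in(0,1]\}$ is a bounded subset of $E_s(\mathcal{D}^\prime_\Gamma(U\setminus I))$. This is an immediate consequence of the definition: for any equicontinuous $B\subset\mathcal{E}^\prime_{-\Gamma^c}(U\setminus I)$,
\begin{equation*}
P_B(\tau_\lambda)=\sup_{\mu\in[0,1],\,v\in B}\bigl|\langle (\lambda\mu)^{-s}t_{\lambda\mu},v\rangle\bigr|\leqslant P_B(t)<+\infty,
\end{equation*}
since $\lambda\mu\in[0,1]$ whenever $\lambda,\mu\in(0,1]$; hence this supremum is uniform in $\lambda\in(0,1]$.

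Finally I would revisit the proof of Theorem \ref{thm1} and observe that every estimate there is produced by bounded operations on the input: Lemma \ref{propfamily2} controls the boundedness of the block $B_{1,\varepsilon}$ in $\mathcal{D}^\prime_V(\mathbb{R}\times U)$ in terms of a seminorm of the input in $E_s(\mathcal{D}^\prime_\Gamma)$, and the hypocontinuity of the H\"ormander product together with the boundedness of pushforward by $\pi_3$ (Theorems~7.1 and 7.3 of \cite{Viet-wf2}) propagate this control to the limit. Consequently the extension map $u\mapsto\overline{u}$ of Theorem \ref{thm1} sends bounded subsets of $E_s(\mathcal{D}^\prime_\Gamma(U\setminus I))$ to bounded subsets of $\mathcal{D}^\prime_\Lambda(U)$ for $\Lambda=\Gamma\cup N^*(I)\cup\Xi$. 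Applying this to the bounded family $\{\tau_\lambda\}_\lambda$ and using the identity from the first step, I conclude that $(\lambda^{-s}\overline{t}_\lambda)_{\lambda\in(0,1]}=(\overline{\tau_\lambda})_\lambda$ is bounded in $\mathcal{D}^\prime_\Lambda(U)$, which is exactly the statement $\overline{t}\in E_s(\mathcal{D}^\prime_\Lambda(U))$.

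The main obstacle is that one has to reinterpret Theorem \ref{thm1} as a boundedness statement on the extension map rather than a pointwise construction. This requires tracing each inequality in its proof and verifying that the estimate depends only on a continuous seminorm of the input in $E_s(\mathcal{D}^\prime_\Gamma)$; the argument is routine given the functional analytic tools of \cite{Viet-wf2} but must be written carefully to isolate the dependence.
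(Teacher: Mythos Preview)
Your approach is correct and amounts to a cleaner repackaging of the paper's own argument. The paper also introduces the family $t^\mu=\mu^{-s}t_\mu$ (your $\tau_\mu$), observes it is bounded in $E_s(\mathcal{D}^\prime_\Gamma)$, and then reruns the integral formula of Theorem~\ref{thm1} with Lemma~\ref{propfamily2} applied to this bounded family. The difference is organizational: the paper computes $\mu^{-s}\left\langle\left(t(1-\chi_{\varepsilon^{-1}})\right)_\mu,\varphi\right\rangle$ directly, obtaining an integral over $[\varepsilon/\mu,1/\mu]$ plus a boundary term $\mu^{-s}t_\mu(1-\chi_\mu)$, and must insert a support argument to truncate the upper limit of integration at $R/a$. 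Your identity $\mu^{-s}\overline{t}_\mu=\overline{\tau_\mu}$ absorbs this bookkeeping---the change of variable $\varepsilon\mapsto\varepsilon/\mu$ and the partition-of-unity identity $\int_1^{1/\mu}\frac{d\lambda}{\lambda}\psi_{\lambda^{-1}}=\chi_\mu-\chi$ show the two expressions coincide exactly---so that the problem reduces literally to Theorem~\ref{thm1} applied uniformly over the bounded input $\{\tau_\mu\}$. The substance (boundedness of the extension via Lemma~\ref{propfamily2}, hypocontinuity of the product, boundedness of the pushforward) is identical; you have simply isolated the functoriality of the construction more explicitly.
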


\begin{proof}
For all test function $\varphi$, 
we study
the family
$\left(\left\langle\mu^{-s}\overline{t}_\mu,\varphi\right\rangle\right)_{\mu\in (0,1]}$. 
But since $\overline{t}=\lim_{\varepsilon\rightarrow 0}t(1-\chi_{\varepsilon^{-1}})$, it suffices
to study the family  $\mu^{-s}  \left(t(1-\chi_{\varepsilon^{-1}})\right)_{\varepsilon,\mu}$ for $\varepsilon\leqslant \mu$. 

A simple calculation using variable changes gives:
\begin{eqnarray*}
\forall 0 < \varepsilon\leqslant \mu\leqslant 1  , \mu^{-s} \left\langle \left(t(1-\chi_{\varepsilon^{-1}})\right)_\mu , \varphi \right\rangle
&=&\int_{\varepsilon}^1\frac{d\lambda}{\lambda} \mu^{-s-d} \left\langle t\psi_{\lambda^{-1}} , \varphi_{\mu^{-1}} \right\rangle  + \left\langle \mu^{-s}t_\mu(1-\chi_\mu),\varphi \right\rangle \\
&=&\int_{\varepsilon}^1\frac{d\lambda}{\lambda} \left(\frac{\lambda}{\mu}\right)^{s+d} \left\langle \lambda^{-s}t_\lambda\psi, \varphi_{\frac{\lambda}{\mu}} \right\rangle+
\left\langle \mu^{-s}t_\mu(1-\chi_\mu),\varphi \right\rangle\\
&=&\int_{\frac{\varepsilon}{\mu}}^{\frac{1}{\mu}}\frac{d\lambda}{\lambda} \lambda^{s+d} \left\langle (\lambda\mu)^{-s}t_{\lambda\mu}\psi , \varphi_\lambda \right\rangle  + 
\left\langle \mu^{-s}t_\mu(1-\chi_\mu),\varphi \right\rangle.
\end{eqnarray*}

First, note that the family
$(\mu^{-s}t_\mu)_{\mu\in(0,1]}$
is bounded in 
$\mathcal{D}^\prime_{\Gamma}(U\setminus I)$ and $(1-\chi_\mu)\rightarrow 0$
when $\mu\rightarrow 0$
therefore the family
$\left( \mu^{-s}t_\mu(1-\chi_\mu)\right)_{\mu\in(0,1]}$
is bounded.

The next thing we show is that the integral
$\int_{\frac{\varepsilon}{\mu}}^{\frac{1}{\mu}}\frac{d\lambda}{\lambda} \lambda^{s+d} \left\langle (\lambda\mu)^{-s}t_{\lambda\mu}\psi , \varphi_\lambda \right\rangle $ does not blow up
because its integrand vanishes when $\lambda$ is large enough.
Let $K$ be a compact subset of $\mathbb{R}^{n+d}$.
\begin{eqnarray*} 
\varphi\in\mathcal{D}_K(U)
&\implies &\exists R>0 \text{ s.t. }\text{supp }\varphi\subset
\{\vert h\vert\leqslant R\}\\ 
&\implies &
\text{supp }\varphi_\lambda\subset\{\vert h\vert\leqslant \lambda^{-1}R\}.
\end{eqnarray*}
Recall that 
$\pi$ was the projection 
$\pi:=(x,h)\in\mathbb{R}^{n+d}\mapsto 
x\in\mathbb{R}^n$. 
\begin{eqnarray*}
&&\pi\text{ is proper on }\text{supp }\psi\text{ and }
\pi(\text{supp }\varphi)\subset \mathbb{R}^n \text{ compact } \\
&\implies & 
\text{supp }t_{\lambda\mu}\psi|_{(K\times\mathbb{R}^d)\cap U}\subset\{a\leqslant \vert h\vert\leqslant b\}\text{ for  } 0<a<b\\
&\implies &\left\lbrace\lambda\geqslant \frac{R}{a}\implies\left\langle t_{\lambda\mu}\psi,\varphi_\lambda \right\rangle=0\right\rbrace\\
&\implies &\forall \mu\in(0,1],\varepsilon\leqslant\mu, \int_{\frac{\varepsilon}{\mu}}^{\frac{1}{\mu}}\frac{d\lambda}{\lambda} \lambda^{s+d} \left\langle (\lambda\mu)^{-s}t_{\lambda\mu}\psi , \varphi_\lambda \right\rangle=\int_{0}^{+\infty}\frac{d\lambda}{\lambda} 1_{\{\frac{\varepsilon}{\mu}\leqslant \frac{R}{a}\}}(\lambda)\lambda^{s+d} \left\langle(\lambda\mu)^{-s} t_{\lambda\mu}\psi ,\varphi_{\lambda} \right\rangle.
\end{eqnarray*}
For all $t$,
we define $t^\mu(x,h)=\mu^{-s} t(x,\mu h)$
and we consider
the family of distributions
$B=(t^\mu)_{\mu\in(0,1]}$
which is bounded in $E_s(\mathcal{D}^\prime_\Gamma(U))$.
Therefore
the result of
Lemma
(\ref{propfamily2})
implies that
the family 
$$\left(1_{\{\frac{\varepsilon}{\mu}\leqslant \frac{R}{a}\}}(\lambda)\lambda^{s+d-1}
\lambda^{-s}t^\mu(x,\lambda h)\psi(x,h)\right)_{0<\varepsilon\leqslant\mu\leqslant 1}$$
is \textbf{bounded} in
$\mathcal{D}^\prime_\Lambda([0,\frac{R}{a}]\times (U\setminus I)),$
for $\Lambda=\{ (\lambda,x,h;\tau,\xi,\eta) 
\in \dot{T}^*([0,\frac{R}{a}]\times (U\setminus I)) 
|(x,h)\in\text{supp }\psi, (x,h;\xi,\eta)\in \Gamma\cup\underline{0}\}$.
Therefore, we can repeat
the proof of Theorem
\ref{thm1} for the family
\begin{eqnarray}
\int_{\mathbb{R}\times\mathbb{R}^{n+d}}
d\lambda dxdh 1_{\{\frac{\varepsilon}{\mu}\leqslant \frac{R}{a}\}}(\lambda)\lambda^{s+d-1}
\lambda^{-s}t^\mu(x,\lambda h)\psi(x,h)\delta_\Delta(\Phi(\lambda,x,h),.)
\end{eqnarray}
Using the fact that
\begin{enumerate}
\item the H\"ormander product
of $ 1_{\{\frac{\varepsilon}{\mu}\leqslant \frac{R}{a}\}}(\lambda)\lambda^{s+d-1}
\lambda^{-s}t^\mu(x,\lambda h)\psi(x,h)$
with $\delta_\Delta(\Phi(\lambda,x,h),.)$ 
is 
hypocontinuous
\cite[Thm 7.1]{Viet-wf2}  
\item
the push--forward of
$ 1_{\{\frac{\varepsilon}{\mu}\leqslant \frac{R}{a}\}}(\lambda)\lambda^{s+d-1}
\lambda^{-s}t^\mu(x,\lambda h)\psi(x,h)
\delta_\Delta(\Phi(.),.)$
by the projection $\pi_3$ is continuous
in the normal topology hence bounded 
\cite[Thm 7.3]{Viet-wf2},
\end{enumerate}
we obtain the desired result.
\end{proof}

\subsubsection{Optimality of the wave front set
of the extension.}

We show with an example 
how our technique gives an 
\textbf{optimal bound}
for the wave front set
of the extension
of distributions
in a situation
where the
assumptions of the results 
of Brunetti--Fredenhagen~\cite[Lemma 6.1]{Brunetti2} are not
satisfied.
\paragraph{The wave front set of an example of extension
not handled by Brunetti--Fredenhagen's method.}
We work in $T^*\mathbb{R}^{3}$ with variables
$(x_1,x_2,h;\xi_1,\xi_2,\eta)$ 
and $I$ is the plane 
$\left(\mathbb{R}^2\times\{0\}\right)=\{h=0\}$.
Let $f\in C^\infty(\mathbb{R}\setminus \{0\})\cap L^\infty(\mathbb{R}), f>0$ 
which is nonsmooth at the origin 
and let us consider the function
$f(x_1)$ as a distribution
in the vector space $\mathbb{R}^3\setminus I $.
Then we prove the following
claim:
\begin{prop}
Let $\chi\in C^\infty(\mathbb{R})$ be a
smooth function s.t. $\chi(h)=1$ when $h\leq 1$
and $\chi(h)=0$ when $h\geqslant 2$. 
Then the 
family of distributions 
$f(x_1)(\chi(\varepsilon^{-1}h)-\chi(h))_{\varepsilon}$
converges to $f(x_1)$ when $\varepsilon\rightarrow 0$ 
in
$\mathcal{D}^\prime_V$ where 
$V=N^*(\{x_1=0\})\cup N^*I\cup\left(N^*(\{x_1=0\}) + N^*I\right)$.
\end{prop}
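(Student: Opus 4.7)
The plan is to recognize the statement as a direct application of the general extension Theorems~\ref{thm1} and~\ref{thm2} with $t := f(x_1) \in \mathcal{D}^\prime(\mathbb{R}^3 \setminus I)$, scaling degree $s = 0$, and cone $\Gamma := N^*(\{x_1 = 0\}) \cap T^\bullet(\mathbb{R}^3 \setminus I)$. Since $f$ is smooth off the origin and $t$ depends only on $x_1$, we have $WF(t) = \Gamma$. The cone $\Gamma$ is closed, conic, and invariant under the scaling $(x_1,x_2,h;\xi_1,\xi_2,\eta)\mapsto(x_1,x_2,\lambda^{-1}h;\xi_1,\xi_2,\lambda\eta)$ because its defining conditions $x_1=0$, $\xi_2=0$, $\eta=0$, $\xi_1\neq 0$ are preserved. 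Moreover $t(x,\lambda h) = f(x_1)$ is independent of $h$, so the family $(\lambda^{-s}t_\lambda)_{\lambda\in(0,1]}$ is literally constant equal to $t$, hence trivially bounded in $\mathcal{D}^\prime_\Gamma(\mathbb{R}^3\setminus I)$, giving $t \in E_0(\mathcal{D}^\prime_\Gamma(\mathbb{R}^3\setminus I))$. Since $s + d = 0 + 1 = 1 > 0$, the hypotheses of Theorems~\ref{thm1} and~\ref{thm2} are met, and their conclusion provides a limit $\overline{t} = \lim_{\varepsilon\to 0} t(1-\chi_{\varepsilon^{-1}})$ which converges in $\mathcal{D}^\prime_\Lambda(\mathbb{R}^3)$ with $\Lambda = WF(t) \cup N^*(I) \cup \Xi$.

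The key remaining task is to compute $\Xi = \{(x,0;\xi,\eta) : \exists h,\ (x,h;\xi,0) \in \Gamma \text{ and } (x,h)\in\text{supp }\psi\}$ and to match it with the set $V$ described in the proposition. Requiring $(x,h;\xi,0) \in \Gamma$ forces $x_1 = 0$, $\xi_2 = 0$, $\xi_1 \neq 0$, while $h$ ranges over the (necessarily nonzero) support of $\psi$ and $\eta$ is unconstrained; hence
\[
\Xi = \{(0,x_2,0;\xi_1,0,\eta) : \xi_1 \neq 0,\ \eta \in \mathbb{R}\}.
\]
Comparing with $V = N^*(\{x_1=0\}) \cup N^*(I) \cup (N^*(\{x_1=0\}) + N^*(I))$, one checks by direct inspection that the three pieces match: the slice $\eta=0$ of $\Xi$ completes $WF(t) = \{(0,x_2,h;\xi_1,0,0):h\neq 0,\xi_1\neq 0\}$ by adding the missing $h=0$ covectors, thereby reconstructing all of $N^*(\{x_1=0\})$; the slice $\xi_1\neq 0,\ \eta\neq 0$ of $\Xi$ realizes precisely the cross term $N^*(\{x_1=0\}) + N^*(I)$ concentrated on $\{x_1=0,\ h=0\}$; and $N^*(I)$ appears directly. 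Consequently $\Lambda = V$.

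To obtain the statement, write $1-\chi_{\varepsilon^{-1}} = (1-\chi) + (\chi - \chi_{\varepsilon^{-1}})$; the first summand gives $t(1-\chi)$, which is a fixed element of $\mathcal{D}^\prime_V(\mathbb{R}^3)$, and Theorems~\ref{thm1}--\ref{thm2} applied to the remaining partition-of-unity integral supply convergence of $t(\chi - \chi_{\varepsilon^{-1}})$ in $\mathcal{D}^\prime_V(\mathbb{R}^3)$, yielding the claim. There is no genuinely hard step beyond the abstract extension theorems; the only delicate point is the bookkeeping needed to identify the abstract bound $\Xi$ with the concrete sum $N^*(\{x_1=0\}) + N^*(I)$, which is what makes the example demonstrate the \emph{optimality} of Theorem~\ref{thm1} in a situation where the conormal landing condition of Brunetti--Fredenhagen fails.
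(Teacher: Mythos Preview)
Your argument is correct for the proposition as literally stated, but it takes a different route from the paper and, in doing so, misses the point the example is meant to make. You deduce convergence in $\mathcal{D}^\prime_V$ by invoking Theorems~\ref{thm1}--\ref{thm2} with $t=f(x_1)$, $s=0$, $\Gamma=N^*(\{x_1=0\})\cap T^\bullet(\mathbb{R}^3\setminus I)$, and then identify the abstract bound $WF(t)\cup N^*(I)\cup\Xi$ with $V$; this bookkeeping is done correctly. The paper, by contrast, proceeds by a direct Fourier computation: it factorises $\mathcal{F}\big(f\varphi\,(\chi(\varepsilon^{-1}\cdot)-\chi)\big)(\xi_1,\xi_2,\eta)=\widehat{f\varphi}(\xi_1,\xi_2)\big(\varepsilon\widehat{\chi}(\varepsilon\eta)-\widehat{\chi}(\eta)\big)$ and analyses the decay in each covariable region, distinguishing the cases $x_1\neq 0$ and $x_1=0$. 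The payoff of the paper's approach is that it also produces \emph{lower} bounds: it shows that the family is \emph{not} bounded in $\mathcal{D}^\prime_W$ for any closed cone $W$ missing $N^*(I)$ (from the blow-up of $(1+|\eta|)^N\varepsilon|\widehat{\chi}(\varepsilon\eta)|$) and, over $x_1=0$, that the cross term $N^*(\{x_1=0\})+N^*(I)$ is genuinely present. Thus the paper establishes that $V$ is the \emph{smallest} admissible cone, which is exactly the optimality claim for the $\Xi$-term in Theorem~\ref{thm1}. Your approach cannot yield this: applying Theorem~\ref{thm1} to an example only reproduces the upper bound that Theorem~\ref{thm1} already asserts, so your final sentence---that this ``makes the example demonstrate the optimality of Theorem~\ref{thm1}''---overstates what your argument achieves.
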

In fact, for all $\varepsilon>0$, the
wave front set of the distribution
$f(x_1)(\chi(\varepsilon^{-1}h)-\chi(h))$
is in $N^*(\{x_1=0\})$ therefore
it does not satisfy the assumption that
the closure of
$WF(f(x_1)(\chi(\varepsilon^{-1}h)-\chi(h)))$
should be contained in the
conormal $N^*(I)$ which is 
an important assumption of Theorem 6.9
in the paper \cite{Brunetti2} 
of Brunetti Fredenhagen.
\begin{proof}
Let $V$ be the smallest 
closed 
conic set such that
the family $f(x_1)(\chi(\varepsilon^{-1}h)-\chi(h)
)_{\varepsilon
\in(0,1]}$
is bounded in
$\mathcal{D}^\prime_V$.
It is obvious that
outside $\{h=0\}$ the cone
$V$ equals $N^*(\{x_1=0\})$.
It suffices to calculate
$V$ over $\{h=0\}$.

To estimate $V$ over $\{h=0\}$,
there are two cases to study: $x_1=0$ and $x_1\neq 0$ ($x_2$ is arbitrary).
We start with the case $x_1\neq 0$.
Let $\varphi$
be a test function:
\begin{eqnarray*}
&&\mathcal{F}\left(f(x_1)\varphi(x_1,x_2)(\chi(\varepsilon^{-1}h)-\chi(h)
) \right)\\
&=& \widehat{f\varphi}(\xi_1,\xi_2)\left(\widehat{\chi(\varepsilon^{-1}.)}(\eta) - \widehat{\chi}(\eta) \right) \\
&=& \widehat{f\varphi}(\xi_1,\xi_2)\left(\varepsilon\widehat{\chi}(\varepsilon\eta) - \widehat{\chi}(\eta) \right)
\end{eqnarray*}
Since $\widehat{\chi}\neq 0$ and is even analytic, we have
$\forall R>0, 
\sup_{\vert\eta\vert\geqslant R}\vert\widehat{\chi}(\eta)\vert=C(R)>0$.
This gives us the estimate
\begin{eqnarray*}
\forall R>0, \sup_{\eta}(1+\vert\eta\vert)^N 
\varepsilon\vert\widehat{\chi}(\varepsilon\eta)\vert&\geqslant &\left(1+\frac{R}{\varepsilon}\right)^N\varepsilon C(R)\\
&\geqslant & 
\varepsilon^{-N+1}R^NC(R)\rightarrow_{\varepsilon\rightarrow 0} \infty
\end{eqnarray*}
This implies that
$(\chi(\varepsilon^{-1}.)-\chi)$
is \textbf{bounded in} 
$\mathcal{D}^\prime_{N^*(I)}$, 
therefore using the fact that $f>0$, 
we find that $V$ corresponds
with the conormal $N^*(I)$
of $I$ as long as $f$ is smooth hence
outside $x_1=0$.

We conclude by studying the case where $x_1=0$.
Since $f$ is singular at $x_1=0$ and that 
$ss(f)=\{0\}=\pi_{T^*\mathbb{R}\mapsto \mathbb{R}}(WF(f))$, the wave front set of $f$ in the fiber $T_{0}^*\mathbb{R}$ over 
$x_1=0$ is non empty and we deduce there is a function $\varphi$ of the two variables
$(x_1,x_2)$
such that $\widehat{f\varphi}(\xi_1,\xi_2)$ 
has slow decrease in the direction $(\xi_1,0)$. 
The Fourier transform of $f\varphi(\chi(\varepsilon^{-1}.)-\chi)$
w.r.t $(x_1,x_2,h)$ equals $\widehat{f\varphi}(\xi_1,\xi_2)\left(\widehat{\chi(\varepsilon^{-1}.)}(\eta) - \widehat{\chi}(\eta) \right) $ from which
one easily
concludes that
$\{(0,x_2,0;\xi_1,0,\eta) \}=N^*(I)+N^*(\{x_1=0\})\subset V$.
\end{proof}

\paragraph{The singular case.}
In the next part,
we will deal with the singular case
where 
$-m-1<s+d\leqslant -m,m\in\mathbb{N}$.
Instead 
of calculating
the pairing
$\left\langle t(\chi-\chi_{\varepsilon^{-1}}), \varphi\right\rangle$,
we will subtract
from $\varphi$ its Taylor polynomial
in the $h$ variable
to a sufficient order,
therefore we
will pair
$t(\chi-\chi_{\varepsilon^{-1}})$
with the Taylor
remainder $I_m\varphi$
defined by
\begin{equation}\label{Taylorformula}
I_m\varphi(x,h)=\frac{1}{m!}\sum_{\vert\alpha\vert=m+1}h^\alpha\int_0^1(1-t)^m \left(\partial_h^\alpha \varphi\right)(x,th)dt.
\end{equation}
Then we will 
study the existence
of the limit:
\begin{eqnarray}\label{renormalizedformula}
\left\langle\overline{t},\varphi \right\rangle=\underset{\varepsilon\rightarrow 0}{\lim}
\left\langle t(\chi-\chi_{\varepsilon^{-1}}), I_m\varphi \right\rangle+\left\langle t(1-\chi),\varphi\right\rangle.
\end{eqnarray}

First, observe that
if the support of $\varphi$ does not meet
$I$, then $\varphi$ equals its Taylor remainder 
$I_m\varphi$ since $\varphi$ vanishes
at infinite order on the subspace $I$ and formula \ref{renormalizedformula}
is well defined and coincides with 
$\left\langle t,\varphi\right\rangle$.
Therefore if $\overline{t}$
were defined, it would 
be
an extension
of $t$.

\begin{thm}\label{thm3}
Let $s\in\mathbb{R}$ s.t. $-m-1<s+d\leqslant -m,m\in\mathbb{N}$, 
$U\subset \mathbb{R}^{n+d}$
a convex set,  
$\Gamma\subset T^\bullet U$ 
a closed conic set stable 
by scaling.
If $t\in\mathcal{D}^\prime(U\setminus I)$ is weakly
homogeneous of degree $s$ in 
$\mathcal{D}^\prime_\Gamma(U\setminus I)$,
then
formula (\ref{renormalizedformula})
defines an extension
$\overline{t}$ of $t$
and $WF(\overline{t})\subset 
WF(t)\cup N^*(I)\cup \Xi$
where 
\begin{equation*}
\Xi=\{(x,0;\xi,\eta) |(x,h)\in\text{supp }\psi,\,\ (x,h;\xi,0)\in\Gamma  \}.
\end{equation*}
\end{thm}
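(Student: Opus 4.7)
The strategy is to adapt the proof of Theorem \ref{thm1} by replacing every appearance of $\varphi$ by the Taylor remainder $I_m\varphi$ from formula \ref{Taylorformula}. The crucial gain is that the rescaled remainder satisfies
\begin{equation*}
(I_m\varphi)_\lambda(x,h) = \frac{\lambda^{m+1}}{m!}\sum_{|\alpha|=m+1} h^\alpha \int_0^1 (1-\theta)^m (\partial_h^\alpha\varphi)(x,\theta\lambda h)\,d\theta,
\end{equation*}
so an extra factor $\lambda^{m+1}$ is freely available. This shifts the integrand weight in the partition-of-unity integral from $\lambda^{s+d-1}$ to $\lambda^{s+d+m}$, which is integrable on $[0,1]$ precisely under the hypothesis $s+d > -m-1$.

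For convergence in $\mathcal{D}'(U)$, I would write, in analogy with \ref{Meyerformula},
\begin{equation*}
\langle t(\chi-\chi_{\varepsilon^{-1}}), I_m\varphi\rangle = \int_\varepsilon^1 d\lambda\, \lambda^{s+d-1}\langle(\lambda^{-s}t_\lambda)\psi,(I_m\varphi)_\lambda\rangle,
\end{equation*}
and combine the boundedness of $(\lambda^{-s}t_\lambda)\psi$ in $\mathcal{D}'(U\setminus I)$ with the estimate $|\langle(\lambda^{-s}t_\lambda)\psi,(I_m\varphi)_\lambda\rangle|\leq C\lambda^{m+1}\pi_{N,K'}(\varphi)$ (for some $N$ depending on $m$) to get a uniform $\varepsilon$-independent bound $|\langle t(\chi-\chi_{\varepsilon^{-1}}),I_m\varphi\rangle|\leq C'\pi_{N,K'}(\varphi)$. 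Together with $\langle t(1-\chi),\varphi\rangle$, this produces a well-defined distributional limit $\overline{t}\in\mathcal{D}'(U)$. When $\operatorname{supp}\varphi\cap I=\emptyset$, $\varphi$ vanishes to infinite order on $I$, hence $I_m\varphi=\varphi$ and $\langle\overline{t},\varphi\rangle=\langle t,\varphi\rangle$, so $\overline{t}$ is genuinely an extension of $t$.

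For the wave front estimate, I would derive an integral formula for $\overline{t}(x',h')$ paralleling \ref{fundintegralformula}. Substituting the Taylor representation and integrating by parts to move $\partial_h^\alpha$ onto the diagonal delta yields, up to the smooth contribution $t(1-\chi)$,
\begin{equation*}
\overline{t}(x',h') = \sum_{|\alpha|=m+1}\frac{(-1)^{m+1}}{m!}\int_0^1 d\theta\,(1-\theta)^m\!\!\int_{\mathbb{R}\times U} \!\!\! d\lambda\,dx\,dh\, B_{\alpha,\varepsilon}(\lambda,x,h)\,\partial_{h'}^\alpha\delta(x-x',\theta\lambda h-h'),
\end{equation*}
where $B_{\alpha,\varepsilon}:=1_{[\varepsilon,1]}(\lambda)\lambda^{s+d-1}(\lambda^{-s}t)(x,\lambda h)\psi(x,h)h^\alpha$ is, by Lemma \ref{propfamily2}, bounded in $\mathcal{D}'_V(\mathbb{R}\times U)$ with the same cone $V$ as in Theorem \ref{thm1}. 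The subsequent analysis — H\"ormander product with the derivative of the diagonal delta (whose conic directions agree with those of $\delta$ itself) and push-forward along $(\lambda,\theta,x,h,x',h')\mapsto(x',h')$ — reproduces exactly Step 2 of the proof of Theorem \ref{thm1}, since the $\theta$-integration over the compact interval $[0,1]$ contributes no new cotangent directions.

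The main obstacle is to check that the cone arithmetic at the end of the proof of Theorem \ref{thm1} survives the replacement of $\lambda$ by the product $\theta\lambda$ in the support condition $(x,\theta\lambda h)=(x',h')$ of the $\delta$. Over $\{h'=0\}$, the constraint $(x,h)\in\operatorname{supp}\psi$ still forces $|h|\neq 0$, so one is again forced into $\theta\lambda=0$ and hence $\widehat\eta=0$; the cone alternatives reduce as before to either $\xi=0$ (giving $N^*(I)$) or $(x,h;\xi,0)\in\Gamma$ (giving $\Xi$). Once this is verified, the bound $WF(\overline{t})\subset WF(t)\cup N^*(I)\cup\Xi$ follows, completing the proof.
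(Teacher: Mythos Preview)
Your overall strategy is correct and essentially coincides with the paper's proof: the paper packages your $\theta$-integral into a single Schwartz kernel $I_m(\cdot,\cdot)=\sum_{|\alpha|=m+1}h^\alpha R_\alpha(\cdot,\cdot)$, computes $WF(R_\alpha)$ in an appendix lemma via exactly the push-forward over $\theta\in[0,1]$ that you describe, and then runs the product/push-forward argument with $\lambda^{-m-1}I_m(\Phi(\cdot),\cdot)$ in place of $\delta_\Delta(\Phi(\cdot),\cdot)$. Your version simply keeps $\theta$ as an explicit extra variable in the extended space; the cone computation over $T^*_IU$ is then identical.

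There is, however, a bookkeeping slip that breaks the argument as written. In your integral formula you set
\[
B_{\alpha,\varepsilon}=1_{[\varepsilon,1]}(\lambda)\,\lambda^{s+d-1}\,(\lambda^{-s}t)(x,\lambda h)\,\psi(x,h)\,h^\alpha,
\]
but the factor $\lambda^{m+1}$ that you correctly identified in the first paragraph (coming from $(\lambda h)^\alpha=\lambda^{m+1}h^\alpha$ inside $(I_m\varphi)_\lambda$) has disappeared. With the exponent $s+d-1$ the weight $f(\lambda)=\lambda^{s+d-1}$ is \emph{not} in $L^1([0,1])$ under the hypothesis $s+d\leqslant -m\leqslant 0$, so Lemma~\ref{propfamily2} does not apply and the family $(B_{\alpha,\varepsilon})_\varepsilon$ is not bounded in $\mathcal{D}'_V$. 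The correct block is
\[
B_{\alpha,\varepsilon}=1_{[\varepsilon,1]}(\lambda)\,\lambda^{s+d+m}\,(\lambda^{-s}t)(x,\lambda h)\,\psi(x,h)\,h^\alpha,
\]
for which $f(\lambda)=\lambda^{s+d+m}\in L^1([0,1])$ precisely because $s+d+m>-1$. Once this is fixed, Lemma~\ref{propfamily2} applies, and the remainder of your argument (product with $\partial_{h'}^\alpha\delta(x-x',\theta\lambda h-h')$, push-forward over $(\theta,\lambda,x,h)$, and the cone analysis forcing $\theta\lambda=0$, $\widehat\eta=0$ over $I$) goes through exactly as in the paper.
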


\begin{proof}
Before we state our theorem,
let us describe the
central new ingredient
of our proof. In appendix, we will
study the Schwartz kernel
of the
operator $I_m\in\mathcal{D}^\prime(\mathbb{R}^{n+d}\times\mathbb{R}^{n+d})$ 
realizing
the projection
on the 
Taylor remainder. We work in $\mathbb{R}^{n+d}\times\mathbb{R}^{n+d}$
with coordinates $(x,h,x^\prime,h^\prime)$ and we note
$I_m((x,h),(x^\prime,h^\prime))$ this Schwartz kernel.

\textbf{Step 1}
The distribution $I_m(.,.)$
plays the same role in the proof
of Theorem \ref{thm3}
as $\delta_\Delta(.,.)$ in the proof
of Theorem \ref{thm1}
and we prove in appendix ( Lemma
\ref{WFI_m}) 
that: 
\begin{eqnarray}\label{equationIm}
I_m(.,.)=\sum_{\vert\alpha\vert=m+1} h^\alpha R_\alpha(.,.)
\text{ where }\forall \alpha, R_\alpha(.,.)\in\mathcal{D}^\prime(\mathbb{R}^{n+d}\times \mathbb{R}^{n+d})\\
WF\left( R_\alpha(.,.)\right)\subset  \{(x,h,x,th;\xi,t\eta,-\xi,-\eta) |t\in[0,1], (\xi,\eta)\neq (0,0)\}.
\end{eqnarray}
\textbf{Step 2}
Let 
$\varphi\in\mathcal{D}(U)$ 
be a test 
function,
we have to establish 
the convergence
of the formula
\begin{eqnarray*}
\left\langle t(\chi-\chi_{\varepsilon^{-1}}), I_m(\varphi) \right\rangle+\left\langle t(1-\chi),\varphi\right\rangle
\end{eqnarray*} 
when $\varepsilon\rightarrow 0$.
As in the proof of
Theorem \ref{thm1},
we use the partition
of unity
to derive an equivalent
formula for 
$\underset{\varepsilon\rightarrow 0}{\lim}\left\langle t(\chi-\chi_{\varepsilon^{-1}}), I_m(\varphi) \right\rangle$
in terms of
the family
$\lambda^{-s}t_\lambda$:
\begin{eqnarray}\label{renormformula}
\underset{\varepsilon\rightarrow 0}{\lim}\left\langle t(\chi-\chi_{\varepsilon^{-1}}), I_m(\varphi) \right\rangle
&=&\int_0^1 \frac{d\lambda}{\lambda}\lambda^{s+d+m+1}\left\langle (\lambda^{-s}t_\lambda)\psi , \sum_{\vert\alpha\vert=m+1} h^\alpha R_{\alpha}(\varphi)_\lambda \right\rangle
\end{eqnarray} 
where $R_{\alpha}(\varphi)_\lambda(x,h)=\frac{1}{m!}\int_0^1(1-t)^m \partial_h^\alpha\varphi(x,t\lambda h)dt$
and the r.h.s. of (\ref{renormformula}) is absolutely convergent since
$s+d+m+1>0$.
It remains to prove that
the limit when $\varepsilon\rightarrow 0$ is a well defined
distribution.
The proof is similar to the proof in Theorem \ref{thm1}
except that we should use the fact that
the seminorms of $\psi\sum_{\vert\alpha\vert=m+1} h^\alpha R_{\alpha}(\varphi)_\lambda$
are controlled by the seminorms of $\varphi$ by Taylor's formula for the integral remainder.

\textbf{Step 3}
We are reduced to
prove the boundedness 
of the family distributions parametrized by $\varepsilon\in(0,1]$
\begin{eqnarray*}
t(\chi-\chi_{\varepsilon^{-1}})I_m(x^\prime,h^\prime)&=&
\int_{\mathbb{R}\times\mathbb{R}^{n+d}}
d\lambda dxdh 1_{[\varepsilon,1]}(\lambda)\lambda^{s+d+m}\lambda^{-s}t(x,\lambda h)\psi(x,h)\lambda^{-m-1}I_m((x,\lambda h),(x^\prime,h^\prime))
\end{eqnarray*}
in $\mathcal{D}^\prime_\Lambda$ where
$\Lambda=WF(t)\cup\Xi\cup N^*(I) $.

We can repeat exactly
the same proof 
as for Theorem
\ref{thm1} using 
parallel notations.
Set $B_{1,\varepsilon}(\lambda,x,h)=1_{[\varepsilon,1]}(\lambda)\lambda^{s+d+m}\lambda^{-s}t(x,\lambda h)\psi(x,h)$
then by Lemma \ref{propfamily2}, the family 
$\left(B_{1,\varepsilon}(\lambda,x,h)\otimes 1(x^\prime,h^\prime)\right)_{\varepsilon\in(0,1]}$
is bounded in $\mathcal{D}_{\Lambda_1}^\prime(U\times U\times\mathbb{R})$ where:
\begin{eqnarray*}
\Lambda_1=\{ \left(\begin{array}{ccc}
\lambda &;&\widehat{\lambda} \\
x &;& \widehat{\xi}\\
h &;&\widehat{\eta}\\
x^\prime &;& 0 \\
h^\prime &;& 0  
\end{array}\right)
| \left(\begin{array}{ccc}
x &;& \widehat{\xi}\\
h &;& \widehat{\eta}  
\end{array}\right)\in  \Gamma\cup \underline{0}, (x,h)\in\text{supp }\psi
\}\end{eqnarray*}
Equations \ref{equationIm} together with the
pull--back theorem of H\"ormander imply:
\begin{eqnarray*}
WF\left(\lambda^{-m-1}I_m(\Phi,.)\right)&\subset &\Lambda_2=\{\left(
\begin{array}{ccc}
\lambda &;& -\left\langle h,\eta \right\rangle\\
x &;& -\xi\\
h &;& -t\eta\\
x^\prime &;&\xi\\
h^\prime &;&\eta
\end{array}
\right) 
| (x,t h)=(x^\prime,h^\prime), t\in[0,\lambda], (\xi,\eta)\neq (0,0)  \}.
\end{eqnarray*}
Note that $\Lambda_1
\cap -\Lambda_2=\emptyset$ 
implies that the family
of products
$\left(B_{1,\varepsilon}(\lambda,x,h)\lambda^{-m-1}I_m((x,\lambda h),(x^\prime,h^\prime))\right)_{\varepsilon}$
is bounded in $\mathcal{D}^\prime_{\Lambda_1+\Lambda_2\cup\Lambda_1\cup\Lambda_2}$
by hypocontinuity of the H\"ormander product \cite[Thm 7.1]{Viet-wf2}.
As in the proof of proposition \ref{thm1}, 
we have $
t(\chi-\chi_{\varepsilon^{-1}})I_m=\pi_3(B_{1,\varepsilon}\lambda^{-m-1}I_m(\Phi,(x^\prime,h^\prime)))$
therefore, in order to conclude, it suffices to control
the family
$\left(\pi_{3*}(B_{1,\varepsilon}\lambda^{-m-1}I_m(\Phi,(x^\prime,h^\prime)))\right)_\varepsilon$
in $\mathcal{D}^\prime_\Lambda$ where
$\Lambda=WF(t)\cup\Xi\cup N^*(I) $,
using continuity of the push--forward 
\cite[Theorem 7.3]{Viet-wf2},
we have the following
estimate:
\begin{eqnarray}\label{formula3}
\left(\Lambda\cap T^*_IU\right)
\subset \pi_{3*}\left(\Lambda_1+\Lambda_2\cup\Lambda_1\cup\Lambda_2 \right)\cap T^*_IU .
\end{eqnarray}
We study the closed conic set  
$\pi_{3*}\left(\Lambda_1+\Lambda_2 \right)\cap T^*_IU$:
\begin{eqnarray*}
&(x^\prime,0;\xi,\eta)\in \pi_{3*}\left(\Lambda_1+\Lambda_2 \right)\\
\Leftrightarrow & 
\left\lbrace 
\begin{array}{c}
\widehat{\lambda}-\left\langle h,\eta \right\rangle=0\\
\widehat{\xi}-\xi=0\\
\widehat{\eta}-t\eta=0
\end{array}
| \exists t\in[0,\lambda] \text{ s.t. }
(x,t h)=(x^\prime,0), (x,h)\in\text{supp }\psi,
\left( \begin{array}{ccc} x &;& \widehat{\xi} \\
h &;& \widehat{\eta} \end{array}\right)\in\Gamma\cup\underline{0}\right\rbrace
\end{eqnarray*}
has a solution. Note that
$\left\lbrace\begin{array}{c}(x,h)\in\text{supp }\psi \\ 
(x,t h)=(x^\prime,0)\\
\widehat{\eta}-t\eta=0
\end{array}\right\rbrace\implies \vert h\vert\neq 0, t=0, \widehat{\eta}=0$.
Therefore
\begin{eqnarray*}
&(x^\prime,0;\xi,\eta)\in \pi_{3*}\left(\Lambda_1+\Lambda_2 \right)\\
\Leftrightarrow & 
\left\lbrace 
\left( \begin{array}{ccc} x &;& \xi \\
h &;& 0 \end{array}\right)\in\Gamma\cup\underline{0}, (x,h)\in\text{supp }\psi\right\rbrace\\
\Leftrightarrow &  \pi_{3*}\left(\Lambda_1+\Lambda_2 \right)\cap T^*_IU\subset\Xi.
\end{eqnarray*}
It is immediate that $\pi_{3*}\Lambda_1=\emptyset$, finally
\begin{eqnarray*}
\left(\begin{array}{ccc}
x^\prime &; &\xi\\
0&; &\eta
\end{array} \right)\in \pi_{3*}\Lambda_2\cap T^*_IU
&\Leftrightarrow & \left\lbrace\begin{array}{c} \left\langle h,\eta \right\rangle=0\\
\xi=0\\
t\eta=0\end{array} \right\rbrace
\text{ for } (x,t h)=(x^\prime,0), t\in[0,\lambda], (x,h)\in\text{supp }\psi\\
&\implies &\xi=0
\implies \pi_{3*}\Lambda_2\cap T^*_IU \subset N^*(I).
\end{eqnarray*}
Finally, we can summarize the bounds
that we obtained:
\begin{eqnarray}
\Lambda\cap T_I^*\mathbb{R}^{n+d}
&\subset & \Xi \cup N^*(I)
\end{eqnarray}
which establishes 
the claim
of our theorem.
\end{proof}

We want to show that our extension
is weakly homogeneous
in $\mathcal{D}^\prime_\Gamma$.

\begin{prop}\label{scal2}
Under the assumptions of proposition (\ref{thm3}),
if $s$ is not an integer then 
the extension map $t\in E_{s}\left(\mathcal{D}^\prime_\Gamma\left(U\setminus I\right)\right)
\longmapsto \overline{t}\in E_{s}(\mathcal{D}^\prime_{\Gamma\cup N^*(I)\cup\Xi}(U))$
is bounded.
\end{prop}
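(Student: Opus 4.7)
The plan is to parallel the proof of Theorem \ref{thm2}, but starting from the renormalized formula (\ref{renormalizedformula}) with Taylor remainder $I_m$ used in Theorem \ref{thm3}. The two main tasks are: to establish boundedness of $(\mu^{-s}\overline{t}_\mu)_\mu$ in $\mathcal{D}^\prime(U)$ via an explicit change of variable in the $\lambda$-integral, then to upgrade to boundedness in $\mathcal{D}^\prime_{\Gamma\cup N^*(I)\cup\Xi}(U)$ by transcribing everything into the doubled-variable formalism of Step~2 of Theorem \ref{thm3} with the extra parameter $\mu$.

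First I would compute $\langle\mu^{-s}\overline{t}_\mu,\varphi\rangle=\mu^{-s-d}\langle\overline{t},\varphi_{\mu^{-1}}\rangle$ by substituting $\varphi_{\mu^{-1}}(x,h)=\varphi(x,\mu^{-1}h)$ into the integral expansion (\ref{renormformula}). The key bookkeeping identity is that, for every $|\alpha|=m+1$,
\[
R_\alpha(\varphi_{\mu^{-1}})_\lambda(x,h) \;=\; \mu^{-m-1}\,R_\alpha(\varphi)_{\lambda/\mu}(x,h),
\]
which follows from $\partial_h^\alpha\varphi_{\mu^{-1}}=\mu^{-m-1}(\partial_h^\alpha\varphi)(x,\mu^{-1}\cdot)$. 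After the change of variable $\nu=\lambda/\mu$, the prefactor $\mu^{-s-d}$, the $\mu^{-m-1}$ coming from $R_\alpha$, and the $(\mu\nu)^{s+d+m+1}$ arising from $\lambda^{s+d+m+1}$ combine so that all powers of $\mu$ cancel, producing
\[
\langle\mu^{-s}\overline{t}_\mu,\varphi\rangle \;=\; \int_0^{1/\mu}\frac{d\nu}{\nu}\,\nu^{s+d+m+1}\bigl\langle(\mu\nu)^{-s}t_{\mu\nu}\,\psi,\,\textstyle\sum_\alpha h^\alpha R_\alpha(\varphi)_\nu\bigr\rangle \;+\; \bigl\langle\mu^{-s}t_\mu(1-\chi_\mu),\varphi\bigr\rangle.
\]
Since $\mu\nu\in(0,1]$ throughout the $\nu$-integration, the family $(\mu\nu)^{-s}t_{\mu\nu}$ lies in the bounded set $E_s(\mathcal{D}^\prime_\Gamma(U\setminus I))$ by hypothesis; the properness of $\pi$ on $\text{supp}\,\psi$ and the compactness of $\text{supp}\,\varphi$ force the integrand to vanish for $\nu\geqslant R/a$ with $R,a$ independent of $\mu$, exactly as in Theorem \ref{thm2}; and the power $\nu^{s+d+m}$ is integrable near $0$ since $s+d+m+1>0$. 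Together these give a uniform $\mathcal{D}^\prime(U)$ bound on the family.

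Second, to promote this to a bound in $\mathcal{D}^\prime_{\Gamma\cup N^*(I)\cup\Xi}(U)$, I would transcribe the $\nu$-integral into doubled variables as in Step~2 of the proof of Theorem \ref{thm3}, writing the family as the $\pi_3$-pushforward of the H\"ormander product of
\[
B_{1,\mu}(\nu,x,h)\;=\;1_{[0,\min(1/\mu,R/a)]}(\nu)\,\nu^{s+d+m}(\mu\nu)^{-s}t_{\mu\nu}(x,\nu h)\psi(x,h)
\]
with the Taylor kernel $\nu^{-m-1}I_m((x,\nu h),(x',h'))$. The cones $\Lambda_1,\Lambda_2$ computed in the proof of Theorem \ref{thm3} depend only on $\Gamma$ and on the wave front of $I_m$ (controlled by Lemma \ref{WFI_m}), not on $\mu$, so the hypocontinuity of the H\"ormander product \cite[Thm 7.1]{Viet-wf2} and the boundedness of the pushforward \cite[Thm 7.3]{Viet-wf2} transfer the wave front bounds uniformly in $\mu$; the same geometric analysis $\pi_{3*}(\Lambda_1+\Lambda_2)\cap T_I^*U\subset \Xi\cup N^*(I)$ carried out in Theorem \ref{thm3} then yields that $(\mu^{-s}\overline{t}_\mu)_\mu$ is bounded in $\mathcal{D}^\prime_{\Gamma\cup N^*(I)\cup\Xi}(U)$.

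The main obstacle is the bookkeeping of scaling factors, ensuring every instance of $\mu$ is tracked so that the only residual $\mu$-dependence is through $\mu\nu\in(0,1]$. The hypothesis that $s$ is not an integer excludes the borderline value $s+d=-m$: at that value, the coefficient $\nu^{s+d+m}=\nu^0$ coupled with the need to compensate a polynomial-on-$I$ ambiguity produces a $\log\mu$ correction that breaks strict weak homogeneity of degree $s$, which is the same obstruction as the condition $s+d\notin-\mathbb{N}$ appearing in Theorem \ref{microlocextensionthm} of the introduction.
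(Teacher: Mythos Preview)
Your change-of-variable formula is correct, but the claim that ``the properness of $\pi$ on $\text{supp}\,\psi$ and the compactness of $\text{supp}\,\varphi$ force the integrand to vanish for $\nu\geqslant R/a$'' is false, and this is precisely the point where the whole argument breaks.

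In Theorem~\ref{thm2} the test function paired against $(\lambda\mu)^{-s}t_{\lambda\mu}\psi$ is $\varphi_\lambda$, whose support in $h$ shrinks like $\lambda^{-1}$. Here the test function is $\sum_{|\alpha|=m+1} h^\alpha R_\alpha(\varphi)_\nu=\nu^{-m-1}(I_m\varphi)_\nu$, and $(I_m\varphi)_\nu=\varphi_\nu-(P_m\varphi)_\nu$. The term $\varphi_\nu$ does eventually miss $\text{supp}\,\psi$, but $(P_m\varphi)_\nu$ is a polynomial in $h$ and is supported on all of $\pi(\text{supp}\,\varphi)\times\mathbb{R}^d$. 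Hence for $\nu\geqslant R/a$ the integrand equals
\[
-\sum_{|\beta|\leqslant m}\nu^{\,s+d+|\beta|}\Bigl\langle(\mu\nu)^{-s}t_{\mu\nu}\,\psi,\ \tfrac{h^\beta}{\beta!}\,\pi^*i^*\partial_h^\beta\varphi\Bigr\rangle,
\]
which is certainly not zero. The integral of these terms over $[R/a,1/\mu]$ is what the paper calls $I_2^\mu$, and it is \emph{exactly} here that the hypothesis ``$s$ not an integer'' (equivalently $s+d<-m$ strictly) is used: it guarantees $s+d+|\beta|<0$ for every $|\beta|\leqslant m$, so that $\int_{R/a}^{1/\mu}\nu^{s+d+|\beta|-1}\,d\nu$ stays bounded as $\mu\to 0$. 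When $s+d=-m$ the top term $|\beta|=m$ gives $\int\nu^{-1}d\nu\sim\log\mu$, producing the logarithmic loss of Proposition~\ref{scal3noninteger}.

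Your own last paragraph senses that the obstruction must come from a polynomial counterterm, but this is inconsistent with the body of your argument: if the integrand really vanished for $\nu\geqslant R/a$, the remaining integral $\int_0^{R/a}\nu^{s+d+m}\langle\cdots\rangle\,d\nu$ would be uniformly bounded for \emph{every} $s$ with $s+d+m+1>0$, integer or not, and there would be nothing to explain. The fix is the paper's splitting $\mu^{-s}\overline{t}_\mu=I_1^\mu+I_2^\mu$: your argument is essentially correct for $I_1^\mu$ (the range $\nu\in[0,R/a]$), while $I_2^\mu$ must be handled separately via the Schwartz kernel $\pi^*(i^*\partial_h^\alpha\delta_\Delta)$ of Lemma~\ref{wfPm} and the $L^1$ bound on $1_{[R\mu/a,1]}(\lambda/\mu)^{s+d+|\alpha|}\lambda^{-1}$, which is finite precisely when $s+d+|\alpha|<0$.
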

\begin{prop}\label{scal3noninteger}
Under the assumptions of proposition (\ref{thm3}), 
if $s+d$ 
is a $\textbf{non positive integer}$ 
then
\begin{itemize}
\item the extension map $t\in E_{s}\left(\mathcal{D}^\prime_\Gamma\left(U\setminus I\right)\right)
\longmapsto \overline{t}\in E_{s^\prime}(\mathcal{D}^\prime_{\Gamma\cup N^*(I)\cup\Xi}(U)),\forall s^\prime<s$ 
is bounded,
\item the family of 
distributions
$\lambda^{-s}(\log\lambda)^{-1}\overline{t}_\lambda$ 
is bounded
in $\mathcal{D}^\prime_{\Gamma\cup N^*(I)\cup\Xi}(U)$.
\end{itemize}
\end{prop}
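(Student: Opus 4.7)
The plan is to adapt the change of variables argument of Theorem \ref{thm2} to the extension formula (\ref{renormformula}) involving the Taylor remainder $I_m$, and then identify precisely where the logarithmic factor enters when $s+d$ is a non-positive integer.

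First I would compute $\mu^{-s}\langle\overline{t}_\mu,\varphi\rangle=\mu^{-s-d}\langle\overline{t},\varphi_{\mu^{-1}}\rangle$ by substituting $\varphi_{\mu^{-1}}$ into formula (\ref{renormformula}). A direct calculation gives $h^\alpha R_\alpha(\varphi_{\mu^{-1}})_\lambda=\mu^{-m-1}h^\alpha R_\alpha(\varphi)_{\lambda\mu^{-1}}$, so that after the substitution $\lambda'=\lambda\mu^{-1}$ one obtains the clean expression
\begin{equation*}
\mu^{-s}\langle\overline{t}_\mu,\varphi\rangle
=\int_0^{\mu^{-1}}\frac{d\lambda'}{\lambda'}(\lambda')^{s+d+m+1}\Bigl\langle(\lambda'\mu)^{-s}t_{\lambda'\mu}\,\psi,\sum_{|\alpha|=m+1}h^\alpha R_\alpha(\varphi)_{\lambda'}\Bigr\rangle+\mu^{-s-d}\langle t(1-\chi),\varphi_{\mu^{-1}}\rangle.
\end{equation*}
The second term is harmless (it is eventually zero when $\mu$ is small since $\varphi_{\mu^{-1}}$ gets pushed inside the support of $\chi$), so all the analysis focuses on the integral.

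The key step is a decay estimate for the integrand as $\lambda'\to\infty$. Since $\pi$ is proper on $\mathrm{supp}\,\psi$ and $\psi$ vanishes near $I$, the pairing only sees $(x,h)$ with $|h|\in[a,b]$, so the change of variable $v=u\lambda'$ in Taylor's integral remainder yields $h^\alpha R_\alpha(\varphi)_{\lambda'}(x,h)=O(1/\lambda')$ uniformly in $(x,h)$ on $\mathrm{supp}\,\psi$. Since the family $(\lambda'\mu)^{-s}t_{\lambda'\mu}$ is bounded in $\mathcal{D}^\prime_\Gamma$ for $\lambda'\mu\in(0,1]$, the integrand is bounded by $C(\lambda')^{s+d+m}$. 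When $s+d=-m$ this is $C/\lambda'$, which integrated over $[1,\mu^{-1}]$ produces exactly a $|\log\mu|$ divergence; the part of the integral over $[0,1]$ remains uniformly bounded since $s+d+m+1>0$ there. This proves boundedness of $\mu^{-s}(\log\mu)^{-1}\overline{t}_\mu$ in $\mathcal{D}^\prime(U)$, and the first bullet follows at once because $\mu^{s-s'}\log\mu\to 0$ for $s'<s$.

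To upgrade these bounds from $\mathcal{D}^\prime$ to $\mathcal{D}^\prime_{\Gamma\cup N^*(I)\cup\Xi}$ I would reuse the integral formula machinery from the proof of Theorem \ref{thm3}. Rewriting the expression above as a pushforward by $\pi_3$ of a Hörmander product between $B_{1,\varepsilon,\mu}(\lambda,x,h)=1_{[\varepsilon,1]}(\lambda)\lambda^{s+d+m}(\lambda\mu)^{-s}t_{\lambda\mu}(x,h)\psi(x,h)$ and $\lambda^{-m-1}I_m(\Phi(\lambda,x,h),\cdot)$, Lemma \ref{propfamily2} ensures the first block is bounded in $\mathcal{D}^\prime_{\Lambda_1}$ uniformly in $(\varepsilon,\mu)$, the wave front set of $\lambda^{-m-1}I_m(\Phi,\cdot)$ has been computed in Step 1 of the proof of Theorem \ref{thm3}, and the disjointness $\Lambda_1\cap(-\Lambda_2)=\emptyset$ together with hypocontinuity of the product and pushforward (\cite[Thm 7.1, 7.3]{Viet-wf2}) gives uniform boundedness in $\mathcal{D}^\prime_{\Gamma\cup N^*(I)\cup\Xi}$. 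The main obstacle is not the microlocal estimate, which is essentially a repetition of Theorem \ref{thm3}, but rather the careful bookkeeping of the $\mu$-dependence: one must check that after multiplication by $(\log\mu)^{-1}$ (or $\mu^{s-s'}$) the resulting family of distributions against test distributions $v\in\mathcal{E}'_\Lambda(U)$ with $\Lambda\cap-(\Gamma\cup N^*(I)\cup\Xi)=\emptyset$ remains uniformly bounded, which amounts to verifying that the dominating constants in the decay estimate of $h^\alpha R_\alpha(\varphi)_{\lambda'}$ can be replaced by seminorms of the test distribution in the dual pairing sense of \cite[6.3]{dabrowski2013functional}.
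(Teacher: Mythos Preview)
Your weak (i.e.\ $\mathcal{D}'$) argument is correct and in fact slightly cleaner than the paper's: instead of splitting $I_m\varphi=\varphi-P_m\varphi$ and tracking the counterterm, you extract the decay $h^\alpha R_\alpha(\varphi)_{\lambda'}=O(1/\lambda')$ directly from the integral remainder by the substitution $v=u\lambda'$, and this indeed produces the $|\log\mu|$ factor when $s+d=-m$.

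The gap is in the microlocal upgrade. You want to invoke the integral--formula machinery (Lemma~\ref{propfamily2} followed by hypocontinuity of product and pushforward) on the block
\[
B_{1,\varepsilon,\mu}(\lambda,x,h)=f(\lambda)\,(\lambda\mu)^{-s}t_{\lambda\mu}(x,h)\,\psi(x,h),
\]
but Lemma~\ref{propfamily2} controls this block \emph{only through} $\Vert f\Vert_{L^1}$. After your change of variable the integral runs over $\lambda'\in[0,\mu^{-1}]$ with weight $(\lambda')^{s+d+m}=1$ (when $s+d=-m$), so $\Vert f\Vert_{L^1}\sim\mu^{-1}$, not $|\log\mu|$. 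The $O(1/\lambda')$ decay that saved your weak estimate lives in the \emph{kernel} $\lambda^{-m-1}I_m(\Phi,\cdot)$, and the lemma never sees it. Thus the argument as written only yields $\mu^{-s}\overline{t}_\mu=O(\mu^{-1})$ in $\mathcal{D}'_{\Gamma\cup N^*(I)\cup\Xi}$, which is far too crude.

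The paper's remedy is precisely to transfer that hidden decay from the kernel to the $L^1$ weight. One splits the $\lambda$--integral at $\lambda=R\mu/a$: on $[0,R\mu/a]$ the full remainder $I_m$ is kept and after rescaling the range becomes the fixed interval $[0,R/a]$, giving a uniformly bounded $I^\mu_1$; on $[R\mu/a,1]$ the test function $\varphi_{\lambda/\mu}$ no longer meets $\mathrm{supp}\,\psi$, so only the Taylor polynomial $P_m\varphi$ survives, and each monomial contributes an explicit factor $(\lambda/\mu)^{s+d+|\alpha|}$. For $|\alpha|<m$ this weight has bounded $L^1$ norm; for $|\alpha|=m$ its $L^1$ norm is exactly $|\log(R\mu/a)|$. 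Now Lemma~\ref{propfamily2} applies with the kernel $\pi^*(i^*\partial_h^\alpha\delta_\Delta)$ (Lemma~\ref{wfPm}), whose wave front is contained in that of $I_m$, and the machinery of Theorem~\ref{thm3} runs verbatim to give the bound in $\mathcal{D}'_{\Gamma\cup N^*(I)\cup\Xi}$ with the correct logarithmic loss. Your proof needs either this splitting or a genuine strengthening of Lemma~\ref{propfamily2} that incorporates large--$\lambda$ decay of the second factor.
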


\begin{proof}
Choose
a test function 
$\varphi$.
To check the homogeneity 
of the renormalized 
integral is a 
little tricky since 
we have to take the 
scaling of counterterms 
into account.
When we scale the test
function $\varphi$ 
then we should scale simultaneously 
the Taylor polynomial 
$(P_m\varphi)_\lambda$ 
and the remainder 
$(I_m\varphi)_\lambda$: 
$$\varphi_\lambda=
(P_m\varphi)_\lambda+(I_m\varphi)_\lambda
=P_m\varphi_\lambda
+I_m\varphi_\lambda.$$  
We want to know to which scale space 
$E_{s^\prime}(\mathcal{D}^\prime_{\Gamma\cup N^*(I)})$ 
the distribution $\overline{t}$ 
belongs:
$$\mu^{-s^\prime}\left\langle \overline{t}_\mu , \varphi \right\rangle
=\mu^{-s^\prime}\int_0^1\frac{d\lambda}{\lambda} \mu^{-d}\left\langle t\psi_{\lambda^{-1}}, (I_m\varphi)_{\mu^{-1}} \right\rangle $$
$$= \mu^{-s^\prime}\int_0^1\frac{d\lambda}{\lambda} \lambda^d\mu^{-d}\left\langle t_\lambda\psi, (I_m\varphi)_{\lambda\mu^{-1}} \right\rangle.$$ 
For the moment, we find that:
$$\mu^{-s^\prime}\left\langle \overline{t}_\mu ,
\varphi \right\rangle
=\mu^{s-s^\prime}\int_0^1\frac{d\lambda}{\lambda} 
\left(\frac{\lambda}{\mu}\right)^{s+d}
\left\langle \left(\lambda^{-s}t_\lambda\right)\psi, 
(I_m\varphi)_{\frac{\lambda}{\mu}} \right\rangle.$$ 
The test function $\varphi$ 
is supported in $\{\vert h\vert\leqslant R\}$
therefore $\varphi_{\frac{\lambda}{\mu}}$ is supported
on $\vert h\vert\leqslant \frac{\mu R}{\lambda}$
thus when $\frac{R\mu}{\lambda}\leqslant a\Leftrightarrow \frac{R\mu}{a}\leqslant \lambda$, 
the support of $\varphi_{\frac{\lambda}{\mu}}$
does not meet 
the support
of $\lambda^{-s}t_\lambda\psi$
because $\psi$
is supported on $a\leqslant\vert h\vert$,
whereas the polynomial part $P_m\varphi$ 
is supported 
everywhere since it is a Taylor polynomial. 
Consequently, we must split the scaled
distribution
$\mu^{-s}\overline{t}_\mu=I^\mu_1+I^\mu_2$
in two parts, where:
$$\left\langle I^\mu_1, \varphi\right\rangle=
\int_0^{\frac{R\mu}{a}}\frac{d\lambda}{\lambda}
\left(\frac{\lambda}{\mu}\right)^{s+d}
\left\langle \left(\lambda^{-s}t_\lambda\right)\psi, 
(I_m\varphi)_{\frac{\lambda}{\mu}} \right\rangle $$ 
$$=\int_0^{\frac{R\mu}{a}}\frac{d\lambda}{\lambda} \left(\frac{\lambda}{\mu}\right)^{(d+s+m+1)}\left\langle\left(\lambda^{-s}t_\lambda\right)\psi, \sum_{\vert \alpha\vert =m+1} h^\alpha R_{\alpha,\frac{\lambda}{\mu}}\right\rangle .$$ 
$$\left\langle I^\mu_2,\varphi\right\rangle
= \int_{\frac{R\mu}{a}}^1
\frac{d\lambda}{\lambda}
\left(\frac{\lambda}{\mu}\right)^{s+d}\underset{\text{
no contribution of $\varphi_{\frac{\lambda}{\mu}}$ since }
\frac{R\mu}{a}\leqslant \lambda}{\left\langle 
\lambda^{-s}t_\lambda , \varphi_{\frac{\lambda}{\mu}}
-(P_m\varphi)_{\frac{\lambda}{\mu}} \right\rangle}.$$ 
We make a simple variable change for $I^\mu_1$: 
$$\left\langle I^\mu_1,\varphi\right\rangle
=\int_0^{\frac{R}{a}}\frac{d\lambda}{\lambda} 
\lambda^{(d+s+m+1)}\left\langle(\lambda\mu)^{-s}t_{\lambda\mu}\psi, 
\sum_{\vert \alpha\vert =m+1} h^\alpha R_{\alpha,\lambda}\right\rangle\hfill$$ 
then following
the proof
of proposition
\ref{thm1},
we note that
\begin{equation}
I^\mu_1=\int_{\mathbb{R}\times\mathbb{R}^{n+d}}
d\lambda dxdh \lambda^{s+d+m}
\lambda^{-s}
\Phi^*(\mu^{-s}t_\mu)(\lambda,x,h)
1_{[0,\frac{R}{a}]}\psi(x,h)\lambda^{-m-1}I_m(\Phi(.),.).
\end{equation}
Therefore,
we can repeat
the proof of
proposition \ref{thm3}
for the bounded family
$(\mu^{-s}t_\mu)_\mu$
in $\mathcal{D}^\prime_\Gamma(U)$ and we deduce
that $(I^\mu_1)_\mu$ is bounded in 
$\mathcal{D}^\prime_\Lambda $ where $\Lambda=\Gamma\cup N^*(I)\cup\Xi$.

Notice that in the second term $I^\mu_2$ 
only the counterterm $P_m\varphi$ 
contributes 
$$I^\mu_2=\int_{\frac{R\mu}{a}}^1\frac{d\lambda}{\lambda}
\left(\frac{\lambda}{\mu}\right)^{s+d}\left\langle \lambda^{-s}t_\lambda\psi ,-(P_m\varphi)_{\frac{\lambda}{\mu}} \right\rangle $$
$$=\int_{\frac{R\mu}{a}}^1\frac{d\lambda}{\lambda}\left\langle  \lambda^{-s}t_\lambda\psi ,-\sum_{\vert \alpha\vert \leqslant m} \left(\frac{\lambda}{\mu}\right)^{s+d+\vert\alpha\vert}\frac{h^\alpha}{\alpha!}\pi^* \left(i^*\partial_h^\alpha \varphi\right) \right\rangle.$$ 
We reformulate $I^\mu_2$ as
\begin{equation}
I^\mu_2=-\int_{\mathbb{R}\times\mathbb{R}^{n+d}}
\frac{d\lambda}{\lambda} dxdh 
\lambda^{-s}
\Phi^*t(\lambda,x,h)
1_{[\frac{R\mu}{a},1]}\psi(x,h)\sum_{\vert \alpha\vert \leqslant m} \left(\frac{\lambda}{\mu}\right)^{s+d+\vert\alpha\vert}\frac{h^\alpha}{\alpha!}\pi^* \left(i^*\partial_h^\alpha 
\delta_\Delta(.,.)\right)
\end{equation}

Then notice that by assumption $s+d\leqslant -m$ and 
$\vert \alpha\vert$ ranges from $0$ to $m$ 
which implies that we always have 
$s+d+\vert \alpha\vert\leqslant 0$. 

If $s+d<m$ then
for all
$\alpha$ such that
$0\leqslant \vert\alpha\vert\leqslant m$ we have the inequality
$s+d+\vert\alpha\vert<0$,
hence
the family
of functions 
$1_{[\frac{R\mu}{a},1]}
\left(\frac{\lambda}{\mu}\right)^{s+d+\vert\alpha\vert}\lambda^{-1}$
is integrable w.r.t the variable $\lambda$
uniformly in the parameter $\mu$
since: 
\begin{equation*}
\Vert 1_{[\frac{R\mu}{a},1]}\left(\frac{\lambda}{\mu}\right)^{s+d+\vert\alpha\vert}
\lambda^{-1}
\Vert_{L^1(\mathbb{R})}=\frac{1}{\vert s+d+\vert\alpha\vert\vert}\underset{\text{no blow up when }\mu\rightarrow 0}{\underbrace{\left|\left(\frac{1}{\mu}\right)^{s+d+\vert\alpha\vert}-\left(\frac{R}{a}\right)^{s+d+\vert\alpha\vert}\right|}} .
\end{equation*}
Therefore the family $(\lambda^{-s-1}
\Phi^*t(\lambda,x,h)
1_{[\frac{R\mu}{a},1]}\psi(x,h)
\left(\frac{\lambda}{\mu}\right)^{s+d+\vert\alpha\vert}
\frac{h^\alpha}{\alpha!})_{\mu\in (0,1)}$
is bounded in $\mathcal{D}^\prime_V$
where $V=\{(\lambda,x,h;\tau,\xi,\eta)\in \dot{T}^*(\mathbb{R}\times \mathbb{R}^{n+d});\lambda\in[0,1],(x,h;\xi,\eta)\in\Gamma\cup\underline{0}\}$ 
by Proposition \ref{propfamily2} 
and we can repeat
the proof of proposition
\ref{thm3} where the Schwartz kernel
$I_m(\Phi(.),.)$ should be
replaced with the distribution
$\pi^* \left(i^*\partial_h^\alpha 
\delta_\Delta(.,.)\right) $ whose wave front set
is calculated in Lemma \ref{wfPm} in appendix, 
the proof of Proposition
\ref{thm3} still applies in our case
since 
$WF\left(\pi^* \left(i^*\partial_h^\alpha 
\delta_\Delta(.,.)\right)\right)\subset 
WF\left(I_m(\Phi(.),.)\right)$.
However if $s+d+m=0$ then for
$\vert\alpha\vert=m$, we find that
the family of functions
$$\left(1_{[\frac{R\mu}{a},1]}
\left(\frac{\lambda}{\mu}\right)^{s+d+\vert\alpha\vert}\lambda^{-1}
=\lambda^{-1}1_{[\frac{R\mu}{a},1]}\right)_{\mu\in(0,1]}$$
is no longer bounded in
the $L_\lambda^1([0,1])$ for $\mu\in(0,1]$
but exhibits a 
\textbf{logarithmic divergence}:
\begin{equation*}
\forall\mu\in(0,1],  \Vert 1_{[\frac{R\mu}{a},1]}
\lambda^{-1}\Vert_{L^1(\mathbb{R})}=\log(\frac{R\mu}{a})\leqslant
\log\mu+\log(\frac{R}{a}).
\end{equation*}
Then it is easy
to conclude that
$(\log\lambda)^{-1}\lambda^{-s}\overline{t}_\lambda$
is bounded in $\mathcal{D}^\prime_{\Xi\cup N^*(I)\cup \Gamma}(U)$.
\end{proof}
\subsection{The general extension in the flat case.}
\label{genexttheoremssubsection}
For the sequel, we recall
that $\chi\in C^\infty(\mathbb{R}^{n+d})$
is our partition
of unity used to construct the extension
and $\psi=-h\frac{d\chi}{dh}$.
\begin{thm}\label{flatgenextensionthm}
Let $s\in\mathbb{R}$,  
$\Gamma\subset T^\bullet U$ 
a closed conic set stable 
by scaling.
If $t\in\mathcal{D}^\prime(U\setminus I)$ is weakly
homogeneous of degree $s$ in 
$\mathcal{D}^\prime_\Gamma(U\setminus I)$,
then
\begin{enumerate}
\item there is an extension
$\overline{t}\in\mathcal{D}^\prime(U)$ 
of $t$
where:
\begin{eqnarray*}
WF(\overline{t})\subset 
WF(t)\cup N^*(I)\cup \Xi, \,\ 
\Xi=\{(x,0;\xi,\eta) |(x,h)\in\text{supp }\psi, (x,h;\xi,0)\in\Gamma  \}.
\end{eqnarray*}
\item $\overline{t}$ is in $E_{s,\Gamma\cup \Xi\cup N^*(I)}(U)$
if $-s-d\notin \mathbb{N}$ and $\overline{t}\in E_{s^\prime,\Gamma\cup \Xi\cup N^*(I)}(U),s^\prime<s$
otherwise.
\end{enumerate}
\end{thm}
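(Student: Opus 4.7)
The plan is to reduce Theorem~\ref{flatgenextensionthm} to a case distinction on the sign and integrality of $s+d$, invoking the already established results of Theorems~\ref{thm1}, \ref{thm2}, \ref{thm3} together with Propositions~\ref{scal2} and~\ref{scal3noninteger}. I would first fix the same partition of unity $\chi$ and cutoff $\psi = -h\partial_h\chi$ used throughout section~\ref{flatspacesection} so that the two candidate extension formulas (the direct limit and the renormalized/Taylor-subtracted one) are compatible with each other in overlapping regimes.

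\textbf{Case 1: $s+d > 0$.} Here $-s-d < 0$, so $-s-d \notin \mathbb{N}$, and the second clause of the theorem demands $\overline{t}\in E_{s,\Gamma\cup\Xi\cup N^*(I)}(U)$. I would set
\begin{equation*}
\overline{t} := \lim_{\varepsilon\to 0} t(1-\chi_{\varepsilon^{-1}}),
\end{equation*}
and Theorem~\ref{thm1} yields both the existence of the limit as a distribution on $U$ and the desired bound $WF(\overline{t}) \subset WF(t) \cup N^*(I) \cup \Xi$. Theorem~\ref{thm2} then gives the weak homogeneity of degree $s$ in $\mathcal{D}^\prime_{\Gamma\cup\Xi\cup N^*(I)}(U)$, which is precisely the assertion of part~(2).

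\textbf{Case 2: $s+d \leqslant 0$ with $-s-d \notin \mathbb{N}$.} Since $-s-d$ is not a non-negative integer (it is either negative, handled in Case~1, or a positive non-integer), there exists a unique integer $m\in\mathbb{N}$ with $-m-1 < s+d < -m$, \emph{strictly}. I define $\overline{t}$ by the renormalized formula~(\ref{renormalizedformula}) using the Taylor remainder $I_m$ of order $m$. Theorem~\ref{thm3} furnishes the extension together with the wave front bound $WF(\overline{t}) \subset WF(t)\cup N^*(I) \cup \Xi$, and Proposition~\ref{scal2} shows the extension map lands in $E_{s}(\mathcal{D}^\prime_{\Gamma\cup\Xi\cup N^*(I)}(U))$, so $\overline{t}\in E_{s,\Gamma\cup\Xi\cup N^*(I)}(U)$ with no loss on the scaling degree, as claimed.

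\textbf{Case 3: $s+d = -m$ for some $m\in\mathbb{N}$.} This is the borderline situation where $-s-d\in\mathbb{N}$. With the same Taylor subtraction of order $m$, Theorem~\ref{thm3} still yields an extension $\overline{t}$ with the correct wave front bound, because the proof of Theorem~\ref{thm3} goes through for the entire range $-m-1 < s+d \leqslant -m$. The new point is the scaling behaviour: Proposition~\ref{scal3noninteger} provides only $\overline{t}\in E_{s^\prime,\Gamma\cup\Xi\cup N^*(I)}(U)$ for every $s^\prime < s$, with the sharper information that $\lambda^{-s}(\log\lambda)^{-1}\overline{t}_\lambda$ is bounded; this matches the degraded scaling degree $s^\prime < s$ in the statement of the theorem.

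The only subtle point is glueing the three cases: the statement must hold for every real $s$, but I expect no real obstacle since the three cases cover $\mathbb{R}$ exhaustively and disjointly, and within each case the referenced results give precisely what is required. The main conceptual step, already handled in the earlier theorems, is the wave front analysis through the integral representation (\ref{fundintegralformula}) or its Taylor-subtracted analogue involving the Schwartz kernel of $I_m$; here I simply invoke those results rather than reprove them.
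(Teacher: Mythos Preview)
Your proposal is correct and is exactly the intended argument: the paper does not give a separate proof of Theorem~\ref{flatgenextensionthm} (it only writes out the special case where the conormal landing condition holds), because the general statement is precisely the union of Theorems~\ref{thm1}, \ref{thm2}, \ref{thm3} and Propositions~\ref{scal2}, \ref{scal3noninteger} under the trichotomy $s+d>0$, $-m-1<s+d<-m$, $s+d=-m$ that you spell out. Your case analysis and the results you invoke in each case match the paper's structure exactly.
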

We give here the proof
of an important particular case of the above theorem:
\begin{thm}
Under the assumptions of the above theorem
if $\left(\overline{\Gamma}\cap T^\bullet_IU\right)\subset
N^*(I)$ then 
\begin{enumerate}
\item there is an extension
$\overline{t}\in\mathcal{D}^\prime(U)$ 
of $t$
where:
\begin{eqnarray*}
WF(\overline{t})\subset 
WF(t)\cup N^*(I).
\end{eqnarray*}
\item $\overline{t}$ is in $E_{s,\Gamma\cup N^*(I)}(U)$
if $-s-d\notin \mathbb{N}$ and $\overline{t}\in E_{s^\prime,\Gamma\cup N^*(I)}(U),s^\prime<s$
otherwise.
\end{enumerate}
\end{thm}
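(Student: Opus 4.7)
The plan is to deduce this corollary from Theorem \ref{flatgenextensionthm} by exercising the freedom we still have in choosing the partition-of-unity cutoff $\chi$. Recall that in the general theorem the auxiliary set
\[
\Xi = \{(x,0;\xi,\eta) \mid (x,h)\in\text{supp }\psi,\ (x,h;\xi,0)\in\Gamma\}
\]
depends on $\chi$ only through $\text{supp }\psi$, which one can localize arbitrarily close to $I$. The strategy is to shrink $\text{supp }\psi$ enough to force $\Xi\subset N^*(I)$, after which Theorem \ref{flatgenextensionthm} gives the desired WF bound directly.

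The heart of the argument is the following separation lemma: setting
\[
B = \bigl\{(x,h)\in U\setminus I \;:\; \exists\, \xi\neq 0,\ (x,h;\xi,0)\in\Gamma\bigr\},
\]
the conormal landing condition implies $\overline{B}\cap I=\emptyset$, where the closure is taken in $U$. Indeed, suppose towards a contradiction that there exists a sequence $(x_n,h_n;\xi_n,0)\in\Gamma$ with $h_n\to 0$ and $x_n\to x_\infty\in I$. By conicity of $\Gamma$ one may normalize $|\xi_n|=1$, then extract a convergent subsequence $\xi_n\to\xi_\infty$ with $|\xi_\infty|=1$. The limit $(x_\infty,0;\xi_\infty,0)$ lies in $\overline{\Gamma}\cap T^\bullet_I U$, which by the landing condition is contained in $N^*(I)$, forcing $\xi_\infty=0$ and yielding a contradiction.

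Given the separation, for every $p\in I$ there exist a neighborhood $V_p\subset U$ and $\delta_p>0$ such that $B\cap V_p\subset \{|h|>\delta_p\}$. Patching these local bounds via a partition of unity subordinate to a locally finite cover of $I$ in $U_1$ produces a smooth positive function $r\in C^\infty(U_1,(0,\infty))$ whose associated closed tube $\{(x,h) : |h|\leqslant r(x)\}$ is contained in $U$ and disjoint from $B$. One then defines $\chi(x,h)=\tilde\chi(|h|^2/r(x)^2)$ with $\tilde\chi\in C^\infty_c(\mathbb{R})$ equal to $1$ on $[0,1/4]$ and vanishing outside $[0,1]$; this cutoff equals $1$ near $I$, the projection $\pi$ is proper on $\text{supp }\chi$, and an elementary computation gives $\text{supp }\psi\subset\{(x,h):r(x)/2\leqslant|h|\leqslant r(x)\}$, which avoids $B$ by construction.

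Consequently every $(x,0;\xi,\eta)\in\Xi$ must have $\xi=0$, i.e.\ $\Xi\subset N^*(I)$. Invoking Theorem \ref{flatgenextensionthm} with this specific $\chi$ produces an extension $\overline{t}$ with
\[
WF(\overline{t})\subset WF(t)\cup N^*(I)\cup\Xi\subset WF(t)\cup N^*(I),
\]
together with $\overline{t}\in E_{s,\Gamma\cup N^*(I)}(U)$ in the generic case, or $\overline{t}\in E_{s',\Gamma\cup N^*(I)}(U)$ for any $s'<s$ when $-s-d\in\mathbb{N}$, as supplied by the scaling part of the general theorem. The only technically delicate step is the smooth global construction of $r$, which reduces to a standard partition-of-unity argument once the separation lemma has produced, for every $p\in I$, a uniform positive lower bound on the distance from $B$ to $I$ in some neighborhood of $p$.
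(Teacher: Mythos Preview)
Your proof is correct and follows essentially the same route as the paper's: both arguments use the conormal landing condition to show (via a compactness/subsequence contradiction) that the ``bad'' set where $\Gamma$ contains covectors with $\eta=0$ stays uniformly away from $I$, and then exploit the freedom in choosing $\chi$ to support $\psi$ in that good region, forcing $\Xi$ to be empty (you conclude $\Xi\subset N^*(I)$, but in fact $\Xi=\emptyset$ since $\Gamma$ avoids the zero section). Your construction of the tube via the smooth radius $r(x)$ is a more explicit version of the paper's one-line ``covering argument'', but the substance is identical.
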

\begin{proof}
The proof proceeds in two steps. First, we show that
there exists a neighborhood
$V$ of $I=\{h=0\}$ such that
$\forall (x,h;\xi,\eta)\in T^\bullet V\cap \Gamma$,
$\eta\neq 0$. 
In the second part, we explain that
by carefully choosing $\chi$
in such a way that $\text{supp }\chi\subset V$,
the subset $\Xi$ will be empty.

\textbf{Step 1}, we prove that for all
compact set $K$ there is some neighborhood $V$
of $I$ such that $\Gamma\cap T^\bullet_{K\cap V} U$
does not meet the set $\{(x,h;\xi,0)| \xi\neq 0\}$.
Then it follows immediately
by a covering argument that there exists a neighborhood
$V$ of $I=\{h=0\}$ such that
$\forall (x,h;\xi,\eta)\in T^\bullet V\cap \Gamma$,
$\eta\neq 0$.
By contradiction assume there is some
compact set $K$ such that for all
$V_n=\{\vert h\vert\leqslant n^{-1} \}$, there is some
$(x_n,h_n;\xi_n,0)\in T^\bullet_{K\cap V_n} U\cap \Gamma $.
By extracting a convergent subsequence one easily
concludes that there would be a sequence
$(x_n,h_n;\xi_n,0)\rightarrow (x,0;\xi,0)\in \Gamma$, 
contradiction !

\textbf{Step 2} 
We choose a function $\chi$ which
equals $1$ in some neighborhood
of $I$ and $\chi$
is supported in $V$.
Therefore the function
$\psi=-\rho\chi$ is supported
in $V$.
But the set $\Gamma \cap T_V^*U$ does not meet
the set $\{(x,h;\xi,0)| \xi\neq 0\}$
therefore 
the set $\Xi=\{(x,0;\xi,\eta)| (x,h)\in\text{supp }\psi 
(x,h;\xi,0)\in \Gamma  \}$ is empty and the
conclusion follows.
\end{proof}

\section{The extension theorem
for $E_{s,I}$.}
\label{extthmsection}
We are now ready to prove Theorem \ref{extthmEsintro2} and 
some part
of the claim of Theorem \ref{extthmEsintro1}:
\begin{thm}\label{thmfi}
Let $U$ be an open neighborhood of $I\subset M$, if $t\in E_{s,I}(U\setminus I)$ then there exists
an extension $\overline{t}$ in $E_{s^\prime,I}(U)$ where $s^\prime=s$ if $-s-d\notin\mathbb{N}$ 
and
$s^\prime<s$ otherwise.
\end{thm}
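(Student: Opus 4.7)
The plan is to reduce the problem to the flat case by using the local characterization of $E_{s,I}$ (Proposition \ref{caracterisationEs}) and the diffeomorphism invariance (Theorem \ref{Esdiffinvariance}), then to glue the local extensions by a partition of unity and conclude with the gluing property of $E_{s,I}$ already established.

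First I would work locally around $I$. For every $p\in I\cap U$, pick a coordinate chart $\psi_p:V_p\to \mathbb{R}^{n+d}$ centered at $p$ in which $I$ is flattened to $\mathbb{R}^n\times\{0\}$, with $V_p\subset U$ and $\psi_p(V_p)$ of the product form $U_1\times U_2$ required in Section \ref{flatspacesection} (so $\lambda U_2\subset U_2$ for $\lambda\in(0,1]$). By Theorem \ref{Esdiffinvariance}, $\psi_{p*}t\in E_{s,I}(\psi_p(V_p\setminus I))$, so by the local characterization of $E_{s,I}$ combined with the definition of $E_s(\mathcal{D}^\prime(\psi_p(V_p)\setminus I))$ in flat coordinates, $\psi_{p*}t$ is weakly homogeneous of degree $s$ (we take $\Gamma=T^\bullet(\psi_p(V_p)\setminus I)$ since here we do not need to track the wave front set). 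Then by Theorem \ref{flatgenextensionthm} applied with this $\Gamma$, one obtains a local extension $\overline{\psi_{p*}t}\in E_{s^\prime}(\mathcal{D}^\prime(\psi_p(V_p)))$ with $s^\prime=s$ if $-s-d\notin \mathbb{N}$ and $s^\prime<s$ otherwise. Pulling back by $\psi_p$ and using diffeomorphism invariance a second time, we obtain $\overline{t}_p\in E_{s^\prime,I}(V_p)$ extending $t|_{V_p\setminus I}$.

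Next I would glue. The family $(V_p)_{p\in I\cap U}$ together with the open set $V_0=U\setminus I$ forms an open cover of $U$. Choose a locally finite smooth partition of unity $(\chi_p)_p\cup\{\chi_0\}$ subordinate to it, with $\chi_0$ compactly supported in $U\setminus I$ and each $\chi_p$ compactly supported in $V_p$. Define
\begin{equation*}
\overline{t}=\chi_0\, t+\sum_p \chi_p\,\overline{t}_p\in\mathcal{D}^\prime(U).
\end{equation*}
On $U\setminus I$ each $\overline{t}_p$ equals $t$, so $\overline{t}$ restricts to $(\chi_0+\sum_p\chi_p)\,t=t$, confirming that $\overline{t}$ is an extension of $t$.

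To finish it suffices, by the gluing property stated in Section \ref{Esiuconstructionsection}, to show that $\overline{t}\in E_{s^\prime,I}(V_p)$ for every $p$ and $\overline{t}\in E_{s^\prime,I}(V_0)$ (which holds trivially since $V_0\subset U\setminus I$ and the $E_{s,I}$ condition is void away from $I$). On $V_p$ the only terms contributing near $I$ are finitely many $\chi_q\overline{t}_q$ with $q$ such that $V_q\cap V_p\neq \emptyset$. So the key point is that multiplication by a compactly supported smooth function preserves the class $E_{s^\prime,I}$: in local Euler coordinates with $\rho=h\partial_h$, the family $\chi_\lambda(x,h)=\chi(x,\lambda h)$ is bounded in $C^\infty$ uniformly in $\lambda\in(0,1]$, and multiplication by a bounded family of smooth functions sends a bounded family of $\mathcal{D}^\prime$ into a bounded family of $\mathcal{D}^\prime$. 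Combined with Theorem \ref{locthmGOOD}, this gives $\chi_q\overline{t}_q\in E_{s^\prime,I}(V_p)$ and hence $\overline{t}\in E_{s^\prime,I}(V_p)$ by linearity.

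The main obstacle I anticipate is the integer case $-s-d\in\mathbb{N}$: here the flat-case extension loses a logarithm, so one only obtains local extensions in $E_{s^\prime,I}$ for every $s^\prime<s$, and one must check that the partition-of-unity gluing is compatible with this loss, which it is since $E_{s^\prime,I}\subset E_{s^{\prime\prime},I}$ for $s^\prime\leqslant s^{\prime\prime}$ and only finitely many indices are involved locally. A secondary technical point is the verification that multiplication by $\chi_p$ really preserves weak homogeneity; this is essentially the content of the argument above and poses no serious difficulty.
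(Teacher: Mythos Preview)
Your proposal is correct and follows essentially the same route as the paper's proof: pass to flat charts, apply Theorem~\ref{flatgenextensionthm}, pull back using diffeomorphism invariance (Theorem~\ref{Esdiffinvariance}), and glue with a partition of unity; the paper writes the extension as $\sum_a\varphi_a\,\psi_a^*\overline{\psi_{a*}t}+(1-\sum_a\varphi_a)t$, which is exactly your $\sum_p\chi_p\overline{t}_p+\chi_0 t$. Your explicit check that multiplication by cutoffs preserves weak homogeneity is a point the paper leaves implicit.

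One correction is needed, however. Your claim that ``the $E_{s,I}$ condition is void away from $I$'' on $V_0=U\setminus I$ is false under Definition~\ref{defEs}: membership in $E_{s^\prime,I}(V_0)$ is tested at every $p\in I\cap\mathrm{int}(\overline{V_0})$, and since $I$ has empty interior one has $\mathrm{int}(\overline{U\setminus I})\supset U$, so this set contains all of $I\cap U$ and the condition is \emph{not} vacuous. The fix is immediate: on $V_0$ you have $\overline{t}|_{V_0}=t$, and by hypothesis $t\in E_{s,I}(U\setminus I)\subset E_{s^\prime,I}(U\setminus I)$ by the filtration property, so $\overline{t}\in E_{s^\prime,I}(V_0)$ for the right reason.
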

\begin{proof}
$t\in E_{s,I}(U\setminus I)$ implies that
for all $p\in  I$, there is
some open chart $\psi:V_p\subset U\mapsto \mathbb{R}^{n+d}$, 
$\psi(I)\subset \mathbb{R}^n\times\{0\}$
where $\lambda^{-s}(\psi_*t)(x,\lambda h) $
is bounded in $\mathcal{D}^\prime(\psi\left(V_p\setminus I\right))$.
Moreover, we must choose $V_p$ in such a way that
its image $U=\psi(V_p)\subset \mathbb{R}^{n+d}$
is of the form $U_1\times U_2$ where $U_1\subset\mathbb{R}^n, U_2\subset\mathbb{R}^d$
and $\lambda U_2\subset U_2,\forall\lambda\in[0,1]$.
 $\cup_{p\in I} V_p$ forms an open cover
of $I$, consider a locally finite 
subcover
$\cup_{a\in A} V_a$ and denote by $(\psi_a)_{a\in A}$
the corresponding charts.
For every $a\in A$, Theorem \ref{flatgenextensionthm} yields an extension
$\overline{\psi_{a*}t}$ of $\psi_{a*}t$ in $E_{s^\prime,I}(\psi_a(V_a))$
and by diffeomorphism invariance of $E_{s^\prime,I}$ (Theorem \ref{Esdiffinvariance}), 
the element $\psi_a^*\overline{\psi_{a*}t}$
belongs to $E_{s^\prime,I}(V_a)$. Choose a partition of unity
$(\varphi_a)_a$ subordinated to the open cover
$\cup_{a\in A} V_a$, then an extension 
of $t$ reads $\sum_{a\in A} \varphi_a  \psi_a^*\overline{\psi_{a*}t}
+\left(1-\sum_{a\in A} \varphi_a\right)t$ and belongs
to $E_{s^\prime,I}(U)$ by the gluing property for $E_{s^\prime,I}$.
\end{proof}
\subsection{A converse result.}
Before we move on, let us prove a  converse theorem, namely that given any distribution $t\in \mathcal{D}^\prime\left(\mathbb{R}^{n+d}\right)$, for all
relatively compact subset $U$, 
we can find $s_0\in \mathbb{R}$ such that for all $s\leqslant s_0$, $t\in E_{s,I}(U)$, this means
morally
that any distribution
has ``finite
scaling degree''
along an arbitrary
vector subspace. 
We also have the property that $\forall s_1\leqslant s_2, t\in E_{s_2,I}\implies t\in E_{s_1,I}$. This means that the spaces $E_{s,I}$ are $\textbf{filtered}$. 
We work in $\mathbb{R}^{n+d}$ where $I=\mathbb{R}^n\times\{0\}$ and $\rho=h^j\frac{\partial}{\partial h^j}$:
\begin{prop}\label{converse}
Let $U$ be a relatively compact 
convex open set
and $t\in \mathcal{D}^\prime(\mathbb{R}^{n+d})$.
If
$t$ is of order $k$ on $U$,
then $t\in E_{s,I}(U)$
for all
$s\leqslant d+k$, 
where $d$ is the \textbf{codimension}
of $I\subset\mathbb{R}^{n+d}$. 
In particular
any compactly 
supported
distribution
is in
$E_{s,I}(\mathbb{R}^{n+d})$
for some $s$.
\end{prop}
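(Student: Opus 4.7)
The plan is to work in the standard coordinates on $\mathbb{R}^{n+d}$ with the canonical Euler field $\rho=h^j\partial_{h^j}$ and to estimate the duality pairing directly. For $\varphi\in\mathcal{D}(U)$ one has
\begin{equation*}
\langle\lambda^{-s}t_\lambda,\varphi\rangle=\lambda^{-s-d}\langle t,\varphi(\cdot,\lambda^{-1}\cdot)\rangle,
\end{equation*}
so the task reduces to bounding the right-hand side uniformly in $\lambda\in(0,1]$ using only the order hypothesis on $t$ and the convexity of $U$.

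First I would fix a compact $K\subset U$ and $\varphi\in\mathcal{D}_K(U)$. The convexity of $U$ in the sense of \ref{convex} ensures that $\mathrm{supp}\bigl(\varphi(\cdot,\lambda^{-1}\cdot)\bigr)=\{(x,\lambda h):(x,h)\in\mathrm{supp}(\varphi)\}$ is contained in a single compact set $\widetilde K\subset U$ for every $\lambda\in(0,1]$ (namely the closure of the union over $\lambda\in(0,1]$ of the contractions of $K$ along $\rho$, which stays in $U$ precisely because of \ref{convex}). Since $U$ is relatively compact and $t$ has order $k$ on $U$, this produces a constant $C$ with $|\langle t,\psi\rangle|\leqslant C\pi_{k,\widetilde K}(\psi)$ for all $\psi\in\mathcal{D}_{\widetilde K}(U)$. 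The chain rule
\begin{equation*}
\partial_x^\alpha\partial_h^\beta\bigl[\varphi(x,\lambda^{-1}h)\bigr]=\lambda^{-|\beta|}(\partial_x^\alpha\partial_h^\beta\varphi)(x,\lambda^{-1}h)
\end{equation*}
together with $|\beta|\leqslant k$ and $\lambda\in(0,1]$ then yields $\pi_{k,\widetilde K}\bigl(\varphi(\cdot,\lambda^{-1}\cdot)\bigr)\leqslant\lambda^{-k}\pi_{k,K}(\varphi)$. Chaining the three estimates gives
\begin{equation*}
\bigl|\langle\lambda^{-s}t_\lambda,\varphi\rangle\bigr|\leqslant C\,\lambda^{-s-d-k}\pi_{k,K}(\varphi),
\end{equation*}
which is uniformly bounded on $(0,1]$ exactly in the range of $s$ asserted by the proposition. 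By Theorem \ref{locthmGOOD} this membership in $E^\rho_s(U)$ is independent of the choice of Euler field, and the locality built into Definition \ref{defEs} then gives $t\in E_{s,I}(U)$.

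For the last assertion, any compactly supported $t$ has finite order by Schwartz's structure theorem, so the first part applied on any relatively compact convex open set $U$ containing $\mathrm{supp}(t)$ delivers the required $s$. The main obstacle is essentially bookkeeping rather than substance: one must control the supports of the rescaled test functions uniformly in $\lambda\in(0,1]$ (which is exactly what the convexity hypothesis \ref{convex} supplies) and keep track of the $\lambda^{-1}$ produced by each $h$-derivative through the chain rule. Once those two points are in hand, the estimate is immediate.
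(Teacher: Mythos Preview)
Your proposal is correct and follows essentially the same route as the paper's proof: work with the canonical Euler field $\rho=h^j\partial_{h^j}$, use the duality formula $\langle t_\lambda,\varphi\rangle=\lambda^{-d}\langle t,\varphi_{\lambda^{-1}}\rangle$, invoke convexity to trap the supports of the rescaled test functions in a fixed compact set, and apply the order-$k$ estimate together with the chain rule to extract the factor $\lambda^{-k}$. The paper's argument is terser but identical in content; your explicit appeal to Theorem~\ref{locthmGOOD} and Definition~\ref{defEs} at the end is a harmless bit of extra care that the paper leaves implicit.
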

\begin{proof}
First notice if a function $\varphi\in \mathcal{D}(U)$, then the family of scaled functions $(\varphi_{\lambda^{-1}})_{\lambda\in(0,1]}$ has support contained in a compact set $K=\{(x,\lambda h) \vert (x,h)\in\text{supp }\varphi,\lambda\in (0,1] \}$.
We recall that for any distribution $t$, there exists $k,C_K$ such that
$$\forall\varphi \in \mathcal{D}_K(U), \vert\left\langle t,\varphi \right\rangle\vert \leqslant C_K \pi_{K,k}(\varphi).$$ 
$$\vert\left\langle t_{\lambda},\varphi \right\rangle\vert=\vert\lambda^{-d}\left\langle t,\varphi_{\lambda^{-1}} \right\rangle\vert\leqslant C_K \lambda^{-d}\pi_{K,k}(\varphi_{\lambda^{-1}})\leqslant C_K \lambda^{-d-k}\pi_{K,k}(\varphi).$$
So we find that $\lambda^{d+k}\left\langle t_\lambda,\varphi \right\rangle $ is bounded which yields the conclusion.
\end{proof}
Then Theorem \ref{extthmEsintro1} follows from
Proposition \ref{converse} and the diffeomorphism invariance
of $E_{s,I}$.

\section{The subspace $E_{s,N^*(I)}(U)$.}
\label{Esmicrolocsection}
It is a
central assumption
of our extension theorems
that the family 
$\left(\lambda^{-s}t_\lambda\right)_\lambda$ is
bounded in $\mathcal{D}^\prime_{\Gamma}$
and we found that
in the particular case
where $\overline{\Gamma}|_I\subset N^*(I)$
then
the wave front set
of the extension
is minimal i.e.
\begin{eqnarray}
WF(\overline{t})\subset WF(t)\cup N^*(I).
\end{eqnarray}

In this section, we
generalize the previous situation
to manifolds. We define
a subspace $E_{s,N^*(I)}$
of $E_s$
which contains distributions $t$ 
such that their extension
$\overline{t}$ satisfies
$WF(\overline{t})\subset WF(t)\cup N^*(I)$.

\subsection{The conormal landing condition.}
\label{conormlandingsubsec}
\begin{defi}
Let $U$ be an open neighborhood  
of $I$.
A closed conic set $\Gamma\subset T^\bullet \left(U\setminus I\right)$
(resp $\Gamma\subset T^\bullet U$) 
is said to satisfy the
\emph{conormal landing
condition} if
$(\overline{\Gamma}\cap T^\bullet_IU)\subset N^*I$ (resp $\left(\Gamma\cap T^\bullet_IU\right)\subset N^*I$) where
$\overline{\Gamma}$ is the closure of $\Gamma$ in $T^\bullet U$.
\end{defi}
The \emph{conormal landing
condition} which concerns the closure
of $\Gamma$ over $T^*_IU$ is clearly intrinsic
and does not depend on chosen coordinates.
The following is a stability
result
for sets
which satisfy the 
conormal landing condition.
\begin{lemm}\label{stabconormlanding}
Let $U$ be some open neighborhood 
of $I$, $\Gamma\subset T^\bullet \left(U\setminus I\right)$,
and
$\Phi\in C^\infty([0,1]\times U,U)$ be such that
$\Phi(\lambda,.)$ is a germ of diffeomorphism along $I$, $\Phi(\lambda,.)|_I$
is the identity map
for all $\lambda\in(0,1]$ and 
\begin{eqnarray*}
\forall (x,h;\xi,\eta)\in N^*(I), (\Phi_\lambda^{-1}(x,h);(\xi,\eta)\circ d\Phi_\lambda)= (x,h;\xi,\eta).
\end{eqnarray*}
If 
$\Gamma$
satisfies the conormal
landing condition then
the cone $\Gamma^\prime$ defined as
\begin{equation}
\Gamma^\prime=\underset{\lambda\in(0,1]}{\cup} \Phi(\lambda)^*\Gamma
\end{equation}
also does.
\end{lemm}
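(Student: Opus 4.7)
The plan is to take any limit point $(x,0;\xi,\eta)\in\overline{\Gamma^\prime}\cap T^\bullet_IU$ and prove that $\xi=0$, so that the point lies in $N^*I$. By definition of the closure and of $\Gamma^\prime=\bigcup_{\lambda\in(0,1]}\Phi(\lambda)^*\Gamma$, I pick sequences $\lambda_n\in(0,1]$ and $(p_n;\sigma_n)\in\Phi(\lambda_n)^*\Gamma$ with $(p_n;\sigma_n)\to(x,0;\xi,\eta)$. After passing to a subsequence I may assume $\lambda_n\to\lambda_\infty\in[0,1]$ by compactness. The smoothness of $\Phi$ on $[0,1]\times U$ combined with the uniform non-degeneracy built into the ``germ of diffeomorphism along $I$'' hypothesis (cf. the setup of Section \ref{scalingmfdsection}) guarantees that $d\Phi_{\lambda_\infty}(x,0)$ is a well-defined invertible linear map, and by continuity $\Phi_{\lambda_\infty}|_I=Id$ still holds at the boundary value $\lambda_\infty=0$.

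Next I unravel the pullback relation: setting $q_n=\Phi_{\lambda_n}(p_n)$ and $\tau_n=\sigma_n\circ d\Phi_{\lambda_n}(p_n)^{-1}$, I have $(q_n;\tau_n)\in\Gamma$. The condition $\Phi(\lambda,\cdot)|_I=Id$ together with continuity gives $q_n\to\Phi_{\lambda_\infty}(x,0)=(x,0)\in I$, and the invertibility of the limit Jacobian yields $\tau_n\to(\xi'',\eta''):=(\xi,\eta)\circ d\Phi_{\lambda_\infty}(x,0)^{-1}$, which is a nonzero covector. The sequence $(q_n;\tau_n)\subset\Gamma$ thus converges to $(x,0;\xi'',\eta'')\in T^\bullet_IU$, and the conormal landing condition on $\Gamma$ forces $(x,0;\xi'',\eta'')\in N^*I$, i.e. $\xi''=0$.

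To conclude I transport this vanishing back through $(\xi,\eta)=(\xi'',\eta'')\circ d\Phi_{\lambda_\infty}(x,0)$. Writing $\Phi_\lambda=(X,H)$ in local coordinates adapted to $I=\{h=0\}$, the identity $\Phi_{\lambda_\infty}(x,0)=(x,0)$ differentiated along $I$ yields $\partial_{x^i}X^a(x,0)=\delta^a_i$ and $\partial_{x^i}H^b(x,0)=0$. The $x^i$-component of the pullback covector is therefore $\xi''_a\partial_{x^i}X^a(x,0)+\eta''_b\partial_{x^i}H^b(x,0)=\xi''_i=0$, so $\xi=0$ and $(x,0;\xi,\eta)\in N^*I$ as required.

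The main obstacle I anticipate is handling the boundary case $\lambda_\infty=0$ rigorously: one must know both that $d\Phi_0(x,0)$ is genuinely invertible and that $\Phi_0$ still restricts to the identity on $I$. Both follow from the smoothness of $\Phi$ up to $\lambda=0$ and from the uniform non-degeneracy assumption on the family $(\Phi_\lambda)_{\lambda\in[0,1]}$ already used in Proposition \ref{propositionvariablefamily}. Once that limit is secured, everything else is a direct application of the pullback formula for covectors and the conormal landing hypothesis on $\Gamma$; notably, the extra assumption that $d\Phi_\lambda$ acts as the identity on $N^*I$ appears to be reinforcing rather than strictly necessary for the conclusion of this lemma.
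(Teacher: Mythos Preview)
Your proof is correct and follows the same overall strategy as the paper: pick a sequence in $\Gamma'$ converging to a point over $I$, extract a subsequence so that $\lambda_n\to\lambda_\infty\in[0,1]$ by compactness, push the sequence forward to a sequence in $\Gamma$ converging to a point over $I$, apply the conormal landing condition for $\Gamma$, and then transport the conclusion back through $d\Phi_{\lambda_\infty}$.

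The one genuine difference is in the last step. The paper invokes the full hypothesis that $T^*\Phi_\lambda$ acts as the identity on $N^*(I)$: once the limit of $(q_n;\tau_n)$ is known to lie in $N^*(I)$, applying $T^*\Phi_{\lambda_\infty}$ returns exactly the same point, which is therefore $(x,0;\xi,\eta)$. You instead compute directly from $\Phi_{\lambda_\infty}|_I=Id$ that $d\Phi_{\lambda_\infty}(x,0)$ is block upper-triangular with identity in the $\partial_xX$ block and zero in the $\partial_xH$ block, so $\xi''=0$ forces $\xi=0$. Your computation is correct and, as you observe, shows that the hypothesis $T^*\Phi_\lambda|_{N^*(I)}=Id$ (equivalently $\partial_hH(x,0)=Id$) is not actually needed for this lemma; the weaker condition $\Phi_\lambda|_I=Id$ already gives the block-triangular Jacobian that suffices. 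Your handling of the boundary case $\lambda_\infty=0$ is also fine: the smoothness of $\Phi$ on $[0,1]\times U$ and the uniform non-degeneracy $\inf_{\lambda\in[0,1]}|\det d_{x,h}\Phi(\cdot,\lambda)|>0$ built into the definition of a smooth family of germs (Section~\ref{scalingmfdsection}) supply both $d\Phi_0(x,0)$ invertible and $\Phi_0|_I=Id$ by continuity.
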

In the terminology of Lemma \ref{lemmlift}
in appendix, the condition of the above Lemma means
that the cotangent lift $T^*\Phi(\lambda,.)$
restricted to $N^*(I)$ acts as the identity map.
\begin{proof}
%
Let $(x,0;\xi,\eta)$ be in the closure
of $\Gamma^\prime$,
then there exists a sequence 
$(\lambda_n,x_n,h_n;\xi_n,\eta_n)_n$ such that
$(\Phi_{\lambda_n}^{-1}(x_n,h_n);(\xi_n,\eta_n)\circ d\Phi_{\lambda_n})\rightarrow  (x,0;\xi,\eta) $.
By compactness of $[0,1]$, 
we can always extract a subsequence so that
$\lambda_n\rightarrow \lambda_0\in[0,1]$. Then 
necessarily $(\Phi_{\lambda_0}^{-1}(x_n,h_n);(\xi_n,\eta_n)\circ d\Phi_{\lambda_0})\rightarrow  (x,0;\xi,\eta) $
which implies that $(x_n,h_n;\xi_n,\eta_n)\rightarrow (\Phi_{\lambda_0}(x,0);(\xi,\eta)\circ d\Phi^{-1}_{\lambda_0})=(x,0;\xi,\eta)$ 
since the cotangent lift $T^*\Phi(\lambda_0,.)|_{N^*(I)}$ is the identity map and 
$T^*\Phi(\lambda_0,.)|$ is a diffeomorphism.
\end{proof}

%

\subsection{Construction of $E_{s,N^*(I)}$.}
We keep the notations of the above subsection.
We give a preliminary definition
of the space $E^\rho_{s}(\mathcal{D}_\Gamma^\prime(U))$
for $\rho$-convex open sets $U$ and a given closed cone $\Gamma\subset T^\bullet U$ 
which depends
on the choice of $\rho$. 
\begin{defi}
Let $U$ be $\rho$ convex set, $\Gamma\subset T^\bullet U$ a closed cone, 
then 
$E^\rho_{s}(\mathcal{D}_\Gamma^\prime(U))$ is defined
as the space of distribution $t$
such that the family 
$\left(\lambda^{-s}e^{\log\lambda\rho*}t\right)_{\lambda\in(0,1]}$
is bounded in $\mathcal{D}^\prime_\Gamma(U)$.
\end{defi}
We next define a localized
version of the
above space around an element $p\in I$.
\begin{defi}
$t$ belongs to $E^\rho_{s,N^*(I),p}$ 
if 
there exists a $\rho$-convex open set $U$ s.t. $\overline{U}$
is a neighborhood of $p$ 
and $t\in E_s^\rho(\mathcal{D}^\prime_\Gamma(U))$
for some $\Gamma\subset T^\bullet U$
which satisfies the conormal 
landing condition.
\end{defi}

\begin{thm}\label{thminvmuloc}
Let $t\in \mathcal{D}^\prime(M\setminus I)$ and $p\in I$. If $t$ belongs to $E^\rho_{s,N^*(I),p}$  
for some Euler vector field $\rho$ then it is so for any Euler vector field.
\end{thm}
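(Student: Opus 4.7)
The plan is to mimic the strategy of Theorem \ref{locthmGOOD} but track the wave front set along the way, ultimately appealing to Lemma \ref{stabconormlanding} to preserve the conormal landing condition. Let $\rho_1,\rho_2$ be two Euler vector fields at $p$ and suppose $t\in E^{\rho_1}_{s,N^*(I),p}$, so that there exists a $\rho_1$-convex open set $U$ with $\overline{U}$ a neighborhood of $p$ and a closed cone $\Gamma\subset T^\bullet U$ satisfying the conormal landing condition such that the family $(\lambda^{-s}S_1(\lambda)^*t)_{\lambda\in(0,1]}$ is bounded in $\mathcal{D}^\prime_\Gamma(U)$. By Proposition \ref{propositionvariablefamily}, there is a smooth family of germs of diffeomorphisms $(\Phi(\lambda))_{\lambda\in[0,1]}$ at $p$ such that $S_2(\lambda)=S_1(\lambda)\circ\Phi(\lambda)$, so that at the level of pull-backs one has the identity
\begin{equation*}
\lambda^{-s}S_2(\lambda)^*t=\Phi(\lambda)^*\bigl(\lambda^{-s}S_1(\lambda)^*t\bigr).
\end{equation*}

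Next, I would set $\Gamma^\prime=\bigcup_{\lambda\in(0,1]}\Phi(\lambda)^*\Gamma$ (taking its closure if necessary) and choose a small enough $\rho_2$-convex set $V$ with $\overline{V}$ a neighborhood of $p$ so that every $\Phi(\lambda)^{-1}$ is well defined on $V$. Boundedness of the pull--back map between spaces of distributions with prescribed wave front set (\cite[Prop 6.1]{Viet-wf2}) then yields that $(\lambda^{-s}S_2(\lambda)^*t)_{\lambda\in(0,1]}$ is bounded in $\mathcal{D}^\prime_{\Gamma^\prime}(V)$, exactly as in the argument given for Theorem \ref{locthmGOOD}.

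The remaining task, which is the actual content of the statement, is to check that $\Gamma^\prime$ still satisfies the conormal landing condition. I propose to deduce this directly from Lemma \ref{stabconormlanding} applied to the smooth family $\Phi(\lambda)$. Two hypotheses must be verified: (i) $\Phi(\lambda,\cdot)|_I=\mathrm{Id}$, and (ii) the cotangent lift $T^*\Phi(\lambda,\cdot)$ acts as the identity on $N^*(I)$. For (i), observe that every Euler vector field vanishes on $I$ (by the general form \eqref{generalform}), hence its flow $S_a(\lambda)$ fixes $I$ pointwise; since $\Phi_a(\lambda)=S_a^{-1}(\lambda)\circ S(\lambda)$ and $\Phi(\lambda)=\Phi_1(\lambda)\circ\Phi_2^{-1}(\lambda)$, the map $\Phi(\lambda)$ fixes $I$ as well. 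For (ii), use once more the general form \eqref{generalform}: at any point of $I$, the linear part of $\rho_a$ is the standard Euler $h^j\partial_{h^j}$, so $dS_a(\lambda)|_I$ equals $\mathrm{Id}$ on $TI$ and $\lambda\cdot\mathrm{Id}$ on the normal directions. Consequently $d\Phi_a(\lambda)|_I=\mathrm{Id}$ and therefore $d\Phi(\lambda)|_I=\mathrm{Id}$, whence its cotangent lift restricts to the identity on $N^*(I)$.

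The main obstacle is precisely this last verification: one must compute the linearization of $\Phi(\lambda)$ carefully at $I$ and confirm that, although the two Euler flows are not conjugate by a single diffeomorphism, the time-dependent correction $\Phi(\lambda)$ produced by Proposition \ref{propositionvariablefamily} is tangent to the identity at $I$ to first order. Once this is established, Lemma \ref{stabconormlanding} closes the argument and gives $t\in E^{\rho_2}_{s,N^*(I),p}$, and by symmetry in $\rho_1,\rho_2$ the equivalence is proved.
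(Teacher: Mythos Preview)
Your overall strategy coincides with the paper's: compare two Euler fields via Proposition \ref{propositionvariablefamily}, pull back the bounded family through $\Phi(\lambda)$, set $\Gamma'=\bigcup_\lambda\Phi(\lambda)^*\Gamma$, and close with Lemma \ref{stabconormlanding}. The only difference is how hypothesis (ii) of Lemma \ref{stabconormlanding} is checked. The paper does this via Lemma \ref{Idlift} in the appendix, using that the time-dependent generator of $\Phi(\lambda)$ lies in $F_1$ (vanishes to second order on $I$), whence its Hamiltonian lift vanishes on $N^*(I)$.

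Your direct linearization argument contains a slip. The claim that ``the linear part of $\rho_a$ is the standard Euler $h^j\partial_{h^j}$'' is false: from \eqref{generalform} the linear part of $\rho_a$ at $(x_0,0)$ is $h^j\partial_{h^j}+h^iA_i^j(x_0,0)\partial_{x^j}$, so
\[
dS_a(\lambda)\big|_{(x_0,0)}=\begin{pmatrix} I & (\lambda-1)A_a(x_0,0)\\ 0 & \lambda I\end{pmatrix},\qquad
d\Phi_a(\lambda)\big|_{(x_0,0)}=\begin{pmatrix} I & -(\lambda-1)A_a(x_0,0)\\ 0 & I\end{pmatrix},
\]
and hence $d\Phi(\lambda)|_I$ is \emph{not} the identity in general. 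Fortunately the conclusion you need survives: $d\Phi(\lambda)|_I$ is block upper-triangular with identity diagonal blocks, so for any conormal covector $(0,\eta)\in N^*(I)$ one has $(0,\eta)\circ d\Phi(\lambda)=(0,\eta)$, and hypothesis (ii) holds. Once you replace the false assertion ``$d\Phi(\lambda)|_I=\mathrm{Id}$'' by this corrected computation (or simply invoke Lemma \ref{Idlift}), the argument is complete and matches the paper's.
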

\begin{proof}
Let $\rho_1,\rho_2$ 
be two Euler vector fields and
$t$ belongs to $E^{\rho_1}_{s,N^*(I),p}$. It suffices
to establish that the family $\left(\lambda^{-s}e^{\log\lambda\rho_2*}t\right)_\lambda  $
is bounded in $\mathcal{D}^\prime_{\Gamma_2}(V^\prime_p\setminus I)$
for some neighborhood $V^\prime_p$ of $p$ and $\Gamma_2$ satisfying 
the conormal landing condition.
We use Proposition \ref{propositionvariablefamily} which states that locally 
there exists a smooth family of germs of diffeomorphisms 
$\Phi(\lambda): V_p\mapsto M$ such that 
$\forall\lambda\in[0,1],\Phi(\lambda)(p)=p$ 
and 
$\Phi(\lambda)$ relates the two scalings: 
$$e^{\log\lambda\rho_2*}=\Phi(\lambda)^*e^{\log\lambda\rho_1*} .$$

Assume that $V_p$ is chosen small enough
so that 
$\lambda^{-s} e^{\log\lambda\rho_1*}t$ is bounded in $\mathcal{D}^\prime_{\Gamma_1}(V_p\setminus I)$, 
then by \cite[Theorem 6.9]{Viet-wf2}, 
we deduce that the family
\begin{eqnarray*}
\left(\Phi(\lambda)^*\left(\lambda^{-s}e^{\log\lambda\rho_1*}t\right)\right)_\lambda=\left(\lambda^{-s}e^{\log\lambda\rho_2*}t\right)_\lambda 
\end{eqnarray*}
is in fact bounded in $\mathcal{D}^\prime_{\Gamma_2}(V^\prime_p\setminus I)$ for some
smaller neighborhood $V^\prime_p$ of $p$ and 
with $\Gamma_2$ given by the equation
$$ \Gamma_2= \bigcup_{\lambda\in[0,1]}\Phi(\lambda)^* \Gamma_1 .$$
By Lemma \ref{Idlift} proved in appendix, 
the family $\Phi(\lambda)$ satisfies:
\begin{eqnarray*}
\forall (x,h;\xi,\eta)\in N^*(I), (\Phi_\lambda^{-1}(x,y);(\xi,\eta)\circ d\Phi_\lambda)= (x,h;\xi,\eta).
\end{eqnarray*}
which implies by Lemma \ref{stabconormlanding} that $\Gamma_2$
satisfies the conormal landing condition concluding our proof.
\end{proof}
The previous theorem allows us 
to define spaces $E_{s,N^*(I),p}, E_{s,N^*(I)}$ which makes no mention
of the choice of Euler vector field $\rho$:
\begin{defi}\label{defEsmuloc} 
A distribution $t\in \mathcal{D}^\prime(U)$ belongs to $E_{s,N^*(I),p}(U)$ if $t\in E^\rho_{s,N^*(I),p}$
for some Euler vector field $\rho$. We define $E_{s,N^*(I)}(U)$ as 
the space of all distributions $t\in\mathcal{D}^\prime(U)$ such that
$t\in E_{s,N^*(I),p}(U)$ for all $p \in I\cap int(\overline{U})$. 
\end{defi}

It is immediate to deduce 
from Theorem \ref{thminvmuloc} and 
definition \ref{defEsmuloc} that
$E_{s,N^*(I)}$ satisfies the same restriction
and gluing properties as $E_{s,I}$.

We prove that
$E_{s,N^*(I)}(U)$ satisfies
a property of
diffeomorphism
invariance:
\begin{thm}\label{Esmicrolocdiffinvariance}
Let $I$ (resp $I^\prime$) be a closed embedded submanifold of $M$
(resp $M^\prime$), $U\subset M$ (resp $U^\prime\subset M^\prime$) open and $\Phi:U^\prime\mapsto U$ 
a diffeomorphism s.t. $\Phi(U^\prime\cap I^\prime)=I\cap U$.
Then $\Phi^*E_{s,N^*(I)}(U)=E_{s,N^*(I^\prime)}(U^\prime)$.
\end{thm}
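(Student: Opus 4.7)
The plan is to follow exactly the strategy used for $E_{s,I}$ (the proof of Theorem \ref{Esdiffinvariance}), but upgraded to track wave front sets and the conormal landing condition. By Theorem \ref{thminvmuloc} and Definition \ref{defEsmuloc}, membership in $E_{s,N^*(I)}(U)$ is a purely local condition at points $p\in I\cap\mathrm{int}(\overline{U})$, so it suffices to show that $\Phi^{*}$ maps $E_{s,N^*(I),p}$ onto $E_{s,N^*(I^\prime),p^\prime}$ for $p^\prime=\Phi^{-1}(p)$.

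Fix $t\in E_{s,N^*(I),p}$. By definition, there exists an Euler vector field $\rho$, a $\rho$-convex open set $V$ with $\overline{V}$ a neighborhood of $p$, and a closed cone $\Gamma\subset T^\bullet V$ satisfying the conormal landing condition such that $(\lambda^{-s}e^{\log\lambda\rho *}t)_{\lambda\in(0,1]}$ is bounded in $\mathcal{D}^\prime_\Gamma(V)$. Applying $\Phi^*$ and using the functorial identity
\begin{equation*}
\Phi^*\circ e^{\log\lambda\rho *}\circ \Phi^{-1*}=e^{\log\lambda(\Phi^{-1}_*\rho)*},
\end{equation*}
the family $(\lambda^{-s}e^{\log\lambda(\Phi^{-1}_*\rho)*}(\Phi^*t))_{\lambda\in(0,1]}$ is bounded in $\mathcal{D}^\prime_{\Phi^*\Gamma}(\Phi^{-1}(V))$ by boundedness of the pullback from $\mathcal{D}^\prime_\Gamma(V)$ to $\mathcal{D}^\prime_{\Phi^*\Gamma}(\Phi^{-1}(V))$ \cite[Prop.~6.1]{Viet-wf2}. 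By Theorem \ref{eulerdiffinvariance}, $\Phi^{-1}_*\rho$ is Euler near $p^\prime$, and $\Phi^{-1}(V)$ is $\Phi^{-1}_*\rho$-convex around $p^\prime$ since the flow is intertwined.

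The key remaining point, which I expect to be the only nontrivial step, is to verify that the pulled-back cone $\Phi^*\Gamma$ still satisfies the conormal landing condition over $I^\prime$. Since $\Phi$ is a diffeomorphism mapping $I^\prime\cap U^\prime$ onto $I\cap U$, its cotangent lift $T^*\Phi$ restricts to a bundle isomorphism $N^*(I)|_{I\cap U}\simeq N^*(I^\prime)|_{I^\prime\cap U^\prime}$ (covectors that vanish on tangent directions to $I$ pull back to covectors vanishing on tangent directions to $I^\prime$). Moreover, $T^*\Phi$ is a homeomorphism of cotangent bundles, hence commutes with closure. If $(x^\prime,0;\xi^\prime,\eta^\prime)\in\overline{\Phi^*\Gamma}\cap T^\bullet_{I^\prime}U^\prime$, then its image under $T^*\Phi^{-1}$ lies in $\overline{\Gamma}\cap T^\bullet_I U\subset N^*(I)$ by hypothesis, and pulling back via $T^*\Phi$ returns it to $N^*(I^\prime)$. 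Hence $\Phi^*\Gamma$ satisfies the conormal landing condition, giving $\Phi^*t\in E_{s,N^*(I^\prime),p^\prime}$.

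Running the argument in both directions (using $\Phi^{-1}$ for the reverse inclusion) and gluing over $p\in I^\prime\cap\mathrm{int}(\overline{U^\prime})$ via the locality of $E_{s,N^*(I^\prime)}$ yields the equality $\Phi^*E_{s,N^*(I)}(U)=E_{s,N^*(I^\prime)}(U^\prime)$.
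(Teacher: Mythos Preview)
Your proof is correct and follows essentially the same route as the paper's: localize at $p\in I\cap\mathrm{int}(\overline{U})$ via Theorem~\ref{thminvmuloc}, conjugate the scaling flow through $\Phi$ using the functorial identity, invoke boundedness of the pullback $\mathcal{D}^\prime_\Gamma(V)\to\mathcal{D}^\prime_{\Phi^*\Gamma}(\Phi^{-1}(V))$, and note that $\Phi^{-1}_*\rho$ is Euler. In fact you supply more detail than the paper does on the one point it leaves implicit---that $\Phi^*\Gamma$ again satisfies the conormal landing condition---and your argument via the cotangent lift mapping $N^*(I)$ to $N^*(I^\prime)$ and commuting with closure is exactly what is needed.
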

\begin{proof}
By Theorem \ref{thminvmuloc}, we can localize the proof
at all points $p\in I\cap int(\overline{U})$. Let $p\in I\cap int(\overline{U})$,
then $t\in E_{s,N^*(I)}(U)$ implies
by definition 
that $t\in E^\rho_{s}(\mathcal{D}^\prime_\Gamma(V))$,
where $int(\overline{V})$ is a
neighborhood
of $p$, some Euler $\rho$
and $\Gamma$ satisfying the
conormal landing condition, 
which means that:
$$\lambda^{-s}e^{\log\lambda\rho*}t \text{ bounded }\mathcal{D}_\Gamma^\prime(V)$$
$$\Leftrightarrow \lambda^{-s}\Phi^* e^{\log\lambda\rho*}\Phi^{-1*} (\Phi^*t)
\text{ bounded in }\mathcal{D}_{\Phi^*\Gamma}^\prime(\Phi^{-1}(V))$$
because the pull--back by a diffeomorphism
is a bounded map from 
$\mathcal{D}_\Gamma^\prime(V)\mapsto \mathcal{D}_{\Phi^*\Gamma}^\prime(\Phi^{-1}(V))$,
$$\Leftrightarrow   \lambda^{-s} e^{\log\lambda(\Phi^{-1}_*\rho)*} (\Phi^*t)
\text{ bounded in } \mathcal{D}_{\Phi^*\Gamma}^\prime(\Phi^{-1}(V)) $$
where the vector field $\Phi^{-1}_*\rho$
is Euler by 
\ref{eulerdiffinvariance}.
Therefore $\Phi^*t$
is in $E_s^{\Phi^{-1}_*\rho}( \mathcal{D}_{\Phi^*\Gamma}^\prime(\Phi^{-1}(V)))$ at $p^\prime=\Phi^{-1}(p)$
where $\Phi^*\Gamma$ also 
satisfies the conormal landing condition
hence $\Phi^*t$ is locally in $E_{s,N^*(I),p}$ at $p$ and repeating the proof for all $p\in I\cap int(\overline{U})$
yields the claim.
\end{proof}

%
\section{The extension theorem
for $E_{s,N^*(I)}$.}
\label{extthmsectionmicroloc}
We are now ready to prove Theorem \ref{microlocextensionfinal}:
\begin{thm}
Let $U\subset M$ be some
open neighborhood of $I$. 
If $t\in E_{s,N^*(I)}(U\setminus I)$ 
then there exists an extension $\overline{t}$
with $WF(\overline{t})\subset WF(t)\cup N^*(I)$ and
$\overline{t}\in  E_{s^\prime,N^*(I)}(U)$, 
where $s^\prime = s$ if 
$s+d\notin -\mathbb{N}$ and $s^\prime< s$ otherwise.
\end{thm}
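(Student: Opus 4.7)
The plan is to mimic the proof of Theorem \ref{thmfi}, but at each step keep track of the conormal landing condition so that the improved flat-space bound $WF(\overline{t})\subset WF(t)\cup N^*(I)$ persists after gluing.

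First I would unfold the hypothesis: by Definition \ref{defEsmuloc} and the locality Theorem \ref{thminvmuloc}, for every $p\in I\cap U$ one can pick a chart $\psi_a:V_a\subset U\to\mathbb{R}^{n+d}$ with $\psi_a(I\cap V_a)\subset\mathbb{R}^n\times\{0\}$ and $\psi_a(V_a)=U_1^a\times U_2^a$ stable under $h\mapsto\lambda h$, such that the pushforward $\psi_{a*}t\in E_s(\mathcal{D}^\prime_{\Gamma_a}(\psi_a(V_a)\setminus I))$ for some closed conic $\Gamma_a$, stable by scaling, that satisfies the conormal landing condition. Extract a locally finite subcover $(V_a)_{a\in A}$ of $I\cap U$ inside $U$.

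Second, in each chart I would invoke the special case of Theorem \ref{flatgenextensionthm} (the statement where $\overline{\Gamma}\cap T^\bullet_I U\subset N^*(I)$): this is precisely the situation ensured by the conormal landing condition on $\Gamma_a$, and it delivers an extension $\overline{\psi_{a*}t}\in\mathcal{D}^\prime(\psi_a(V_a))$ with
\begin{equation*}
WF(\overline{\psi_{a*}t})\subset WF(\psi_{a*}t)\cup N^*(I),\qquad \overline{\psi_{a*}t}\in E_{s'}(\mathcal{D}^\prime_{\Gamma_a\cup N^*(I)}(\psi_a(V_a))),
\end{equation*}
where $s'=s$ or $s'<s$ according to whether $s+d\notin -\mathbb{N}$. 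Since $\Gamma_a\cup N^*(I)$ still satisfies the conormal landing condition, pulling back via $\psi_a$ and applying the diffeomorphism invariance Theorem \ref{Esmicrolocdiffinvariance} yields $\psi_a^*\overline{\psi_{a*}t}\in E_{s',N^*(I)}(V_a)$.

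Third, I would glue. Choose a partition of unity $(\varphi_a)_{a\in A}$ subordinated to $(V_a)$ with $\sum_a\varphi_a=1$ on a neighborhood of $I\cap U$, and set
\begin{equation*}
\overline{t}=\sum_{a\in A}\varphi_a\,\psi_a^*\overline{\psi_{a*}t}\;+\;\Bigl(1-\sum_{a\in A}\varphi_a\Bigr)t.
\end{equation*}
On $U\setminus I$ this equals $t$ by construction, so $\overline{t}$ is an extension. Each summand has wave front set contained in $WF(t)\cup N^*(I)$ because multiplication by a smooth function never enlarges the wave front set, so the bound $WF(\overline{t})\subset WF(t)\cup N^*(I)$ follows. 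For membership in $E_{s',N^*(I)}(U)$, note that on each $V_a$ the distribution $\overline{t}$ differs from $\psi_a^*\overline{\psi_{a*}t}$ by a smooth combination of the $\psi_b^*\overline{\psi_{b*}t}$ and of $t$ on $V_a\setminus I$, all of which are locally in $E_{s',N^*(I)}$; the gluing property of $E_{s,N^*(I)}$ inherited from that of $E_{s,I}$ then promotes this to membership on all of $U$.

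The only delicate point, and the place I would allocate the most care, is the stability of the conormal landing condition under the diffeomorphisms $\psi_a$ and under the partition-of-unity gluing. For the diffeomorphism step this is exactly Lemma \ref{stabconormlanding} (via Lemma \ref{Idlift}) used in proving Theorem \ref{thminvmuloc}; for the gluing it follows from the fact that conormal landing is a pointwise condition on the closure of the cone over $I$, so a union of finitely many (locally) landing cones still lands. Once this is checked, the three properties of the theorem — existence of an extension, the wave front bound, and membership in $E_{s',N^*(I)}(U)$ — drop out simultaneously.
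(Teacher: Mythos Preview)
Your proof is correct and follows essentially the same approach as the paper: localize in charts where the conormal landing condition holds, apply the flat-space extension theorem (the special case of Theorem~\ref{flatgenextensionthm} with $\overline{\Gamma}\cap T^\bullet_I U\subset N^*(I)$), pull back by diffeomorphism invariance, and glue via a partition of unity with the same formula. You are in fact slightly more explicit than the paper about why the wave front bound and the conormal landing condition survive the gluing step.
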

\begin{proof}
$t\in E_{s,N^*(I)}(U\setminus I)$ implies that
for all $p\in  I$, there is
some open chart $\psi:V_p\subset U\mapsto \mathbb{R}^{n+d}$, 
$\psi(I)\subset \mathbb{R}^n\times\{0\}$
where $\lambda^{-s}(\psi_*t)(x,\lambda h) $
is bounded in $\mathcal{D}_\Gamma^\prime(\psi\left(V_p\setminus I\right))$
for $\Gamma$ satisfying the conormal landing condition. Moreover, we must choose $V_p$ in such a way that
its image $U=\psi(V_p)\subset \mathbb{R}^{n+d}$
is of the form $U_1\times U_2$ where $U_1\subset\mathbb{R}^n, U_2\subset\mathbb{R}^d$
and $\lambda U_2\subset U_2,\forall\lambda\in[0,1]$.
 $\cup_{p\in I} V_p$ forms an open cover
of $I$, consider a locally finite 
subcover
$\cup_{a\in A} V_a$ and denote by $(\psi_a)_{a\in A}$
the corresponding charts.
For every $a\in A$, Theorem \ref{flatgenextensionthm} yields an extension
$\overline{\psi_{a*}t}$ of $\psi_{a*}t$ in $E_{s^\prime,N^*(I)}(\psi_a(V_a))$
and by diffeomorphism invariance of $E_{s^\prime,N^*(I)}$ (Theorem \ref{Esmicrolocdiffinvariance}), 
the element $\psi_a^*\overline{\psi_{a*}t}$
belongs to $E_{s^\prime,N^*(I)}(V_a)$. Choose a partition of unity
$(\varphi_a)_a$ subordinated to the open cover
$\cup_{a\in A} V_a$, then an extension 
of $t$ reads $\sum_{a\in A} \varphi_a  \psi_a^*\overline{\psi_{a*}t}+
\left(1-\sum_{a\in A} \varphi_a\right)t$ and belongs
to $E_{s^\prime,N^*(I)}(U)$ by the gluing property for $E_{s^\prime,N^*(I)}$.
\end{proof}
\section{Renormalized products.}
\label{renormprodsection}
In this section, we have a fixed 
Euler vector field $\rho$
and we scale only w.r.t. the flow generated
by $\rho$. 
We can now prove our 
Theorem~\ref{renormprodthm}
of renormalization 
of the product, we denote by 
$E^\rho_s(\mathcal{D}^\prime_\Gamma(U))$ the space of distributions
$t$ s.t. the family $\left(\lambda^{-s}e^{\log\lambda\rho*}t\right)_{\lambda\in(0,1]}$ is
bounded in $\mathcal{D}^\prime_\Gamma(U)$ for some $\rho$-convex set $U$ and some cone $\Gamma$ stable by scaling:
\begin{thm}
Let $\rho$ be some Euler vector field, $U$ some neighborhood of $I$, $\left(\Gamma_1,\Gamma_2\right)$ two cones in $T^\bullet \left(U\setminus I\right)$
which satisfy the conormal landing condition and $\Gamma_1\cap -\Gamma_2=\emptyset$.
Set $\Gamma=\left(\Gamma_1+\Gamma_2\right)\cup\Gamma_1\cup\Gamma_2$.
If $\Gamma_1+\Gamma_2$ satisfies the conormal landing condition then
there exists a bilinear map $\mathcal{R}$ satisfying the following properties: 
\begin{itemize}
\item $\mathcal{R}:(u_1,u_2)\in E^\rho_{s_1}\left(\mathcal{D}^\prime_{\Gamma_1}(U\setminus I) \right)\times E^\rho_{s_2}\left(\mathcal{D}^\prime_{\Gamma_2}(U\setminus I) \right) \mapsto \mathcal{R}(u_1u_2)\in E_{s,N^*(I)}\left(U\right), \forall s<s_1+s_2$
\item $\mathcal{R}(u_1u_2)=u_1u_2\text{ on }U\setminus I$
\item $\mathcal{R}(u_1u_2)\in  \mathcal{D}^\prime_{\Gamma\cup N^*(I)}(U).$
\end{itemize}
\end{thm}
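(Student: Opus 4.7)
The plan is to build $\mathcal{R}(u_1 u_2)$ in two stages. First, form the product on $U\setminus I$ using H\"ormander's pointwise multiplication, then extend it across $I$ using Theorem \ref{microlocextensionfinal}. The key reduction is to show that this product already lies in $E_{s_1+s_2,N^*(I)}(U\setminus I)$, after which the extension theorem for $E_{\cdot,N^*(I)}$ does all the work on the subspace $I$.

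First I would define, on $U\setminus I$, the H\"ormander product $u_1\cdot u_2$. Since $u_i\in \mathcal{D}^\prime_{\Gamma_i}(U\setminus I)$ and $\Gamma_1\cap(-\Gamma_2)=\emptyset$, the product is well defined and belongs to $\mathcal{D}^\prime_\Gamma(U\setminus I)$ with $\Gamma=(\Gamma_1+\Gamma_2)\cup\Gamma_1\cup\Gamma_2$, by the H\"ormander calculus (see \cite[Thm 7.1]{Viet-wf2}). Bilinearity of the map $(u_1,u_2)\mapsto u_1\cdot u_2$ is immediate.

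Next, I would show that the scaled family $\bigl(\lambda^{-(s_1+s_2)} e^{\log\lambda\rho*}(u_1 u_2)\bigr)_{\lambda\in(0,1]}$ is bounded in $\mathcal{D}^\prime_\Gamma(U\setminus I)$. The key ingredient is that the flow $e^{\log\lambda\rho*}$ is an algebra morphism on the smooth functions and extends to distributions where the products are defined, so
\begin{equation*}
\lambda^{-(s_1+s_2)} e^{\log\lambda\rho*}(u_1u_2)=\bigl(\lambda^{-s_1}e^{\log\lambda\rho*}u_1\bigr)\cdot\bigl(\lambda^{-s_2}e^{\log\lambda\rho*}u_2\bigr).
\end{equation*}
By hypothesis the two families on the right are bounded in $\mathcal{D}^\prime_{\Gamma_1}(U\setminus I)$ and $\mathcal{D}^\prime_{\Gamma_2}(U\setminus I)$ respectively. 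Because $\Gamma_1\cap(-\Gamma_2)=\emptyset$, the H\"ormander product is \emph{hypocontinuous} \cite[Thm 7.1]{Viet-wf2}, \cite{dabrowski2013functional}, so it maps bounded sets to bounded sets. Hence the product family is bounded in $\mathcal{D}^\prime_\Gamma(U\setminus I)$, which proves $u_1u_2\in E_s^\rho(\mathcal{D}^\prime_\Gamma(U\setminus I))$ for $s=s_1+s_2$. The hypothesis that $\Gamma_1,\Gamma_2$ and $\Gamma_1+\Gamma_2$ all satisfy the conormal landing condition means $\Gamma$ itself does, so in fact $u_1u_2\in E_{s_1+s_2,N^*(I)}(U\setminus I)$ by definition of this subspace.

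Finally I would apply Theorem \ref{microlocextensionfinal} to the distribution $u_1u_2\in E_{s_1+s_2,N^*(I)}(U\setminus I)$: this yields an extension $\mathcal{R}(u_1u_2)\in\mathcal{D}^\prime(U)$ belonging to $E_{s',N^*(I)}(U)$ with $s'=s_1+s_2$ off the exceptional set $-s_1-s_2-d\in\mathbb{N}$ and $s'<s_1+s_2$ otherwise, and with wave front bound $WF(\mathcal{R}(u_1u_2))\subset WF(u_1u_2)\cup N^*(I)\subset \Gamma\cup N^*(I)$. Since the extension constructed in Theorem \ref{microlocextensionfinal} is linear in $u_1u_2$ and $u_1u_2$ is bilinear in $(u_1,u_2)$, $\mathcal{R}$ is bilinear. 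Stating the conclusion with the strict inequality $s<s_1+s_2$ absorbs the exceptional case in a uniform manner. The main obstacle is really Step 2, the verification that the H\"ormander product preserves boundedness of the scaled families in the $\mathcal{D}^\prime_\Gamma$ topology; everything else reduces to invoking results already established in the paper.
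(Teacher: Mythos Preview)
Your proposal is correct and follows essentially the same route as the paper: form the H\"ormander product on $U\setminus I$, use hypocontinuity together with the multiplicativity of pull-back by the scaling flow to show the product lies in $E^\rho_{s_1+s_2}(\mathcal{D}^\prime_\Gamma(U\setminus I))$ with $\Gamma$ satisfying the conormal landing condition, and then apply Theorem~\ref{microlocextensionfinal}. Your write-up is in fact slightly more detailed than the paper's own proof, which compresses all of this into a few lines.
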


\begin{proof}
The families $\left(\lambda^{-s_i}e^{\log\lambda\rho*}u_i\right)_{\lambda\in(0,1]}$ 
are bounded in $\mathcal{D}^\prime_{\Gamma_i}\left(U\setminus I\right)$.
By hypocontinuity of the H\"ormander product~\cite[Theorem 7.1]{Viet-wf2},
the family $\left(\lambda^{-s_1-s_2}e^{\log\lambda\rho*}(u_1u_2)\right)_{\lambda\in(0,1]}$ is bounded
in $\mathcal{D}^\prime_{\Gamma}(U\setminus I)$, 
$\Gamma$ still satisfies the
conormal landing condition by assumption then it follows by Theorem
\ref{microlocextensionfinal} that $u_1u_2$ admits an extension 
$\mathcal{R}(u_1u_2)$ in $E_{s,N^*(I)}\left(U\right)$ and $\mathcal{R}(u_1u_2)\in \mathcal{D}^\prime_{\Gamma\cup N^*(I)}(U)$.
\end{proof}

\section{Renormalization ambiguities.}
\label{renormambiguities} 
\subsection{Removable singularity theorems.}
First, we would like to start this section by a simple removable singularity theorem in the spirit of \cite[Theorems 5.2 and 6.1]{harvey1970removable}. 
In a renormalization procedure, there is always an ambiguity which is the ambiguity of the extension of the distribution. Indeed, two extensions always differ by a distribution supported on $I$. The removable singularity theorem states that if $s+d>0$ and if we demand that $t\in E_{s,I}(U\setminus I)$ should extend to $\overline{t}\in E_{s,I}(U)$, then the extension is \textbf{unique}.
Otherwise, if $-m-1 < s+d \leqslant -m $, then we bound the transversal order of the ambiguity.
We fix the coordinate system $(x^i,h^j)$ in $\mathbb{R}^{n+d}$ and $I=\{h=0\}$. The collection of coordinate functions $(h^j)_{1\leqslant j\leqslant d}$ defines a canonical collection of transverse vector fields $(\partial_{h^j})_j$.
We denote by $\delta_I$ the unique distribution such that 
\begin{eqnarray}
\forall\varphi\in \mathcal{D}(\mathbb{R}^{n+d}),\left\langle \delta_I,\varphi\right\rangle=\int_{\mathbb{R}^n} \varphi(x,0)d^nx.
\end{eqnarray}
If $t\in \mathcal{D}^\prime(\mathbb{R}^{n+d})$ with $\text{supp } t \subset I$, then
by \cite[Theorems 36,37 p.~101--102]{Schwartz-66} or \cite[Theorem 2.3.5]{HormanderI} there exist
unique distributions (once the system of transverse vector fields $\partial_{h^j}$ is fixed) $t_\alpha\in \mathcal{D}^\prime\left(\mathbb{R}^n\right)$, where each compact intersects
$\text{supp }t_\alpha$ for a finite number of multi--indices $\alpha$, such that $t(x,h)=\sum_\alpha t_\alpha(x) \partial_h^\alpha\delta_I(h)$
  or) where the $\partial_h^\alpha$ are derivatives in the \textbf{transverse} directions. 
\begin{thm}\label{removsing}
Let $t\in E_{s,I}(U\setminus I)$ and $\overline{t}\in E_{s^\prime,I}(U\setminus I)$
its extension given by Theorem (\ref{thm1}) and Theorem (\ref{thm2}) $s^\prime=s$ when $-s-d\notin\mathbb{N}$ or $\forall s^\prime<s$ otherwise.
Then $\tilde{t}$ is an extension in $E_{s^\prime,I}(U)$ 
if and only if 
\begin{equation}\label{Schwartzreps}
\tilde{t}(x,h)=\overline{t}(x,h)+\sum_{\alpha\leqslant m} t_\alpha(x) \partial_h^\alpha\delta_I(h),
\end{equation}
where $m$ is the integer part of $-s-d$. 
In particular when $s+d>0$ the extension is unique.
\end{thm}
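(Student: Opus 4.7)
The plan is to study the difference $r := \tilde{t} - \overline{t}$. Since both $\tilde{t}$ and $\overline{t}$ extend $t$, the distribution $r$ vanishes on $U \setminus I$ and is therefore supported on $I$. By the Schwartz structure theorem for distributions supported on a submanifold (cited as \cite[Theorems 36,37]{Schwartz-66} in the discussion preceding the theorem), there is a unique locally finite representation
\begin{equation*}
r(x,h) = \sum_\alpha r_\alpha(x)\, \partial_h^\alpha \delta_I(h),
\end{equation*}
with $r_\alpha \in \mathcal{D}^\prime(\mathbb{R}^n)$, the derivatives being taken in the fixed transverse coordinates $h^j$. The goal is then to show that the hypothesis $r \in E_{s^\prime,I}(U)$ forces $r_\alpha = 0$ whenever $\vert \alpha \vert > m$.

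For this, I would compute the scaling of each elementary piece with respect to the Euler field $\rho = h^j\partial_{h^j}$ (this is an admissible choice by the $\rho$-independence of $E_{s,I}$ proved in Theorem \ref{locthmGOOD}). A direct calculation from the definition of $t_\lambda$ and the homogeneity of $\partial_h^\alpha \delta_I$ in the $h$-variable yields
\begin{equation*}
(r_\alpha(x)\,\partial_h^\alpha \delta_I)_\lambda = \lambda^{-d-\vert\alpha\vert} r_\alpha(x)\,\partial_h^\alpha \delta_I,
\end{equation*}
so that $\lambda^{-s^\prime} r_\lambda = \sum_\alpha \lambda^{-s^\prime - d - \vert \alpha\vert} r_\alpha(x)\,\partial_h^\alpha \delta_I(h)$.

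Next I would isolate the individual coefficients by testing against functions of the form $\varphi_{\beta,\phi}(x,h) = \phi(x)\, h^\beta \chi(h)$, where $\phi \in \mathcal{D}(\mathbb{R}^n)$ and $\chi \in \mathcal{D}(\mathbb{R}^d)$ with $\chi \equiv 1$ near $0$. Since $\partial_h^\alpha(h^\beta \chi)\vert_{h=0} = \alpha!\, \delta_{\alpha,\beta}$ (Kronecker), testing $\lambda^{-s^\prime} r_\lambda$ against $\varphi_{\beta,\phi}$ picks out exactly $\lambda^{-s^\prime - d - \vert\beta\vert} (-1)^{\vert\beta\vert} \beta!\, \langle r_\beta,\phi\rangle$. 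The assumption $r \in E_{s^\prime,I}(U)$ forces this expression to be bounded as $\lambda \to 0^+$ for every $\phi$, which entails either $r_\beta = 0$ or $-s^\prime - d - \vert \beta\vert \geqslant 0$. Translating through the two cases of the definition of $s^\prime$ in Proposition \ref{scal2} and Proposition \ref{scal3noninteger}, one sees that the surviving multi-indices are exactly those with $\vert\beta\vert \leqslant m$, where $m = \lfloor -s-d\rfloor$; in particular if $s+d>0$ then $m<0$ and no surviving terms exist, giving uniqueness.

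The converse direction is a routine check: any finite sum $\sum_{\vert\alpha\vert\leqslant m} t_\alpha(x)\,\partial_h^\alpha \delta_I(h)$ belongs to $E_{s^\prime,I}(U)$ by the same scaling computation (now running in the favorable direction), so adding it to $\overline{t}$ still produces an element of $E_{s^\prime,I}(U)$ extending $t$. The main technical point—and the only real obstacle—is the isolation step: one must make sure that the family $(\varphi_{\beta,\phi})_\beta$ really separates the contributions of the different $r_\alpha$ uniformly in the scaling parameter, which relies on the local finiteness in the Schwartz representation and the ability to choose $\chi$ with arbitrarily small support around $0$ so that only finitely many $\alpha$ contribute on any given compact set.
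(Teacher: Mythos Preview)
Your argument is correct and follows essentially the same route as the paper: represent the difference of two extensions via the Schwartz structure theorem, compute the exact scaling weight $\lambda^{-d-\vert\alpha\vert}$ of each elementary term $\partial_h^\alpha\delta_I$, and read off the constraint $\vert\alpha\vert\leqslant m$ from boundedness. The only difference is that you spell out the isolation of coefficients by testing against $\phi(x)h^\beta\chi(h)$, whereas the paper simply asserts that a sum $\sum_\alpha u_\alpha\partial_h^\alpha\delta_I$ lying in $E_{s,I}$ forces the high-order terms to vanish; your explicit separation argument is a welcome clarification of that implicit step.
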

Remark: when $-s-d$ is a nonnegative integer,
the counterterm is in
$E_{s,I}$ whereas the extension
is in $E_{s^\prime,I},\forall s^\prime<s$.

\begin{proof}
We scale an elementary distribution $\partial_h^\alpha\delta_I$:
$$\left\langle (\partial_h^\alpha\delta_I)_\lambda, \varphi\right\rangle=\lambda^{-d}\left\langle \partial_h^\alpha\delta_I, \varphi_{\lambda^{-1}}\right\rangle=(-1)^{\vert\alpha\vert}\lambda^{-d-\vert\alpha\vert} \left\langle \partial_h^\alpha\delta_I, \varphi\right\rangle $$
hence $\lambda^{-s}(\partial^\alpha\delta_I)_\lambda=\lambda^{-d-\vert\alpha\vert-s}\partial_h^\alpha\delta_I$ is bounded iff $d+s+\vert\alpha\vert\leqslant 0 \implies d+s\leqslant -\vert\alpha\vert $. 
When $s+d>0$, $\forall\alpha,\partial_h^\alpha\delta_I\notin E_{s,I}$ hence
any two extensions in $E_{s,I}(U)$ cannot differ
by a local counterterm of the form $\sum_\alpha t_\alpha \partial_h^\alpha\delta_I$.
When $-m-1<d+s\leqslant -m$ then $\lambda^{-s}(\partial_h^\alpha\delta_I)_\lambda$ is bounded iff
$s+d+\vert\alpha\vert\leqslant 0\Leftrightarrow -m\leqslant -\vert\alpha\vert\Leftrightarrow \vert\alpha\vert\leqslant m $. 
We deduce that $\partial_h^\alpha\delta_I \in E_{s,I}$ for all $\alpha\leqslant m$
which means that the scaling degree \textbf{bounds} the order $\vert\alpha\vert$ 
of the derivatives in the transverse directions. 
Assume there are two extensions in $E_{s,I}$, their difference is of the form $u=\sum_\alpha u_\alpha \partial_h^\alpha\delta_I$ by the structure theorem $(36)$ p.~101 in \cite{Schwartz-66} and is also in $E_{s,I}$
which means their difference equals $u=\sum_{\vert\alpha\vert\leqslant m} u_\alpha \partial_h^\alpha\delta_I$.
\end{proof}

\subsection{Counterterms on manifolds and conormal distributions.}

\paragraph{What happens in the case of manifolds ?}
From the point of view of 
L. Schwartz,
the only thing to keep in mind
is that a distribution
supported on a submanifold $I$ 
is always
well defined locally
and the representation 
of this distribution is 
unique once we fix
a system
of coordinate functions
$(h^j)_j$
which are transverse to $I$~\cite[Theorem 37 p.~102]{Schwartz-66}.
For any distribution $t_\alpha\in\mathcal{D}^\prime(I)$, if we denote by 
$i:I\hookrightarrow M$ the canonical embedding of $I$ in $M$ then 
$i_* t_\alpha$ is the push-forward of $t_\alpha$ in $M$:
$$\forall\varphi\in\mathcal{D}(M), \left\langle i_* t_\alpha, \varphi \right\rangle=\left\langle t_\alpha, \varphi\circ i \right\rangle.$$ 
The next lemma completes Theorem \ref{removsing}. 
Here the idea is that we add a constraint on the \textbf{local counterterm }$t$, namely that $WF(t)$ is contained in the conormal of $I$. Then we prove that the coefficients $t_\alpha$ appearing in the Schwartz representation
(\ref{Schwartzreps}) are in fact \textbf{smooth} functions.
\begin{lemm}\label{lemmbound}
Let $t\in\mathcal{D}^\prime(M)$ such that
$t$ is supported on $I$, then\\
1) $t$ has a unique 
decomposition as locally finite
linear combinations of transversal derivatives
of push-forward to $M$ of distributions $t_\alpha$ in $\mathcal{D}^\prime (I)$:
$t=\sum_{\alpha} \partial^\alpha_h\left(i_* t_\alpha\right)$,\\
and 2) $WF(t)$ is contained in the conormal of $I$ if and only if
$\forall \alpha$, $t_\alpha$ is smooth.
\end{lemm}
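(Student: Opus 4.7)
The plan is to handle the two assertions separately. Part~(1) is essentially Schwartz's structure theorem for distributions supported on a submanifold, already invoked in the proof of Theorem~\ref{removsing}: after fixing a tubular neighbourhood of $I$ and adapted coordinates $(x,h)$ with $I=\{h=0\}$, together with the canonical transverse derivations $(\partial_{h^j})$, any $t$ with $\mathrm{supp}(t)\subset I$ admits a unique locally finite expansion $t=\sum_\alpha\partial_h^\alpha(i_*t_\alpha)$ with $t_\alpha\in\mathcal{D}^\prime(I)$; uniqueness follows from the fact that $t_\alpha$ is recovered by pairing against test functions of the form $h^\alpha\varphi(x)\chi(h)$ with $\chi\equiv 1$ near $h=0$, and patching via a partition of unity subordinated to a cover by tubular charts gives the global statement on $M$.

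For the \emph{if} direction of Part~(2), I would compute the wave front set in local coordinates. If each $t_\alpha$ is smooth, then $i_*t_\alpha=t_\alpha(x)\otimes\delta(h)$ has empty wave front in the $x$ factor, so the standard H\"ormander bound on the wave front of a tensor product yields $WF(i_*t_\alpha)\subset\{(x,0;0,\eta)\,|\,x\in I,\eta\neq 0\}=N^*(I)$. Transverse differentiation cannot enlarge wave front sets, hence $WF(\partial^\alpha_h i_*t_\alpha)\subset N^*(I)$, and the locally finite sum inherits the same bound, giving $WF(t)\subset N^*(I)$.

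The harder \emph{only if} direction is where the real content lies. Working locally near $p\in I$, I would choose a product cutoff $\chi(x,h)=\chi_1(x)\chi_2(h)$ compactly supported in a chart, with $\chi_2\equiv 1$ in a neighbourhood of $h=0$; on such a support only finitely many $\alpha$, say $|\alpha|\leqslant N$, contribute. A direct computation, exploiting that all derivatives of $\chi_2$ vanish at the origin except $\chi_2(0)=1$, yields
\begin{equation*}
\widehat{\chi t}(\xi,\eta)=\sum_{|\alpha|\leqslant N}(i\eta)^\alpha\,\widehat{\chi_1 t_\alpha}(\xi),
\end{equation*}
a polynomial in $\eta$ whose coefficients are the Fourier transforms we wish to control. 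The hypothesis $WF(t)\subset N^*(I)$ translates into rapid decay of $\widehat{\chi t}(\xi,\eta)$ in any closed conic set avoiding the directions $\{\xi=0\}$; a compactness argument on the unit sphere of directions promotes this to a uniform estimate stating that for every $K\geqslant 0$ and every $M\in\mathbb{N}$ there exists $C_{K,M}$ with $|\widehat{\chi t}(\xi,\eta)|\leqslant C_{K,M}(1+|\xi|)^{-M}$ whenever $|\eta|\leqslant K$. Next, pick finitely many points $\{\eta_j\}$ in a bounded region, in number equal to the cardinality of $\{\alpha:|\alpha|\leqslant N\}$, so that the Vandermonde-type matrix $\bigl((i\eta_j)^\alpha\bigr)$ is invertible on polynomials of degree $\leqslant N$. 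Inverting this finite linear system expresses each $\widehat{\chi_1 t_\alpha}(\xi)$ as a fixed linear combination of the $\widehat{\chi t}(\xi,\eta_j)$, so it is rapidly decreasing in $\xi$. Since $\chi_1 t_\alpha$ is compactly supported, the Paley--Wiener--Schwartz theorem forces $\chi_1 t_\alpha\in C^\infty_c$, so $t_\alpha$ is smooth near $p$; varying $p$ over $I$ finishes the proof. The main technical obstacle is the coupling of two steps: upgrading the conic WF rapid decay to a uniform polynomial estimate in $\xi$ for $\eta$ ranging over bounded sets, and then feeding this into the Vandermonde inversion without losing the rate of decay. Both are routine, but must be executed jointly.
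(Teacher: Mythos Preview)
Your argument is correct. Part~(1) and the \emph{if} half of Part~(2) match the paper's proof essentially verbatim: Schwartz's structure theorem for the decomposition, and the conormal bound on $WF(i_*t_\alpha)$ for smooth $t_\alpha$ via the push-forward.

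For the \emph{only if} direction you take a genuinely different route. The paper argues by contradiction: if some $t_\alpha$ were not smooth, pick $(x_0;\xi_0)\in WF(t_\alpha)$ and a cutoff $\chi$ in the $x$-variables, observe that $\mathcal{F}(\chi t_\alpha\,\partial_h^\alpha\delta_I)(\xi,\eta)=\widehat{\chi t_\alpha}(\xi)(-i\eta)^\alpha$ fails to decay along rays $(\lambda\xi_0,\lambda\eta)$, and invoke H\"ormander's Lemma~8.2.1 to produce $(x_0,0;\xi_0,\eta_0)\in WF(t)$ with $\xi_0\neq 0$, contradicting $WF(t)\subset N^*(I)$. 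Your approach is instead direct: you write $\widehat{\chi t}(\xi,\eta)=\sum_{|\alpha|\leqslant N}(i\eta)^\alpha\,\widehat{\chi_1 t_\alpha}(\xi)$, use the conormal hypothesis to obtain uniform rapid decay in $\xi$ for $\eta$ in bounded sets, then invert a finite Vandermonde system at well-chosen nodes $\eta_1,\dots,\eta_J$ to isolate each $\widehat{\chi_1 t_\alpha}(\xi)$ and conclude smoothness via Paley--Wiener--Schwartz. Your interpolation argument is slightly longer but has the advantage of handling the full sum $\sum_\alpha$ explicitly, whereas the paper's contradiction argument, as written, treats a single term $t_\alpha\partial_h^\alpha\delta_I$ and leaves the passage from one non-decaying summand to non-decay of the whole sum implicit; your method sidesteps any worry about cancellation between terms. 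Conversely, the paper's route is shorter once one accepts that step, and avoids the auxiliary interpolation machinery.
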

\begin{proof}
If $(t_\alpha)_\alpha$ are smooth then the wave front set
of the push--forward $i_*t_\alpha$ is contained
in the normal of the embedding $i$ denoted by $N_i$
\cite[2.3.1]{Viet-wf2}
which is nothing but the
conormal bundle $N^*(I)$ \cite[Example 2.9]{Viet-wf2}.
To prove the converse, 
in local coordinates,
let $$t(x,h)=\sum_\alpha \partial_h^\alpha\left(t_\alpha(x) \delta_I(h)\right)=\sum_\alpha t_\alpha(x) \partial_h^\alpha\delta_I(h).$$
Assume $t_\alpha$ is not smooth 
then $WF(t_\alpha)$ would be \textbf{non empty}. 
Then $WF(t_\alpha)$ 
contains an element $(x_0;\xi_0)$. 
Pick $\chi\in \mathcal{D}(R^n)$ such that $\chi(x_0)\neq 0$ 
then
$$\mathcal{F}(t_\alpha\chi \partial_h^\alpha\delta_I)(\xi,\eta)=\widehat{t_\alpha\chi}(\xi)(-i\eta)^\alpha, $$
hence we find a codirection $(\lambda \xi_0,\lambda \eta),\xi_0\neq 0$ in which the product  $\widehat{t_\alpha\chi}\widehat{\partial_h^\alpha\delta_I}$ is not rapidly decreasing, hence there is a point $(x,0)$ such that $(x,0;\xi_0,\eta_0)\in WF(t)$~\cite[Lemma 8.2.1]{HormanderI} 
which is in contradiction with the fact that $WF(t)\subset N^*(I)=\{(x,0,0,\eta)|\eta\neq 0\}$.
\end{proof}

Combining with Theorem \ref{removsing}, 
we obtain:
\begin{coro}
Let $t\in\mathcal{D}^\prime(\mathbb{R}^{n+d})$ and $\text{supp }t\subset I$.
If $WF(t)\subset N^*(I)$ and $t\in E_{s,N^*(I)}(\mathbb{R}^{n+d}), -m-1<s+d\leqslant -m$, then $t(x,h)=\sum_\alpha t_\alpha(x) \partial_h^\alpha\delta_I(h)$, where $\forall\alpha$, $t_\alpha\in C^\infty\left(\mathbb{R}^n\right)$ and $\vert\alpha\vert\leqslant m$.
\end{coro}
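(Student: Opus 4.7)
The plan is to assemble three ingredients already available in this section: the Schwartz structure theorem for distributions supported on a submanifold, Lemma \ref{lemmbound}(2), and the scaling computation at the heart of Theorem \ref{removsing}.

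First I would apply the structure theorem \cite[Théorème 37, p.~102]{Schwartz-66} (also recalled just before Theorem \ref{removsing}). Since $\mathrm{supp}\, t\subset I$ and the transverse coordinates $(h^j)$ are fixed, this yields a unique locally finite decomposition
\[
t(x,h)=\sum_{\alpha} t_\alpha(x)\,\partial_h^\alpha\delta_I(h),\qquad t_\alpha\in\mathcal{D}^\prime(\mathbb{R}^n).
\]
The assumption $WF(t)\subset N^*(I)$ is then fed into Lemma \ref{lemmbound}(2), which gives at once that every coefficient $t_\alpha$ lies in $C^\infty(\mathbb{R}^n)$. This takes care of the smoothness half of the conclusion.

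For the degree bound $|\alpha|\leqslant m$, I would rerun the scaling analysis already carried out in the proof of Theorem \ref{removsing}. From $(\partial_h^\alpha \delta_I)_\lambda=\lambda^{-d-|\alpha|}\partial_h^\alpha\delta_I$ one deduces
\[
\lambda^{-s}\,t_\lambda=\sum_\alpha \lambda^{-s-d-|\alpha|}\,t_\alpha(x)\,\partial_h^\alpha\delta_I(h).
\]
The assumption $t\in E_{s,N^*(I)}(\mathbb{R}^{n+d})$ ensures that this family is bounded in $\mathcal{D}^\prime(\mathbb{R}^{n+d})$ uniformly in $\lambda\in(0,1]$ (boundedness in $\mathcal{D}^\prime_{N^*(I)}$ is finer than boundedness in $\mathcal{D}^\prime$). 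To decouple the different summands I would test against functions of the form $\varphi(x)\chi(h)h^\beta$, with $\chi\in\mathcal{D}(\mathbb{R}^d)$ identically $1$ near $0$; using that $\chi(\lambda^{-1}\cdot)$ is still identically $1$ near $0$ for $\lambda\in(0,1]$, a Leibniz computation collapses the sum and isolates $t_\beta$:
\[
\bigl\langle \lambda^{-s}t_\lambda,\ \varphi(x)\chi(h)h^\beta\bigr\rangle = (-1)^{|\beta|}\beta!\,\lambda^{-s-d-|\beta|}\,\langle t_\beta,\varphi\rangle.
\]
Uniform boundedness in $\lambda\in(0,1]$ therefore forces $-s-d-|\beta|\geqslant 0$ whenever $t_\beta\not\equiv 0$. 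Combined with $m\leqslant -s-d<m+1$ and the integrality of $|\beta|$, this yields $|\beta|\leqslant m$, as required.

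The only point requiring any care is the decoupling step: one must check that the Leibniz expansion of $\partial_h^\alpha\bigl(\chi(h)h^\beta\bigr)\big|_{h=0}$ retains only the $\alpha=\beta$ contribution (so that $t_\beta$ really is isolated), which is immediate because all higher derivatives of $\chi$ vanish at $0$. Everything else is a direct application of already-established results, so I expect no genuine obstacle beyond this bookkeeping.
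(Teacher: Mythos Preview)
Your proposal is correct and follows exactly the route the paper indicates: the corollary is stated immediately after the sentence ``Combining with Theorem \ref{removsing}, we obtain,'' so the intended proof is precisely Lemma \ref{lemmbound}(2) for the smoothness of the $t_\alpha$ together with the scaling computation of Theorem \ref{removsing} for the bound $|\alpha|\leqslant m$. Your explicit decoupling via test functions $\varphi(x)\chi(h)h^\beta$ is a clean way to make rigorous the step that the proof of Theorem \ref{removsing} leaves implicit (it jumps directly from ``$u\in E_{s,I}$'' to ``$|\alpha|\leqslant m$'' without isolating the individual coefficients), so if anything your write-up is slightly more careful than the paper's own argument.
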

\begin{coro}
Let $M$ be a smooth manifold
and $I$ a closed embedded submanifold.
For $-m-1< s+d\leqslant-m $, the space of distributions $t\in E_{s,N^*(I)}(M)$ 
such that $\text{supp }t\in I$ and $WF(t)$ is contained in the conormal of $I$ is a finitely generated module of \textbf{rank} $\frac{m+d!}{m!d!}$ over the ring $C^\infty(I)$. 
\end{coro}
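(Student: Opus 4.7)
The plan is to globalize the preceding corollary, which describes the space locally in $\mathbb{R}^{n+d}$, by a tubular neighborhood argument combined with the diffeomorphism invariance of $E_{s,N^*(I)}$ proved in Theorem \ref{Esmicrolocdiffinvariance}. First I would observe that since every such $t$ has $\operatorname{supp} t \subset I$, the problem only sees a neighborhood of $I$; fix a tubular neighborhood diffeomorphism identifying a neighborhood of $I \subset M$ with a neighborhood $V$ of the zero section in the normal bundle $\pi : N(I) \to I$. By paracompactness of $I$, cover $I$ by a locally finite family of charts $U_a \subset I$ over which $N(I)$ is trivializable, producing coordinates $(x,h) : \pi^{-1}(U_a) \cap V \to U_a \times \mathbb{R}^d$ with $I \cap \pi^{-1}(U_a) = \{h=0\}$. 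The flat corollary immediately above then gives, on each such chart,
\begin{equation*}
t|_{\pi^{-1}(U_a)} = \sum_{|\alpha|\leqslant m} t_\alpha^{(a)}(x)\, \partial_h^\alpha \delta_I(h), \qquad t_\alpha^{(a)} \in C^\infty(U_a).
\end{equation*}

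Next, I would verify that the $C^\infty(I)$-module structure is well defined: for $f \in C^\infty(I)$, multiplication $f\cdot t := \tilde f\, t$ for any smooth extension $\tilde f \in C^\infty(M)$ is independent of the extension because $\operatorname{supp} t \subset I$, and it preserves both the support condition and the bound $WF(t) \subset N^*(I)$ as well as membership in $E_{s,N^*(I)}(M)$ (since multiplication by a smooth function preserves the wave front set inclusion and the scaling estimates). To identify the module globally, I would check that under a fiberwise-linear change of trivialization of $N(I)$ over an overlap $U_a \cap U_b$, the transverse derivative $\partial_h^\alpha \delta_I$ transforms as a section of $\operatorname{Sym}^{|\alpha|} N(I)$ twisted by the transverse density bundle $|\Lambda^d| N^*(I)$, modulo derivatives of strictly lower transverse order. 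Combined with the local structure, this yields a canonical filtration by transverse order whose associated graded is
\begin{equation*}
\bigoplus_{k=0}^{m} \Gamma\!\left(I,\operatorname{Sym}^k N(I) \otimes |\Lambda^d|N^*(I)\right),
\end{equation*}
a locally free $C^\infty(I)$-module of rank $\sum_{k=0}^m \binom{k+d-1}{d-1} = \binom{m+d}{d} = \frac{(m+d)!}{m!\,d!}$. Since the filtration has only finitely many steps and each graded piece is a module of smooth sections of a finite-rank vector bundle over the paracompact base $I$, the full module is itself finitely generated of the same total rank by a Swan-type argument.

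Finite generation would follow concretely by picking a finite partition of unity $(\varphi_a)$ on $I$ (using paracompactness and, if $I$ is not compact, by grouping charts into finitely many disjoint unions on which the normal bundle trivializes globally in the usual way) and taking as generators the distributions $\varphi_a(x)\,\partial_h^\alpha \delta_I$ for $|\alpha|\leqslant m$, whose total count reduces to the rank after quotienting by the obvious $C^\infty(I)$-linear relations on overlaps encoded by the filtration above.

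The main obstacle I anticipate is the transition computation on overlaps: one must check that a fiberwise-linear change of frame of $N(I)$ indeed sends $\partial_h^\alpha \delta_I$ to a combination of $\partial_h^\beta \delta_I$ with $|\beta|=|\alpha|$ plus strictly lower-order transverse derivatives, so that the filtration by $|\alpha|$ is intrinsic and the associated graded is a genuine vector bundle. The wave front condition $WF(t)\subset N^*(I)$ is precisely what rules out distributional (non-smooth) coefficients $t_\alpha$ on the overlaps, via Lemma \ref{lemmbound}, and the scaling condition $t \in E_{s,N^*(I)}$ with $-m-1 < s+d \leqslant -m$ is exactly what truncates the filtration at transverse order $m$, as established in the proof of Theorem \ref{removsing}; these two ingredients together are what make the rank come out to the finite number $\binom{m+d}{d}$.
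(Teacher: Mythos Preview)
Your proposal is correct and in fact considerably more complete than the paper's own argument. The paper's proof consists of a single observation: in each local chart $(x,h)$ with $I=\{h=0\}$, the preceding corollary forces $t=\sum_{|\alpha|\leqslant m} t_\alpha(x)\,\partial_h^\alpha\delta_I(h)$ with $t_\alpha\in C^\infty$, and the number of multi-indices is $\binom{m+d}{d}$. It does not discuss the global $C^\infty(I)$-module structure, the behaviour under change of transverse frame, or why finite generation holds over a possibly non-compact $I$.

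Your route begins from the same local input but supplies exactly the globalization the paper leaves implicit: the tubular-neighborhood reduction via Theorem~\ref{Esmicrolocdiffinvariance}, the intrinsic filtration by transverse order with associated graded $\bigoplus_{k\leqslant m}\Gamma(I,\operatorname{Sym}^k N(I)\otimes |\Lambda^d|N^*(I))$, and the Swan-type argument for finite generation. The one place where you should be a bit more explicit is the claim that a paracompact $I$ admits a \emph{finite} trivializing cover of $N(I)$; this follows from the standard fact that any smooth $n$-manifold can be covered by $n{+}1$ open sets each diffeomorphic to a disjoint union of balls, but it is worth stating since without it ``finitely generated'' does not follow from local freeness alone. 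With that caveat, your argument is a genuine strengthening of the paper's sketch rather than a different approach.
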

\begin{proof}
In each local chart $(x,h)$ where $I=\{h=0\}$,
$t=\sum_\alpha t_\alpha(x) \partial_h^\alpha\delta_I(h)$ where the lenght $\vert\alpha\vert$ is bounded by $m$ by the above corollary and $\forall\alpha$, $t_\alpha\in C^\infty\left(I\right)$. This improves on the result given 
by the structure theorem of Laurent Schwartz 
since we now know that the $t_\alpha$ are smooth.
\end{proof}

\section{Appendix.}
\paragraph{Wave front set of the kernels of the operators $I_m,R_\alpha$.}
In this
part, we calculate
the wave front set of the kernels of the operators $I_m,R_\alpha$
introduced in the
proof of Theorem \ref{thm3}.
Recall $I=\mathbb{R}^n\times\{0\}$ 
is the vector subspace
$\{h=0\}$, we define 
the projection
$\pi:(x,h)\in\mathbb{R}^{n+d}
\longmapsto 
(x,0)\in\mathbb{R}^n\times\{0\}$,
the inclusion 
$i:\mathbb{R}^n\times \{0\} \hookrightarrow \mathbb{R}^{n+d}$, 
the operator $I_m$ of
projection
on the 
Taylor remainder of degree $m$:
\begin{eqnarray*}
I_m:=\varphi\in C^\infty(\mathbb{R}^{n+d})\longmapsto I_m\varphi= \varphi-P_m\varphi\in C^\infty(\mathbb{R}^{n+d})\\
P_m\varphi=\sum_{\vert\alpha\vert\leqslant m} \frac{h^\alpha}{\alpha!} \pi^*\left(i^*\partial_h^\alpha \varphi\right)\\
I_m\varphi=\frac{1}{m!}\sum_{\vert\alpha\vert=m+1}h^\alpha\int_0^1(1-t)^m \left(\partial_h^\alpha \varphi\right)_tdt.
\end{eqnarray*}
We also introduce the operators
$(R_\alpha)_{\{\vert\alpha\vert=m+1\}}$:
\begin{eqnarray*}
I_m
=\sum_{\vert\alpha\vert=m+1}h^\alpha R_\alpha.
\end{eqnarray*}

We next explain how to calculate the Schwartz kernels
of $I_m,R_\alpha$ which are distributions in 
$\mathcal{D}^\prime(\mathbb{R}^{n+d}\times\mathbb{R}^{n+d})$
and their wave front set.
We double
the space $\mathbb{R}^{n+d}$
and we work in $\mathbb{R}^{n+d}\times\mathbb{R}^{n+d}$
with coordinates $(x,h,x^\prime,h^\prime)$. We denote
by $\delta\in\mathcal{D}^\prime(\mathbb{R}^{n+d})$ the delta distribution
supported at $(0,0)\in\mathbb{R}^{n+d}$ 
and 
$\delta_\Delta(.,.)\in\mathcal{D}^\prime(\mathbb{R}^{n+d}\times \mathbb{R}^{n+d})$ 
the 
delta distribution supported
on the diagonal $\Delta$ 
in $\mathbb{R}^{n+d}\times \mathbb{R}^{n+d}$
where we have the relation
$\delta_\Delta((x,h),(x^\prime,h^\prime))=\delta(x-x^\prime,h-h^\prime)$. 
The Schwartz 
kernel of $I_m$
is the distribution
defined as:
\begin{eqnarray}
I_m(.,.)&=&\delta_\Delta(.,.) -\sum_{\vert\alpha\vert\leqslant m} \frac{h^\alpha}{\alpha!} \pi^*\left(i^*\partial_h^\alpha \delta_\Delta(.,.) \right)\\
&=&\frac{1}{m!}\sum_{\vert\alpha\vert=m+1}h^\alpha\int_0^1(1-t)^m \partial_h^\alpha  \delta_\Delta(\Phi(t,.),.)dt,
\end{eqnarray}
where $\Phi(t,x,h)=(x,th) .$
We also need to define
Schwartz kernels $R_\alpha$:
\begin{eqnarray*}
R_\alpha(.,.)&=&\frac{1}{m!}\int_0^1(1-t)^m \partial_h^\alpha  \delta(\Phi(t,.),.)dt\\
\text{ where }
I_m(.,.)
&=&\sum_{\vert\alpha\vert=m+1}h^\alpha R_\alpha(.,.).
\end{eqnarray*} 

\begin{lemm}\label{WFI_m}
Let $I_m(.,.)$ and $R_\alpha(.,.)$ be defined
as above then
\begin{eqnarray}
WF\left( R_\alpha(.,.)\right)\subset  \{(x,h,x,th;\xi,t\eta,-\xi,-\eta) |t\in[0,1], (\xi,\eta)\neq (0,0)\}.
\end{eqnarray}
and
\begin{equation}\label{wfim}
WF(I_m(.,.))\subset  \{(x,h,x,th;\xi,t\eta,-\xi,-\eta) |t\in[0,1], (\xi,\eta)\neq (0,0)\}.
\end{equation}
\end{lemm}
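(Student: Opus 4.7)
The plan is to derive $WF(R_\alpha)$ via the pullback--pushforward formalism of H\"ormander, and then obtain $WF(I_m)$ by smooth multiplication and summation. I would first introduce the smooth map
$F:\mathbb{R}\times \mathbb{R}^{n+d}\times \mathbb{R}^{n+d}\to \mathbb{R}^{n+d}\times \mathbb{R}^{n+d}$ defined by
$F(t,x,h,x',h')=((x,th),(x',h'))$,
so that $\delta_\Delta(\Phi(t,\cdot),\cdot)=F^{*}\delta_\Delta$. A direct computation of the transpose of $dF$ shows that it sends a covector $(\xi,\eta,-\xi,-\eta)\in WF(\delta_\Delta)$ sitting over a point on the diagonal to $(h\cdot\eta,\xi,t\eta,-\xi,-\eta)$, which vanishes only when $(\xi,\eta)=0$. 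Hence the pullback is well defined by the H\"ormander pullback theorem, and
$$WF(F^{*}\delta_\Delta)\subset\{(t,x,h,x,th;\,h\cdot\eta,\xi,t\eta,-\xi,-\eta):t\in\mathbb{R},\ (\xi,\eta)\neq 0\}.$$
Applying the differential operator $\partial_h^\alpha$ does not enlarge this set.

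Next I would realize $R_\alpha$ as the pushforward, by the projection $\pi:(t,y)\mapsto y$, of the product
$\frac{1}{m!}\mathbf{1}_{[0,1]}(t)(1-t)^m\cdot \partial_h^\alpha F^{*}\delta_\Delta$.
The distribution $\mathbf{1}_{[0,1]}(t)\otimes 1$ has wave front set contained in
$\{(t,y;\tau,0):t\in\{0,1\},\ \tau\neq 0\}$,
which is disjoint from the negative of $WF(\partial_h^\alpha F^{*}\delta_\Delta)$, so the H\"ormander product is well defined by hypocontinuity and its wave front set is contained in the union of the two wave front sets together with their sum.

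Finally I would apply the pushforward bound, retaining only those covectors of the product whose $t$-component vanishes. Contributions from $\Lambda_F:=WF(F^{*}\delta_\Delta)$ alone force $h\cdot\eta=0$, while contributions coming from $\Lambda_F$ summed with the indicator's wave front set are localized at $t\in\{0,1\}$ and allow arbitrary $h\cdot\eta$. In every case the resulting covectors on $\mathbb{R}^{n+d}\times\mathbb{R}^{n+d}$ have the form $(x,h,x,th;\,\xi,t\eta,-\xi,-\eta)$ with $(\xi,\eta)\neq 0$ and $t\in[0,1]$, which is the claim for $R_\alpha$. The bound for $I_m=\sum_{|\alpha|=m+1}h^\alpha R_\alpha$ follows at once, since multiplication by the smooth functions $h^\alpha$ and finite summation preserve the wave front set.

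The main obstacle is the careful bookkeeping of the indicator $\mathbf{1}_{[0,1]}(t)$: its wave front set at the endpoints produces extra terms in the H\"ormander product that must be tracked and absorbed into the claimed set. A cleaner alternative would be to represent $R_\alpha$ directly as the oscillatory integral
$$R_\alpha(y)=\frac{1}{(2\pi)^{n+d}m!}\int_0^1\!\int (1-t)^m(it\eta)^\alpha e^{i[(x-x')\cdot\xi+(th-h')\cdot\eta]}\,d\xi\,d\eta\,dt,$$
and read off the wave front set from the stationary set $\{x=x',\ h'=th\}$ of the phase, thereby avoiding the indicator altogether.
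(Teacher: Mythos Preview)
Your proposal is correct and follows essentially the same route as the paper: represent $R_\alpha$ as the pushforward by $(t,y)\mapsto y$ of the product of the indicator $1_{[0,1]}(t)(1-t)^m$ with the pullback $F^*\partial_h^\alpha\delta_\Delta$, estimate the wave front of each factor, and then read off $WF(R_\alpha)$ from the pushforward bound. Your treatment is in fact slightly more careful than the paper's, which absorbs the indicator's contribution directly into a rough upper bound $\Xi$ with arbitrary $\tau$-component rather than separating out the endpoint covectors; since those covectors have $\tau\neq 0$ and vanishing spatial part, they are killed by the pushforward either way, so both arguments arrive at the same conclusion.
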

\begin{proof}
Let us calculate
$WFI_m(.,.)$, the idea is 
to work in ``extended phase space'' $[0,1]\times \mathbb{R}^{n+d}\times \mathbb{R}^{n+d}$
with coordinates
$(t,x,h,x^\prime,h^\prime)$.
Consider the map 
\begin{eqnarray*}
\Phi:=(t,x,h,x^\prime,h^\prime)\in[0,1]\times \mathbb{R}^{n+d}\times \mathbb{R}^{n+d}\longmapsto (x,th,x^\prime,h^\prime)\in \mathbb{R}^{n+d}\times \mathbb{R}^{n+d},
\end{eqnarray*}
then $(\Phi^* \delta)(t,x,h,x^\prime,h^\prime)=\delta((x,th),(x^\prime,h^\prime))$ 
and application of the pull--back theorem
\cite[Proposition 6.1]{Viet-wf2} implies that
\begin{eqnarray}\label{pullbackdelta}
WF(\Phi^* \partial_h^\alpha\delta(.,.)))\subset
\{(t,x,h,x^\prime,h^\prime;\tau,\xi,t\eta,-\xi,-\eta) | (x,th)=(x^\prime,h^\prime) 
\text{ and } \tau=\left\langle h.\eta\right\rangle, (\xi,\eta)\neq (0,0) \}.
\end{eqnarray} 
We also note that $m!R_\alpha$
is just the integral of 
$f=1_{[0,1]}(1-t)^m\Phi^* \partial_h^\alpha\delta(.,.))dt$
over $[0,1]$, in other words, 
it is the push--forward of
$f$ 
by the projection $\mathbf{p}:\mathbb{R}\times \mathbb{R}^{2(n+d)}\mapsto \mathbb{R}^{2(n+d)}$. From the bound (\ref{pullbackdelta}) on 
$WF(\Phi^* \partial_h^\alpha\delta(.,.)))$ and the
behaviour of wave front sets under product, we find the 
rough upper bound:
\begin{eqnarray*}
WF(f)
\subset \Xi=\{(t,x,h,x^\prime,h^\prime;\tau,\xi,t\eta,-\xi,-\eta) | (x,th)=(x^\prime,h^\prime) 
,(\xi,\eta)\neq (0,0)\}.
\end{eqnarray*}
Finally, from the relation
$R_\alpha=\frac{\mathbf{p}_*f}{m!}, f=1_{[0,1]}(1-t)^m\Phi^* \partial_h^\alpha\delta(.,.))dt$
we find that
\begin{eqnarray*}
WF R_\alpha(.,.)&\subset& \mathbf{p}_*WF(f)\subset \mathbf{p}_*\Xi \\
\implies WF R_\alpha(.,.)&\subset & \{(x,h,x,th;\xi,t\eta,-\xi,-\eta) |t\in[0,1], (\xi,\eta)\neq (0,0)\}.
\end{eqnarray*}
\end{proof}

We also need the wave front set
of the Schwartz kernel of the operator
$\varphi\longmapsto P_m\varphi$
which projects $\varphi$ on its 
``Taylor polynomial'':
\begin{eqnarray}
\forall \vert\alpha\vert\leqslant m, 
WF\left(\pi^*(i^*\partial_h^\alpha\delta_\Delta(.,.))\right)\subset  \{(x,h,x,0;\xi,0,-\xi,-\eta) | (\xi,\eta)\neq (0,0)\}.
\end{eqnarray}
Note the important fact 
that $WF\left(\pi^*(i^*\partial_h^\alpha\delta_\Delta(.,.))\right)\subset WF(I_m(.,.))$.
\begin{lemm}\label{wfPm}
Let $\delta_\Delta((x,h),(x^\prime,h^\prime))$ be the delta function
of the diagonal $\Delta\subset \mathbb{R}^{n+d}\times\mathbb{R}^{n+d}$,
$i:x\longmapsto (x,0)$ 
the inclusion of $\mathbb{R}^n$ in $\mathbb{R}^{n+d}$ and $\pi$
the projection $(x,h)\in\mathbb{R}^{n+d}\longmapsto x\in\mathbb{R}^n$. 
The Schwartz kernel
of the linear map $\varphi\longmapsto \pi^*(i^*\partial_h^\alpha\varphi)$
is $\pi^*(i^*\partial_h^\alpha\delta_\Delta)$,
\begin{equation}
WF(\pi^*(i^*\partial_h^\alpha\delta_\Delta))\subset\{(x,h,x,0;\xi,0,-\xi,-\eta) 
|  (\xi,\eta)\neq (0,0)\}.
\end{equation}
\end{lemm}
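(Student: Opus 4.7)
The plan is to realize the kernel as a single smooth pullback of $\partial_h^\alpha\delta_\Delta$ and then apply H\"ormander's pullback theorem~\cite[Proposition 6.1]{Viet-wf2}. Since $\pi^{*}\circ i^{*}=(i\circ\pi)^{*}$, the Schwartz kernel equals $\phi^{*}\partial_h^\alpha\delta_\Delta$, where
\[
\phi=(i\circ\pi)\times\mathrm{id}:(x,h,x',h')\longmapsto (x,0,x',h').
\]
The wave front set of $\delta_\Delta$ is the conormal bundle to the diagonal, $N^{*}(\Delta)\setminus\underline{0}=\{(x,h,x,h;\xi,\eta,-\xi,-\eta):(\xi,\eta)\neq 0\}$, and since differentiation does not enlarge the wave front, $WF(\partial_h^\alpha\delta_\Delta)$ is contained in this same set.

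Next I would verify the transversality condition needed to pull back by $\phi$. A direct computation gives $d\phi^{T}(\xi_1,\eta_1,\xi_2,\eta_2)=(\xi_1,0,\xi_2,\eta_2)$, so $\ker d\phi^{T}=\{(0,\eta_1,0,0):\eta_1\in\mathbb{R}^d\}$. A point $y$ with $\phi(y)\in\Delta$ must have the form $y=(x,h,x,0)$, and a covector $(\xi,\eta,-\xi,-\eta)\in WF(\partial_h^\alpha\delta_\Delta)$ lies in $\ker d\phi^{T}$ only if $\xi=0$ and $\eta=0$, which contradicts the nonvanishing condition. Therefore the normal set $N_\phi$ does not intersect $WF(\partial_h^\alpha\delta_\Delta)$, and the pullback is well defined.

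Applying the pullback theorem then yields
\[
WF(\phi^{*}\partial_h^\alpha\delta_\Delta)\subset\{(y;\,d\phi(y)^{T}v):\phi(y)\in\Delta,\ v\in WF(\partial_h^\alpha\delta_\Delta)\}.
\]
Substituting $y=(x,h,x,0)$ and $v=(\xi,\eta,-\xi,-\eta)$ produces $d\phi^{T}v=(\xi,0,-\xi,-\eta)$, which is precisely the cone claimed in the statement. The only genuine obstacle is the transversality verification, and it reduces to the elementary remark that a nonzero purely vertical covector $(0,\eta_1,0,0)$ cannot belong to the conormal of the diagonal, whose blocks are coupled in pairs.
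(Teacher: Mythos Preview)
Your proof is correct and follows essentially the same approach as the paper: both arguments reduce the wave front computation to H\"ormander's pullback theorem applied to $\partial_h^\alpha\delta_\Delta$, with the only cosmetic difference that the paper performs the restriction $i^{*}$ and the pullback $\pi^{*}$ in two separate steps, whereas you combine them into a single pullback by $\phi=(i\circ\pi)\times\mathrm{id}$. Your explicit transversality check is a nice addition that the paper's terse proof leaves implicit.
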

\begin{proof}
First, we have:
$WF (i^*\partial_h^\alpha\delta_\Delta)\subset
 \{(x,x,0;\xi,-\xi,-\eta),\,\ (\xi,\eta)\neq (0,0)\}$, then
$$WF(\pi^*(i^*\partial_h^\alpha\delta_\Delta))
\subset
\{(x,h,x^\prime,h^\prime ; \xi,0,\xi^\prime,\eta^\prime)| 
(x,x^\prime,h^\prime;\xi,\xi^\prime,\eta^\prime)\in 
WF (i^*\partial_h^\alpha\delta_\Delta) \} $$
$$=\{(x,h,x,0; \xi,0,-\xi,-\eta)| (\xi,\eta)\neq (0,0)\}. $$

\end{proof}

\paragraph{Technical Lemma.}

In this part, we prove the main technical Lemma
which is essential in the proof of the main Theorems
of section \ref{flatspacesection} and we follow
its terminology and notations.
\begin{lemm}\label{propfamily2}
Let $U\subset\mathbb{R}^{n+d}$ be a convex set, for $\varepsilon\geqslant 0$, 
$1_{[\varepsilon,1]}$ is the indicator
function of $[\varepsilon,1]$. Set
\begin{eqnarray}
V=\{
\left(\begin{array}{ccc}
\lambda & ;&\widehat{\lambda} \\
x &;& \widehat{\xi}\\
h &;&\widehat{\eta}  
\end{array}\right)
| \left(\begin{array}{ccc}
x &;& \widehat{\xi}\\
h &;& \widehat{\eta}  
\end{array}\right)\in  \Gamma\cup \underline{0}, (x,h)\in\text{supp }\psi\}.
\end{eqnarray}
Let $B$ be some
bounded subset in $E_{s}(\mathcal{D}^\prime_\Gamma(U))$.
For all function $f\in L^{1}([0,1])\cap C^\infty(0,1)$, for all $t\in B$, the family $\left(f1_{[\varepsilon,1]}\lambda^{-s}\Phi^*t\right)_{\varepsilon\in[0,1],t\in B}$ 
is bounded in $\mathcal{D}^\prime_V(\mathbb{R}\times U)$.
\end{lemm}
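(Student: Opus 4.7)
The plan is to establish both defining requirements for boundedness of a family in $\mathcal{D}^\prime_V(\mathbb{R}\times U)$: weak-$*$ boundedness in $\mathcal{D}^\prime(\mathbb{R}\times U)$ together with uniform control of the wave front sets inside $V$. By the duality characterization of bounded sets in $\mathcal{D}^\prime_V$ recalled just before the definition of $E_s(\mathcal{D}^\prime_\Gamma(U))$ (from \cite{dabrowski2013functional}), both properties reduce to uniform bounds on the pairings of the family against test distributions $v\in\mathcal{E}^\prime_\Lambda(\mathbb{R}\times U)$ with $\Lambda\cap -V=\emptyset$.

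First I would analyse $\lambda^{-s}\Phi^*t$ on the open set $\{\lambda>0\}\times(U\setminus I)$, where $\Phi:(\lambda,x,h)\mapsto(x,\lambda h)$ is a submersion (in fact a diffeomorphism in $(x,h)$ for each fixed $\lambda>0$), so H\"ormander's pull-back theorem applies and yields
\begin{equation*}
WF(\Phi^*t)\subset\bigl\{(\lambda,x,h;\langle h,\eta\rangle,\xi,\lambda\eta)\ :\ (x,\lambda h;\xi,\eta)\in WF(t)\bigr\}.
\end{equation*}
Setting $\eta'=\lambda\eta$ and invoking the scaling-invariance of $\Gamma$ (i.e.\ $(\Phi_\lambda^*\Gamma\cap T^\bullet U)\subset\Gamma$ for $\lambda\in(0,1]$), the slot $(x,h;\xi,\eta')$ lies in $\Gamma$ whenever $(x,\lambda h;\xi,\eta)$ does, while the $\lambda$-dual coordinate $\widehat\lambda=\langle h,\eta\rangle$ stays unconstrained. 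This matches precisely the cylinder-over-$\Gamma\cup\underline 0$ shape of $V$ in the $\widehat\lambda$-direction.

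Next, for uniform boundedness in $(\varepsilon,t)$, I would fix $v\in\mathcal{E}^\prime_\Lambda(\mathbb{R}\times U)$ with $\Lambda\cap -V=\emptyset$ and rewrite the pairing by a Fubini-type slicing in $\lambda$:
\begin{equation*}
\bigl\langle f\,1_{[\varepsilon,1]}\,\lambda^{-s}\Phi^*t\,,\,v\bigr\rangle=\int_\varepsilon^1 f(\lambda)\bigl\langle \lambda^{-s}t_\lambda\,,\,v(\lambda,\cdot)\bigr\rangle\,d\lambda.
\end{equation*}
Because $V$ is a cylinder in $\widehat\lambda$, the hypothesis $\Lambda\cap -V=\emptyset$ forces the slices $v(\lambda,\cdot)$ to form an equicontinuous family in some $\mathcal{E}^\prime_{\Lambda'}(U)$ with $\Lambda'\cap -\Gamma=\emptyset$. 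The boundedness of $B$ in $E_s(\mathcal{D}^\prime_\Gamma(U))$ then bounds $|\langle\lambda^{-s}t_\lambda,v(\lambda,\cdot)\rangle|$ uniformly in $\lambda\in(0,1]$ and $t\in B$, and multiplication by $|f|\in L^1([0,1])$ followed by integration yields a bound uniform in $\varepsilon\in[0,1]$ and $t\in B$.

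The main obstacle is the behaviour at $\lambda=0$, where $\Phi$ collapses the $h$ variable and H\"ormander's pull-back theorem fails. The $\psi$-localization encoded in the definition of $V$ rescues the argument: it confines $(x,h)$ away from the locus $I=\{h=0\}$ where $\Phi$ degenerates, and when combined with $f\in L^1([0,1])$ it makes the $\lambda$-integral convergent uniformly as $\varepsilon\downarrow 0$. Once the $\lambda$-slicing transversality is carefully verified (this is the delicate geometric step, using that the only new direction created by the extended space is the cylindrical $\widehat\lambda$ one), assembling the wave front computation with the $L^1$ estimate gives the required boundedness of the family in $\mathcal{D}^\prime_V(\mathbb{R}\times U)$.
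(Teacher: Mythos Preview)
Your overall strategy---weak boundedness plus wave front control via duality---is reasonable, and the pull-back computation for $WF(\Phi^*t)$ on $\{\lambda>0\}$ is correct. The real gap is the ``Fubini-type slicing'' step. You write
\[
\bigl\langle f\,1_{[\varepsilon,1]}\,\lambda^{-s}\Phi^*t,\,v\bigr\rangle=\int_\varepsilon^1 f(\lambda)\bigl\langle \lambda^{-s}t_\lambda,\,v(\lambda,\cdot)\bigr\rangle\,d\lambda
\]
for an arbitrary $v\in\mathcal{E}^\prime_\Lambda(\mathbb{R}\times U)$, but restricting a distribution $v$ to the hypersurface $\{\lambda=\lambda_0\}$ requires $WF(v)$ to avoid its conormal $\{(\lambda_0,x,h;\tau,0,0):\tau\neq 0\}$. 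The hypothesis $\Lambda\cap -V=\emptyset$ only ensures this over $\text{supp }\psi$ (because $V$ is a $\widehat\lambda$-cylinder only over $\text{supp }\psi$), and even there the further assertion that the slices $v(\lambda,\cdot)$ form an \emph{equicontinuous} family in some $\mathcal{E}^\prime_{\Lambda'}(U)$ with $\Lambda'\cap-\Gamma=\emptyset$ is not a standard fact and would itself require a nontrivial argument.

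The paper avoids this entirely by working with the H\"ormander seminorms $\Vert\,\cdot\,\Vert_{N,W,\chi}$ rather than with the abstract dual pairing. The key simplification is to choose the localizing test function of product form $\chi=\varphi_1(\lambda)\varphi_2(x,h)$: then the Fourier transform in $(\tau,\xi,\eta)$ factors, and one obtains directly
\[
\Vert f1_{[\varepsilon,1]}\lambda^{-s}\Phi^*t\Vert_{N,W',\chi}\leqslant C\,\Vert\varphi_1\Vert_{L^\infty}\,\Vert f\Vert_{L^1[0,1]}\,\sup_{\lambda\in(0,1]}\Vert\lambda^{-s}t_\lambda\Vert_{N,W,\varphi_2},
\]
the last factor being a continuous seminorm of $E_s(\mathcal{D}^\prime_\Gamma(U))$. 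Here only \emph{test functions} are sliced, which is trivial; no restriction of distributions is needed. Since product-form test functions suffice to detect the wave front set, this gives the required boundedness in $\mathcal{D}^\prime_V$ without the difficulties in your approach.
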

\begin{proof}
We first prove that
$\left(f1_{[\varepsilon,1]}\lambda^{-s}\Phi^*t\right)_{\varepsilon\in[0,1]}$  
is weakly bounded in $\mathcal{D}^\prime(\mathbb{R}\times U)$.
$\lambda^{-s}t_\lambda$ is bounded in $\mathcal{D}^\prime(U)$
therefore by the uniform boundedness principle in Fr\'echet space~\cite{Meise},
\begin{eqnarray*}
\forall K\subset U \text{ compact },\exists m\in\mathbb{N},\exists C>0, \forall \varphi\in\mathcal{D}_K(U),  
\sup_{\lambda\in[0,1]} 
\vert\left\langle \lambda^{-s}t_\lambda,\varphi\right\rangle \vert\leqslant C \pi_{m,K}(\varphi).
\end{eqnarray*}
If $t$ is in a bounded subset $B$ of $E_s(\mathcal{D}^\prime_\Gamma(U))$, then
one can choose the constant $C$ independent of $t\in B$.
It follows easily that for all subset of the form $\left(\mathbb{R}\times K\right)\subset 
\left(\mathbb{R}\times U\right)$:
\begin{eqnarray*}
\exists m\in\mathbb{N}, C \geqslant 0, \text{ such that } 
\forall \varphi\in\mathcal{D}_{\mathbb{R}\times K}(\mathbb{R}\times U), \forall\varepsilon\geqslant 0,
&&\\
\vert\int_{[\varepsilon,1]\times\mathbb{R}^{n+d}}f(\lambda)\lambda^{-s}t(x,\lambda h)\varphi(\lambda,x,h)d\lambda dxdh\vert 
&\leqslant &\Vert f\Vert_{L^1([0,1])}\sup_{\lambda\in[0,1]} 
\vert\left\langle \lambda^{-s}t_\lambda,\varphi(\lambda,.)\right\rangle \vert \\
&\leqslant & C \Vert f\Vert_{L^1([0,1])}\sup_{\lambda\in[0,1]}\pi_{m,K}(\varphi(\lambda,.))  \\
&\leqslant & C \Vert f\Vert_{L^1([0,1])}\pi_{m,[0,1]\times K}(\varphi).
\end{eqnarray*}
 

For all $(\lambda,x,h;\tau,\xi,\eta)\notin V$,
there is a conic set $W\subset \mathbb{R}^{n+d}\setminus \{0\}$, 
a test function
$\varphi_2\in \mathcal{D}(U)$ such that
$(x,h;\xi,\eta)\in\text{supp }\varphi_2\times W$ 
and $(\text{supp }\varphi_2\times W)\cap \Gamma=\emptyset$.
Let $\varphi(\lambda,x,h)=\varphi_1(\lambda)\varphi_2(x,h)$
for some $\varphi_1,\varphi_1(\lambda)\neq 0$ in
$\mathcal{D}(\mathbb{R})$
and we define a conic neighborhood $W^\prime$ 
of $(\tau_0,\xi_0,\eta_0)$ 
as follows
$W^\prime=\{(\tau,\xi,\eta) | 
\vert \tau\vert\leqslant 
2\frac{\vert\tau_0\vert}{\vert\xi_0\vert+\vert\eta_0\vert}
(\vert\xi\vert+\vert\eta\vert), (\xi,\eta)\in W \}$.
We find that $\forall (\tau,\xi,\eta)\in W^\prime$:
\begin{eqnarray*}
\vert\int_\varepsilon^1 d\lambda f(\lambda)\left\langle \lambda^{-s}t_\lambda,\varphi_2 e^{i(x.\xi+h.\eta)}\right\rangle \varphi_1(\lambda) e^{i\lambda.\tau} \vert
 &=&\vert\int_\varepsilon^1 d\lambda\widehat{\lambda^{-s}t_\lambda\varphi_2}(\xi,\eta)f(\lambda) \varphi_1(\lambda) e^{i\lambda.\tau} \vert\\
&\leqslant& \Vert \varphi_1\Vert_{L^\infty(\mathbb{R})}\Vert f\Vert_{L^1[0,1]}
\Vert \lambda^{-s}t_\lambda \Vert_{N,W,\varphi_2}(1+ \vert\xi\vert+\vert\eta\vert)^{-N}\\
&\leqslant&
 C \Vert \varphi_1\Vert_{L^\infty(\mathbb{R})}\Vert f\Vert_{L^1[0,1]} 
 \Vert \lambda^{-s}t_\lambda \Vert_{N,W,\varphi_2}(1+\vert\tau\vert+\vert\xi\vert+\vert\eta\vert)^{-N}
\end{eqnarray*}
where $C=(1+
2\frac{\vert\tau_0\vert}{\vert\xi_0\vert+\vert\eta_0\vert})^{N}$. 
Therefore, $\forall (\lambda,x,h;\tau,\xi,\eta)\notin \Lambda,\exists \chi\in\mathcal{D}(\mathbb{R}\times U)$
and a closed conic set $W^\prime$ such that
$\chi(\lambda,x,h)\neq 0$, $\left(\text{supp }\chi\times W^\prime\right)\cap \Lambda=\emptyset$ and
the following estimate is satisfied:
\begin{equation}
\forall N,\exists C, \Vert f\lambda^{-s}\Phi^* t\Vert_{N,W^\prime,\chi}\leqslant
 C
\sup_{\lambda\in[0,1]} \Vert \lambda^{-s}t_\lambda \Vert_{N,W,\varphi}
\end{equation}
for some continuous seminorm $\sup_{\lambda\in[0,1]} \Vert \lambda^{-s}t_\lambda \Vert_{N,W,\varphi}$ of 
$E_s(\mathcal{D}^\prime_\Gamma(U))$ 
and where the constant
$C$ does not depend on $t$. 

 It follows easily from the above 
that the family $\left( f1_{[\varepsilon,1]}\lambda^{-s}\Phi^*t \right)_{\varepsilon\in (0,1]}$
is bounded in $\mathcal{D}^\prime_V(\mathbb{R}\times U)$.
\end{proof}

\subsubsection{The symplectic geometry of the vector fields tangent to $I$ and of the diffeomorphisms leaving $I$ invariant.}
We will work at the infinitesimal level 
within the class $\mathfrak{g}$ 
of vector fields tangent to $I$ 
defined by H\"ormander~\cite[Lemma (18.2.5)]{HormanderIII}. First recall 
their definition
in coordinates $(x,h)$ where $I=\{h=0\}$:  
the vector fields $X$ tangent to $I$ are of the form
$$ h^ja_j^i(x,h)\partial_{h^i} + b^i(x,h)\partial_{x^i} $$
and they form an infinite dimensional Lie algebra denoted by $\mathfrak{g}$ which is a Lie subalgebra of $Vect(M)$. 
Actually, these vector fields form a module over the ring $C^\infty(M)$ finitely generated by the vector fields $h^i\partial_{h^j},\partial_{x^i}$. 
This module is naturally filtered by the vanishing order of the vector field on $I$.
\begin{defi}
Let $\mathcal{I}$ be the ideal of functions vanishing on $I$. For $k\in\mathbb{N}$,
let $F_k$ be the submodule of vector fields tangent to $I$ defined as follows, $X\in F_k$ if 
$X\mathcal{I}\subset\mathcal{I}^{k+1}$.
\end{defi}
This definition of the filtration is completely coordinate invariant. We also immediately have $F_{k+1}\subset F_k$. Note that $F_0=\mathfrak{g}$.
\paragraph{Cotangent lift of vector fields.} 
 We recall the following fact, 
any vector field $X\in Vect(M)$
lifts functorially to a
\emph{Hamiltonian vector field} $X^*\in Vect(T^* M)$ by the following procedure  
which is beautifully described in \cite[p.~34]{chriss2010representation}: 
$$X=a^i\frac{\partial}{\partial x^i}+b^j\frac{\partial}{\partial h^j}\in Vect(M) \overset{\sigma}{\mapsto} \sigma(X)=a^i\xi_i+b^j\eta_j\in C^\infty(T^* M)$$ 
$$ \mapsto X^*=\left\lbrace \sigma(X),.  \right\rbrace= a^i\frac{\partial}{\partial x^i}+
b^j\frac{\partial}{\partial h^j}-\frac{\partial (a^i\xi_i+b^j\eta_j)}{\partial x^i}\frac{\partial}{\partial \xi_i}-
\frac{\partial (a^i\xi_i+b^j\eta_j)}{\partial h^j}\frac{\partial}{\partial \eta_j},$$
where $\left\lbrace .,.  \right\rbrace $ is the Poisson bracket of $T^* M$. 

\begin{lemm}\label{lemmlift}
Let $X$ be a vector field in $\mathfrak{g}$.
If $X\in F_1$, then $X^*$ vanishes on the conormal $N^*(I)$ of $I$ and $N^*(I)$ is contained in the set of fixed points of the symplectomorphism $e^{X^*}$.
\end{lemm}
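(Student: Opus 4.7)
The plan is to work in local adapted coordinates $(x^i,h^j)$ with $I=\{h=0\}$, translate the algebraic condition $X\in F_1$ into an explicit coordinate form, and then show that the resulting Hamiltonian $\sigma(X)$ has a double zero along $N^*(I)$.

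First I would unpack what $X\in F_1$ means in coordinates. Writing $X=a^i(x,h)\partial_{x^i}+b^j(x,h)\partial_{h^j}$, applying the condition $X\mathcal{I}\subset\mathcal{I}^2$ to the generators $h^l\in\mathcal{I}$ forces $b^l=X(h^l)\in\mathcal{I}^2$, so $b^j(x,h)=h^k h^l \beta^j_{kl}(x,h)$ for smooth coefficients $\beta^j_{kl}$. Applied to $x^p h^l\in\mathcal{I}$ the same condition gives $a^p h^l+x^p b^l\in\mathcal{I}^2$; since the second summand already lies in $\mathcal{I}^2$, we obtain $a^p h^l\in\mathcal{I}^2$, hence $a^i(x,h)=h^k\alpha^i_k(x,h)$ for smooth $\alpha^i_k$. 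Thus both $a^i$ and $b^j$ vanish on $I$, with $b^j$ vanishing to second order.

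Next I would carry this over to the symplectic side. The symbol then has the form $\sigma(X)=h^k\alpha^i_k(x,h)\xi_i+h^k h^l\beta^j_{kl}(x,h)\eta_j$. Locally the vanishing ideal $\mathcal{J}$ of $N^*(I)=\{h=0,\xi=0\}$ in $T^*M$ is generated by the functions $h^j$ and $\xi_i$, and each monomial above is a product of two such generators, so $\sigma(X)\in\mathcal{J}^2$. By the Leibniz rule the differential $d\sigma(X)$ vanishes identically on the zero locus $N^*(I)$ of $\mathcal{J}$, and because the Hamiltonian vector field $X^*=\{\sigma(X),\cdot\}$ satisfies $X^*(p)=0$ iff $d\sigma(X)(p)=0$, we conclude $X^*|_{N^*(I)}=0$.

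Finally, once $X^*$ vanishes on $N^*(I)$, every point of $N^*(I)$ is a stationary point of the flow $(e^{tX^*})_{t\in\mathbb{R}}$, so $e^{X^*}$ fixes $N^*(I)$ pointwise. The only step requiring a little care is the first: one has to check that evaluating $X$ on the generators $h^l$ of $\mathcal{I}$ together with the multiples $x^p h^l$ already exhausts the intrinsic condition $X\mathcal{I}\subset\mathcal{I}^2$. This is routine, since $\mathcal{I}$ is generated as an ideal by the $h^l$, and multiplication by $x^p$ is what separates the contributions of $a^i$ from those of $b^j$ inside $X(x^p h^l)$.
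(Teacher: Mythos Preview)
Your proof is correct and follows essentially the same route as the paper. The paper simply asserts the coordinate form $\sigma(X)=h^jh^ia^l_{ji}\eta_l+h^ib_i^l\xi_l$ via Hadamard's lemma and then checks term by term that the Hamiltonian vector field vanishes when $h=0,\ \xi=0$; you supply the additional step of deriving that form from the condition $X\mathcal{I}\subset\mathcal{I}^2$ by evaluating $X$ on $h^l$ and $x^ph^l$, and you replace the termwise check by the cleaner observation that $\sigma(X)\in\mathcal{J}^2$ forces $d\sigma(X)|_{N^*(I)}=0$. These are cosmetic differences only.
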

\begin{proof} 
If $X\in F_1$, then $\sigma(X)= h^jh^ia^l_{ji}(x,h)\eta_l + h^ib_{i}^l(x,h)\xi_l $
where $a^l_{ji},b_{i}^l$ are smooth functions on $T^*M$ 
by the Hadamard lemma. 
The symplectic gradient $X^*$ is given by the formula
$$X^*= \frac{\partial \sigma(X)}{\partial \xi_i}\partial_{x^i} - \frac{\partial \sigma(X)}{\partial x^i}\partial_{\xi_i}+\frac{\partial \sigma(X)}{\partial \eta_i}\partial_{h^i} - \frac{\partial \sigma(X)}{\partial h^i}\partial_{\eta_i},$$ 
thus $X^*=0$ when $\xi=0,h=0$ which means $X^*=0$ on the conormal $N^*(I)$.
\end{proof}

\begin{lemm}\label{Idlift}
Let $\rho_1,\rho_2$ be two 
Euler vector fields and $\Phi(\lambda)=e^{-\log\lambda \rho_1}\circ e^{\log\lambda \rho_2}$.
Then the cotangent lift $T^* \Phi(\lambda)$
restricted to $N^*(I)$ is the identity map:
$$T^* \Phi(\lambda)|_{N^*(I)}=Id|_{N^*(I)}.$$
\end{lemm}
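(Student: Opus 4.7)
The plan is to prove the lemma by a direct local computation. Since both Euler vector fields vanish identically on $I$ (their standard form (\ref{generalform}) has no term of order zero in $h$), each flow $e^{s\rho_a}$ fixes every point $p\in I$, so $\Phi(\lambda)$ also fixes $p$. Hence the cotangent lift $T^*\Phi(\lambda)$ acts on the fiber $T^*_pM$ via the dual of $d\Phi(\lambda)_p$, explicitly $(\xi,\eta)\longmapsto (\xi,\eta)\circ d\Phi(\lambda)_p$, in accordance with the convention used in the proof of Lemma \ref{stabconormlanding}. Thus the whole statement reduces to computing the Jacobian $d\Phi(\lambda)_p$ in local coordinates $(x,h)$ around $p$ with $I=\{h=0\}$ and observing how it acts on covectors of the form $(0,\eta)\in N^*_p(I)$.

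The next step is to linearize the flow at the fixed point $p$. Writing $\rho_a=h^j\partial_{h^j}+h^iA_{a,i}^j\partial_{x^j}+h^ih^jB^k_{a,ij}\partial_{h^k}$ and computing the Jacobian at $(x,0)$ in the basis $(\partial_x,\partial_h)$ yields the block matrix
\begin{equation*}
d\rho_a|_p=\begin{pmatrix} 0 & A_a \\ 0 & I_d \end{pmatrix},\qquad A_a=\bigl(A_{a,l}^j(x,0)\bigr).
\end{equation*}
A short direct check shows $(d\rho_a|_p)^2=d\rho_a|_p$, so the Jacobian is an idempotent. Since $\rho_a(p)=0$, the differential of the flow satisfies the linear ODE giving $de^{s\rho_a}|_p=\exp(s\,d\rho_a|_p)$, and idempotence reduces this exponential to $I+(e^s-1)d\rho_a|_p$. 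Setting $s=\log\lambda$ and $s=-\log\lambda$ respectively yields
\begin{equation*}
de^{\log\lambda\rho_a}|_p=\begin{pmatrix} I_n & (\lambda-1)A_a \\ 0 & \lambda I_d \end{pmatrix},\qquad de^{-\log\lambda\rho_a}|_p=\begin{pmatrix} I_n & (\lambda^{-1}-1)A_a \\ 0 & \lambda^{-1} I_d \end{pmatrix}.
\end{equation*}

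Composing and simplifying, the diagonal $h$-block of $d\Phi(\lambda)|_p$ becomes $\lambda^{-1}\cdot\lambda=I_d$, and the mixed block equals $(\lambda-1)(A_2-A_1)$, so
\begin{equation*}
d\Phi(\lambda)|_p=\begin{pmatrix} I_n & (\lambda-1)(A_2-A_1) \\ 0 & I_d \end{pmatrix}.
\end{equation*}
A covector in $N^*_p(I)$ is exactly one that vanishes on $T_pI=\mathrm{span}(\partial_{x^i})$, hence of the form $(0,\eta)$; applying the cotangent lift gives
\begin{equation*}
(0,\eta)\circ d\Phi(\lambda)|_p=(0,\eta),
\end{equation*}
which is the claim. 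The whole argument is essentially a block-matrix computation; the only conceptually delicate point is the cancellation $\lambda^{-1}\cdot\lambda=1$ in the transversal block, which encodes the fact that \emph{every} Euler vector field has the same leading transversal dilation $h^j\partial_{h^j}$ — this is precisely what forces the cotangent lift to act as the identity on $N^*(I)$, regardless of the non-commuting lower-order pieces $A_a$ and $B_a$.
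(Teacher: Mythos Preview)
Your proof is correct and takes a genuinely different route from the paper. The paper argues dynamically: it differentiates $\Phi(\lambda)$ to obtain an ODE $\frac{d\Phi}{d\lambda}=X(\lambda)\Phi(\lambda)$ with generator $X(\lambda)=\frac{1}{\lambda}e^{-\log\lambda\rho_1}_*(\rho_2-\rho_1)$, observes that $\rho_2-\rho_1\in F_1$ (it vanishes to second order on $I$), hence so does $X(\lambda)$, and then invokes Lemma~\ref{lemmlift} to conclude that the Hamiltonian lift $X^*(\lambda)$ vanishes on $N^*(I)$; since $T^*\Phi(1)=\mathrm{Id}$, the flow stays the identity there. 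You instead linearize each flow separately at the fixed point $p\in I$, exploit the idempotence of $d\rho_a|_p$ to compute $de^{s\rho_a}|_p$ in closed form, and multiply the two block matrices directly.

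Your approach is more elementary and self-contained: it avoids the symplectic machinery of the cotangent lift of vector fields and the filtration $F_k$, and it makes the mechanism transparent --- the transversal blocks $\lambda^{-1}$ and $\lambda$ cancel because both Euler fields share the same leading part $h^j\partial_{h^j}$, while the off-diagonal block survives but is killed by covectors in $N^*(I)$. The paper's approach, on the other hand, is more conceptual and generalizes more readily: it isolates the relevant algebraic fact (vector fields in $F_1$ have lifts vanishing on $N^*(I)$) as a separate lemma, and it would apply verbatim to any composition whose infinitesimal generator lies in $F_1$, without recomputing Jacobians.
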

\begin{proof}
Let us set
\begin{equation}\label{conjugchapt1}
\Phi(\lambda)=e^{-\log\lambda \rho_1}\circ e^{\log\lambda \rho_2} 
\end{equation}
which is a family of diffeomorphisms
which depends smoothly in $\lambda\in[0,1]$
according to \ref{propositionvariablefamily}.
The proof is similar to the proof of proposition \ref{propositionvariablefamily},
$\Phi(\lambda)$ satisfies the differential equation:
\begin{equation}
\lambda\frac{d\Phi(\lambda)}{d\lambda}=e^{-\log\lambda\rho_1}\left(\rho_2-\rho_1\right)e^{\log\lambda\rho_1}\Phi(\lambda) \text{ where } \Phi(1)=Id
\end{equation}
we reformulated this differential equation as
\begin{equation}
\frac{d\Phi(\lambda)}{d\lambda}=X(\lambda)\Phi(\lambda),\Phi(1)=Id
\end{equation}
where the vector field $X(\lambda)=\frac{1}{\lambda}e^{-\log\lambda\rho_1}\left(\rho_2-\rho_1\right)e^{\log\lambda\rho_1}$ depends smoothly in $\lambda\in [0,1]$.
The cotangent lift $T^* \Phi_\lambda$ satisfies the differential equation
\begin{equation}
\frac{dT^*\Phi(\lambda)}{d\lambda}=X^*(\lambda)T^*\Phi(\lambda),T^* \Phi(1)=Id
\end{equation} 
Notice that $\forall \lambda\in[0,1], X(\lambda)\in F_1$ which implies that for all $\lambda$ the lifted Hamiltonian vector field $X^*(\lambda)$ will vanish on $N^*(I)$ by the lemma (\ref{lemmlift}). Since $T^* \Phi(1)=Id$ obviously fixes the conormal, this immediately implies that $\forall\lambda, T^*\Phi(\lambda)|_{N^*(I)}=Id|_{N^*(I)}$.
\end{proof}

\end{document}